\documentclass[12pt, amsfonts, epsfig, reqno, a4paper]{amsart}
\usepackage[hyperindex,breaklinks]{hyperref}
\usepackage{amsmath}
\usepackage{amsfonts}
\usepackage{mathrsfs}
\usepackage{graphicx}
\usepackage{color}
\usepackage{slashed}
\usepackage{braket}
\usepackage{extarrows}
\usepackage{amsmath, bm}
\usepackage{amssymb}
\usepackage[margin=1in]{geometry}
\usepackage{bbm}
\usepackage{arydshln}
\usepackage{hyperref}

\usepackage{breqn}

\newcommand{\cP}{{\mathcal P}}
\newcommand{\bZ}{{\mathbb Z}}
\newcommand{\Res}{{\mathrm{Res}}}

\newcommand{\half}{\frac12}

\definecolor{yellow}{rgb}{1,.7,0}

\definecolor{pkured}{rgb}{0.55,0,0}

\newcommand{\be}{\begin{equation*}}
	\newcommand{\ee}{\end{equation*}}
\newcommand{\beq}{\begin{equation}}
	\newcommand{\eeq}{\end{equation}}

\numberwithin{equation}{section}

\newtheorem{cor}{Corollary}[section]

\newtheorem{lem}[cor]{Lemma}
\newtheorem{prop}[cor]{Proposition}
\newtheorem{thm}[cor]{Theorem}

\newtheorem{ex}[cor]{Example}
\newtheorem{rmk}[cor]{Remark}

\numberwithin{figure}{section}
\usepackage{ytableau}

\usepackage{tikz}
\newcounter{x}
\newcounter{y}
\newcounter{z}

\author{Chenglang Yang}
\email{yangcl@whu.edu.cn}
\address{Institute for Math and AI, Wuhan University, Wuhan 430072, China}

\title[The $n$-Point Function of $t$-Core Partitions and Topological Vertex]
{The $n$-Point Function of $t$-Core Partitions and Topological Vertex}

\begin{document}
\maketitle

\begin{abstract}
	In this paper,
	we study the $n$-point function of $t$-core partitions.
	The main tool is the topological vertex, originally developed to study the topological string theory for toric Calabi--Yau 3-folds.
	By virtue of the topological vertex,
	we introduce the $q$-deformed $n$-point function
	that generalizes both the ordinary $n$-point function of all integer partitions studied by Bloch--Okounkov and $t$-core partition case treated here.
	As a consequence,
	we provide a closed formula for the $n$-point function of $t$-core partitions in terms of theta functions,
	and prove that the corresponding correlation functions are quasimodular forms.
\end{abstract}

\setcounter{section}{0}
\setcounter{tocdepth}{2}


\section{Introduction}

An integer partition $\lambda=(\lambda_1,\dots,\lambda_l)$ is a sequence of non-increasing positive integers.
This partition $\lambda$ is called a $t$-core partition if none of its hook lengths is a multiple of $t$.
It is known that the set of $t$-core partitions plays an important role in the modular representation theory of the symmetric groups.
Thus the $t$-core partitions have attracted considerable attention from mathematicians in the areas of combinatorics, representation theory, number theory and random partitions (see \cite{And08,CKNS,GKS,Han,JK81,KR14,Lam,WW20} and references therein).
The topological vertex, proposed by physicists \cite{AKMV},
is motivated by knot theory and Chern--Simons theory.
It is the building block for the topological string theory of toric Calabi--Yau 3-folds (see \cite{ADKMV,LLLZ,MOOP}).
In this paper,
we develop a novel approach to study the $n$-point function of $t$-core partitions using the topological vertex.

For the correlation functions of all integer partitions, 
motivated by mirror symmetry of elliptic curves,
they were proved to be quasimodular forms by Dijkgraaf \cite{Dij95} and Kaneko--Zagier \cite{KZ95} for special cases,
and by Bloch--Okounkov \cite{BO} for the general case.
There are some significant new proofs and generalizations of the above results (see \cite{CMZ18,CW04,Eng21,GM20,HIL,Mil03,TW07,W04,Z16,Zhou23}).
It is known that the set of all integer partitions not only labels a basis of the charge zero infinite wedge space,
but also represents all the irreducible representations of the symmetric groups.
Restricting to certain subsets of integer partitions often leads to refined structures that are rooted in deeper aspects of representation theory.
For instance, the set of self-conjugate partitions exhibits additional symmetry,
and its correlation functions were recently studied by Wang and the author using the $\omega$-transform on fermions \cite{WY24}.
As $t$-core partitions arise naturally from both modular representations of symmetric groups and representation theory of affine Lie algebras, a natural question is whether their correlation functions also enjoy the quasimodularity and other properties.
In this paper,
we answer this question by developing a topological vertex approach to the $n$-point functions of $t$-core partitions.
To be precise,
we study the following $n$-point function of $t$-core partitions
\begin{align}\label{eqn:main def F_t}
	F_t(Q;s_1,\cdots,s_n)
	=\frac{1}{\sum_{\nu\in\mathcal{P}_{\text{t-core}}} Q^{|\nu|}}
	\cdot \sum_{\nu\in\mathcal{P}_{\text{t-core}}} \bigg(\prod_{j=1}^n \sum_{i=1}^\infty s_j^{\nu_i-i+\half}
	\cdot Q^{|\nu|}\bigg),
\end{align}
where $\mathcal{P}_{\text{t-core}}$ denotes the set of all $t$-core partitions.
Despite the deceptive simplicity of their definition, the structure of $t$-core partitions is intricate.
As James and Kerber aptly noted in Section 2.7 of \cite{JK81}: "For $q\geq3$, the complete set of $q$-cores is rather complicated to describe".
Thus,
extracting the subset consisting of $t$-core partitions from the whole set of integer partitions is a non-trivial task.
For example,
the generating function for the number of $t$-core partitions,
which serves as the normalization constant in our $n$-point function,
was derived by \cite{GKS,Kly82,Ols} using the celebrated bijection that connects integer partitions and their $t$-cores.
The modularity of the generating function for the number of $t$-core partitions is intensively studied (see \cite{Gar93,GO96,KR14,Kly82}) since it can be expressed in terms of the Dedekind eta function.

We will introduce the $q$-deformed $n$-point function in terms of the topological vertex.
This function not only generalizes the ordinary $n$-point function of all integer partitions studied by Bloch and Okounkov \cite{BO},
but also gives the $n$-point function of $t$-core partitions under certain special limits.
As a consequence,
properties of the $q$-deformed $n$-point function can be used to study the $n$-point function of $t$-core partitions.
More precisely,
denote by $C_{\mu,\nu,\lambda}(q)$ the topological vertex,
which is a rational function in $q^{1/2}$ depending on three partitions $\mu,\nu,\lambda$.
Then the $q$-deformed $n$-point function $Z(Q;Q_1,q;s_1,\dots,s_n)$ is defined by
\begin{align}\label{eqn:def qZn main}
	\begin{split}
		\frac{1}{Z(Q;Q_1,q)}
		\cdot\sum_{\mu,\nu\in\mathcal{P}}
		\bigg((-Q_1)^{|\mu|}
		(-Q)^{|\nu|}
		C_{\emptyset,\mu^t,\nu}(q)
		C_{\emptyset,\mu,\nu^t}(q)
		\cdot \prod_{j=1}^n \sum_{i=1}^\infty s_j^{\nu_i-i+\half}\bigg),
	\end{split}
\end{align}
where $\mathcal{P}$ means the set of all integer partitions and the normalization constant $Z(Q;Q_1,q)$ is the $q$-deformed partition function defined in equation \eqref{eqn:def qZ}.
See subsection \ref{sec:qZn and Ft} for more details.
The first main result in this paper is the following relation between the $q$-deformed $n$-point function $Z(Q;Q_1,q;s_1,\dots,s_n)$ and the $n$-point function of $t$-core partitions $F_t(Q;s_1,\dots,s_n)$.
\begin{thm}[=Theorem \ref{prop:qZn lim Zn}]
	\label{thm:main qZn lim Zn}
	For any integer $t$ which is greater than $1$,
	denote by $\xi_t$ the $t$-th root of unity.
	Then we have
	\begin{align}\label{eqn:qZn lim Zn main}
		F_t(Q;s_1,\dots,s_n)
		=\lim_{q\rightarrow1^-} Z(Q;Q_1,q;s_1,\dots,s_n)|_{Q_1\rightarrow q^t, q\rightarrow \xi_t q}.
	\end{align}
\end{thm}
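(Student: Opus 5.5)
The plan is to first sum out $\mu$ in \eqref{eqn:def qZn main}, so that each summand depends only on $\nu$, and then take the limit term by term. I would use the standard expression of the topological vertex with one empty leg through skew Schur functions,
\begin{align*}
C_{\emptyset,\mu,\nu}(q)=q^{\kappa(\nu)/2}s_{\nu^t}(q^{-\rho})\sum_{\eta}s_{\mu^t/\eta}(q^{-\nu-\rho})s_{\mu/\eta}(q^{-\nu^t-\rho})
\end{align*}
up to the paper's framing conventions. The sum over $\mu$ of $(-Q_1)^{|\mu|}C_{\emptyset,\mu^t,\nu}C_{\emptyset,\mu,\nu^t}$ then factors, via the skew Cauchy identities, into $s_\nu(q^{-\rho})s_{\nu^t}(q^{-\rho})$ times an infinite product. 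Writing out the principal specializations, this product should become
\begin{align*}
\prod_{i,j\ge1}\bigl(1-Q_1q^{\,i+j-1-\nu_i-\nu^t_j}\bigr)\Big/ \prod_{i,j\ge 1}(1-Q_1q^{i+j-1}),
\end{align*}
up to the normalization already in $Z(Q;Q_1,q)$. By the standard telescoping (the Nekrasov-type identity for the $\mathbb C^2\times T^2$ or resolved-conifold geometry), the ratio is a finite product over boxes,
\begin{align*}
\prod_{\square\in\nu}\bigl(1-Q_1q^{h(\square)}\bigr)\bigl(1-Q_1q^{-h(\square)}\bigr).
\end{align*}
The prefactor $s_\nu(q^{-\rho})s_{\nu^t}(q^{-\rho})$ is, by the hook-content formula, $\pm q^{(\cdots)}\prod_{\square}(1-q^{h(\square)})^{-2}$. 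Altogether the weight of $\nu$ is
\begin{align*}
(Q q^{\text{shift}})^{|\nu|}\prod_{\square\in\nu}\frac{(1-Q_1q^{h})(1-Q_1q^{-h})}{(1-q^h)^2}.
\end{align*}
Any stray $q^{\kappa(\nu)}$ factors should cancel between the two vertices because of $\mu^t$ versus $\mu$ and $\nu$ versus $\nu^t$. I would verify this and the signs carefully; this product formula is essentially the first step.

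Next comes the specialization $Q_1=q^t$ and $q\mapsto\xi_tq$. Here $Q_1$ becomes $(\xi_t q)^t\cdot\xi_t^{-t}\cdot$... The intended reading is $Q_1=q^t$ followed by $q\to\xi_t q$, so $Q_1\to q^t$ (since $\xi_t^t=1$) while each $q^h$ becomes $\xi_t^hq^h$. The box factor is then
\begin{align*}
\frac{(1-\xi_t^{h}q^{t+h})(1-\xi_t^{-h}q^{t-h})}{(1-\xi_t^hq^h)^2}.
\end{align*}
I would study $q\to1^-$ box by box. If $t\nmid h$, then $\xi_t^h\ne1$, and the factor tends to $\frac{(1-\xi_t^h)(1-\xi_t^{-h})}{(1-\xi_t^h)^2}=-\xi_t^{-h}$, a root of unity of modulus one. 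If $t\mid h$, the denominator vanishes to order $2$ while the numerator $(1-q^{t+h})(1-q^{t-h})$ vanishes to order $2$ as well, or to order $\ge 3$ when $h=t$, since then $1-q^0=0$. So I expect the factor to tend to the finite value $-(t^2-h^2)/h^2$, which is $0$ exactly when $h=t$. A partition is a $t$-core iff it has no hook of length exactly $t$ (equivalently no hook divisible by $t$). Hence in the limit only $t$-cores survive, and for them every factor is a root of unity. It then remains to show that the product of the signs $\prod_\square(-\xi_t^{-h(\square)})$, combined with the sign $(-1)^{|\nu|}$ from $(-Q)^{|\nu|}$ and the vertex signs, equals $1$. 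For this I would use that $\sum_{\square}h(\square)\equiv$ something computable mod $t$, or pair the boxes via the $t$-quotient/abacus description of a $t$-core, and check it against the normalization $Z(Q;q^t,\xi_tq)\to\sum_{t\text{-core}}Q^{|\nu|}$.

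The main obstacle is this last sign and root-of-unity bookkeeping, together with justifying the interchange of the limit and the infinite sum over $\nu$. For the interchange I would bound the box factors uniformly for $q$ near $1$ and $|Q|$ small, then extend by analyticity in $Q$. The $s_j$-insertions are independent of $q$, so they pass through unchanged. The same computation applied to the normalization $Z(Q;Q_1,q)$ gives the denominator of \eqref{eqn:main def F_t}, which yields the theorem.
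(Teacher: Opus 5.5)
Your route is the paper's: sum out $\mu$ with the dual Cauchy identity, apply the hook-length formula, take the limit box by box, and use Lemma~\ref{lem:t-core h=t}. However, the step you call the main obstacle is a genuine gap, and it comes from an error just before it: the weight of $\nu$ is not $(Qq^{\mathrm{shift}})^{|\nu|}$ times the box product. The hook-length formula gives $s_\nu(q^{\rho})s_{\nu^t}(q^{\rho})=q^{-n(\nu)-n(\nu^t)-|\nu|}\prod_{\square}(1-q^{-h(\square)})^{-2}$. By the identity $\sum_{\square\in\nu}h(\square)=n(\nu)+n(\nu^t)+|\nu|$ (Lemma~\ref{lem:comb lem for nu}), this equals $\prod_{\square}q^{h(\square)}(1-q^{h(\square)})^{-2}$. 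So the $q$-power is $q^{\sum_\square h(\square)}$, which is box-local and not of the form $q^{c|\nu|}$.

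With your form, the leftover factor $\prod_{\square}(-\xi_t^{-h(\square)})$ on a $t$-core would have to be cancelled by a factor of the form $\epsilon^{|\nu|}$ (as $(-1)^{|\nu|}$ and $\xi_t^{c|\nu|}$ are). That is impossible: for $t=3$ the cores $(1)$ and $(2)$ give $-\xi_3^{-1}$ and $\xi_3^{-3}=1$, while $(-\xi_3^{-1})^2\neq 1$. So neither a congruence for $\sum_\square h$ modulo $t$ nor an abacus pairing can finish the argument. Comparing with the normalization does not help either: it only controls the total weight of the cores of each size, not the individual weights that the insertions $\prod_j\sum_i s_j^{\nu_i-i+1/2}$ require.

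The fix is to keep $q^{h}$ inside each box. The per-box weight is $-Q\,q^{h}(1-Q_1q^{h})(1-Q_1q^{-h})/(1-q^{h})^2=QQ_1(1-Q_1q^{h})(1-Q_1^{-1}q^{h})/(1-q^{h})^2$, using $1-Q_1q^{-h}=-Q_1q^{-h}(1-Q_1^{-1}q^{h})$; this is exactly the form the paper works with. After $Q_1=q^t$ and $q\mapsto\xi_tq$, the $\xi_t^{h}$ coming from $q^{h}$ cancels your $-\xi_t^{-h}$ together with the sign of $-Q$. Hence each box with $t\nmid h$ contributes exactly $Q$ in the limit, and no global root-of-unity bookkeeping is needed.

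Minor points:
\begin{itemize}
\item In the paper's conventions $C_{\emptyset,\mu,\nu}(q)=q^{\kappa(\nu)/2}s_{\nu^t}(q^{\rho})s_{\mu}(q^{\nu^t+\rho})$ has no $\eta$-sum, so only the ordinary dual Cauchy identity is needed.
\item In your normalization the $t\mid h$ limit is $(t^2-h^2)/h^2$, not its negative. This is harmless, since only $h=t$ matters.
\item The interchange of limit and sum is immediate if you work coefficientwise in $Q$, because each coefficient is a finite sum over $|\nu|=N$.
\end{itemize}
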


The $q$-deformed $n$-point function is motivated by the works \cite{BO,INRS,NO06,Y25}.
From the physical perspective, the partition function $Z(Q;Q_1,q)$ defined in equation \eqref{eqn:def qZ} can
be viewed as a closed topological string partition function of the local toric Calabi--Yau geometry, which is closely related to the Nekrasov partition function of 5D $\mathcal{N}=1$ gauge theory (see \cite{HIV,LLZ,NO06}).
The specialization used in equation \eqref{eqn:qZn lim Zn main} has appeared in \cite{Kim}.

The above formula \eqref{eqn:qZn lim Zn main} is very helpful to study the $n$-point function of $t$-core partitions.
Actually,
based on the cyclic symmetry of the topological vertex and a version of Wick Theorem,
we are able to obtain a formula for the $q$-deformed $n$-point function $Z(Q;Q_1,q;s_1,\dots,s_n)$,
which is expressed as a certain coefficient of a determinant (see equation \eqref{eqn:qZn as det}).
Then together with equation \eqref{eqn:qZn lim Zn main}, this provides a formula for the $n$-point function of $t$-core partitions $F_t(Q;s_1,\dots,s_n)$.
After further computations,
we can finally derive the following closed formula for $F_t(Q;s_1,\dots,s_n)$ in terms of theta functions.
\begin{thm}[=Theorem \ref{thm:Ft closed formula}]
	\label{thm:main formula}
	The $n$-point function of $t$-core partitions $F_t(Q;s_1,\dots,s_n)$ has the following closed formula
	\begin{align}\label{eqn:main Ft closed formula}
		\begin{split}
			F_t(Q;s_1,\dots,s_n)&
			=\frac{1}
			{\prod_{j=1}^n(s_j^{t/2}-s_j^{-t/2}) \cdot \Theta_3(-Q_2s_{[n]}^{-1})}
			\cdot\sum_{k=1}^{n}
			\bigg(\frac{1}{\Theta_3(-Q_2)^{k-1}}\\
			&\cdot
			\sum_{\{\pi_1,\dots,\pi_k\}\vdash[n]}
			\sum_{l_1,\dots,l_k=1}^t
			\prod_{m=1}^k \frac{\prod_{a=1}^t\vartheta(s_{\pi_m}\xi_t^{a-l_m})}
			{\prod_{a=1 \atop a\neq l_m}^t\vartheta(\xi_t^{a-l_m})}
			\cdot \det(A_{l_1,\dots,l_k;i,j}^{\pi_1,\dots,\pi_k})_{i,j=1}^{k}\bigg),
		\end{split}
	\end{align}
	where $[n]=\{1,2,...,n\}$,
	the second sum runs over all the length $k$ set partitions $\{\pi_1,\dots,\pi_k\}$ of $[n]$,
	$s_{\pi_m}=\prod\limits_{x\in\pi_m} s_{x}$,
	and the entries in the determinant are $A_{l_1,\dots,l_k;i,j}^{\pi_1,\dots,\pi_k}
	=\frac{\Theta_3(-Q_2s_{\pi_{i}}^{-1}\xi_t^{l_{i}-l_{j}})}
	{\vartheta(s_{\pi_{i}}\xi_t^{l_{j}-l_{i}})}$.
\end{thm}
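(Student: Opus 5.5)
Granting Theorem \ref{thm:main qZn lim Zn}, we compute the $q$-deformed $n$-point function $Z(Q;Q_1,q;s_1,\dots,s_n)$ in closed form first, and only then take the limit $Q_1\to q^t$, $q\to\xi_t q$, $q\to1^-$.

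\textbf{Step 1: fermionic form of the numerator.} Use the cyclic symmetry of the topological vertex to write $C_{\emptyset,\mu^t,\nu}(q)$ and $C_{\emptyset,\mu,\nu^t}(q)$ as skew Schur functions evaluated at $q^{-\rho}$. Then sum over $\mu$ by the skew Cauchy identity. The numerator of \eqref{eqn:def qZn main} becomes a vacuum expectation value on the fermionic Fock space of the form
\[
\langle \Gamma_+\cdots\Gamma_-\,(-Q)^{L_0}\,\mathcal{E}(s_1)\cdots\mathcal{E}(s_n)\rangle ,
\]
where $\mathcal{E}(s)=\sum_k s^k{:}\psi_k\psi_k^*{:}$ (up to the constant shift $1/(s^{1/2}-s^{-1/2})$).

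\textbf{Step 2: reduction to a determinant.} Take the trace-like sum over $\nu$, commute the vertex operators $\Gamma_\pm$ through, and apply Wick's theorem in its determinantal form. This should give formula \eqref{eqn:qZn as det}: a coefficient extraction, in auxiliary variables $z_j$, of a $k\times k$ determinant indexed by set partitions $\{\pi_1,\dots,\pi_k\}$ of $[n]$. Each block $\pi_m$ contributes the product $s_{\pi_m}$, since consecutive $\mathcal{E}$'s attached to the same fermion line multiply their parameters. The entries are two-point functions. These are ratios of infinite $q$-products such as $\prod_i(1-Q_1 q^{i}z)/(1-q^i z)$ combined with Jacobi triple product factors in $Q$. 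The prefactor $1/\Theta_3(-Q_2)^{k-1}$ arises from normalizing each connected piece by the charge-zero projection, which the triple product converts into $\Theta_3$; here $Q_2$ will be the appropriate power $Q^{t}$-type variable produced by the specialization.

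\textbf{Step 3: the specialization and limit.} Set $Q_1=q^t$ and replace $q$ by $\xi_t q$. The ratio $\prod_i(1-q^{t+i}\xi_t^{i}z)/(1-q^i\xi_t^{i}z)$ telescopes to a finite product of $t$ factors in $z$. As $q\to1^-$ the coefficient extraction in $z$ becomes a residue computation at the $t$-th roots of unity, $z=\xi_t^{l}$, $l=1,\dots,t$. This is the source of the sums over $l_1,\dots,l_k$ and of the factor $\prod_{a}\vartheta(s_{\pi_m}\xi_t^{a-l_m})/\prod_{a\ne l_m}\vartheta(\xi_t^{a-l_m})$. The remaining $Q$-dependent infinite products survive the limit and assemble into the theta functions $\vartheta$ and $\Theta_3(\cdot)$. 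The factor $\prod_j(s_j^{t/2}-s_j^{-t/2})^{-1}$ is the limit of the $n$ one-body normalizations. Finally, divide by the limit of $Z(Q;Q_1,q)$; the global $\Theta_3(-Q_2s_{[n]}^{-1})$ in the denominator comes from the charge shift by $s_{[n]}$ in the total product.

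\textbf{Main obstacle.} The delicate step is Step 3. Removing $z$ by residues in the $q\to1^-$ limit requires justifying that the limit commutes with the coefficient extraction. One must also check that poles which would collide at roots of unity cancel between the numerator and the normalization $Z(Q;q^t,\xi_t q)$. I would first try rewriting the coefficient extraction as a contour integral on $|z|=1$ before specializing. Then, after specializing, I would deform the contour and pick up exactly the $t$ simple poles at $\xi_t^{l}$, while tracking the phases $\xi_t^{l_i-l_j}$ that enter the determinant entries $A^{\pi_1,\dots,\pi_k}_{l_1,\dots,l_k;i,j}$.
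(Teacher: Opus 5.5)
Your Step 2 is where the argument breaks. Lemma \ref{lem:Wick}, applied to the $n$ insertions $\psi(s_iw_i)\psi^*(w_i)$, yields the $n\times n$ determinant of \eqref{eqn:qZn as det} in the variables $w_1,\dots,w_n$. Its $(i,j)$ entry involves only the single $s_i$. There is no sum over set partitions and no block product $s_{\pi_m}$, and the prefactor carries the same power $\Theta_3(Q_2;Q_1Q)^{n-1}$ throughout.

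Nor is $Q_2$ ``produced by the specialization''. It is a free charge fugacity, inserted as $Q_2^{-C}$ together with the sum over conjugates $R^N(\cdot)R^{-N}$. This turns the charge-zero trace into a full Fock-space trace, which is the setting of Lemma \ref{lem:Wick}, at the cost of the factor $\Theta_3(Q_2\prod_j w_j/z_j;Q_1Q)$. The theorem is an identity valid for every $Q_2$.

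Specializing and letting $q\to1^-$ therefore only brings you to Corollary \ref{cor:Ft as []det}, i.e.\ to $[w_1^0\cdots w_n^0]\,g_0$. Here $g_0$ is $\prod_{j,a}\vartheta(-s_jw_j\xi_t^a)/\vartheta(-w_j\xi_t^a)$ times that $n\times n$ determinant. Everything that distinguishes the theorem still has to be generated: the set partitions, the products $s_{\pi_m}$, the $k$-dependent power $\Theta_3(-Q_2)^{-(k-1)}$, and the prefactor $\prod_j(s_j^{t/2}-s_j^{-t/2})^{-1}$. Your proposal has no mechanism for any of these.

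The missing idea is the iterated evaluation of this constant term, one variable at a time ($w_n$, then $w_{n-1}$, and so on). The engine is quasi-periodicity in $Q$, not the limit $q\to1^-$: one has $g|_{w_j\to Qw_j}=s_j^{-t}g$. Hence $(1-s_j^{-t})[w_j^0]g$ equals the sum of residues of $g/w_j$ in the annulus $Qe^{c_j}<|w_j|<e^{c_j}$. This per-variable factor $(1-s_j^{-t})^{-1}$, combined with $s_{[n]}^{-t/2}$, is the origin of $\prod_j(s_j^{t/2}-s_j^{-t/2})^{-1}$. Note that it lives on the individual $s_j$, uniformly over all set partitions, which your ``same fermion line'' picture would not produce.

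The annulus contains three kinds of poles:
\begin{itemize}
\item at $w_j=-\xi_t^{-l}$, which open a new singleton block with label $l$;
\item at $w_j=s_iw_iQ$ and $w_j=s_j^{-1}w_iQ$ for $i<j$, whose residues cancel pairwise only after extracting $[w_i^0]$ and substituting $w_i\to s_jw_i$ (Lemmas \ref{cor:special res+res=0} and \ref{lem:res+res=0});
\item crucially, at $w_j=-s_{\pi_{m'}}\xi_t^{-l_{m'}}$, poles that appear in the determinant entries only after earlier integration variables have been replaced by $-\xi_t^{-l_{m'}}$.
\end{itemize}
The residue at the last kind merges $j$ into $\pi_{m'}$, replacing $s_{\pi_{m'}}$ by $s_js_{\pi_{m'}}$. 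It also lowers the size of the determinant by one and contributes one factor $\Theta_3(-Q_2)$, so that $\Theta_3(-Q_2)^{n-k}/\Theta_3(-Q_2)^{n-1}$ gives the stated power.

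Your Step 3 sees only the roots-of-unity poles, so at best it would reproduce the all-singleton term $k=n$. The obstacle you flag, commuting the limit with the coefficient extraction, is not where the substantive work lies.
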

Notice that
the $n$-point function of $t$-core partitions $F_t(Q;s_1,\dots,s_n)$ does not depend on the variable $Q_2$,
which shows that the right hand side of equation \eqref{eqn:main Ft closed formula} is independent of $Q_2$ even though this independence is far from obvious.
Consequently,
one can specialize $Q_2$ at special values,
which also give closed formulas for the $n$-point function of $t$-core partitions $F_t(Q;s_1,\dots,s_n)$ like those performed in Corollary \ref{cor:Ft closed formula-Q2}.
As an application,
we can prove the quasimodularity for the correlation functions of $t$-core partitions.
\begin{prop}[=Proposition \ref{prop:quasimod}]
	\label{thm:main quasimod}
	Let $s_j=e^{z_j}, j=1,\dots,n$ and $Q=e^{2\pi i \tau}$.
	We expand the $n$-point function $F_t(Q;s_1,\dots,s_n)$ of the $t$-core partitions by
	\begin{align*}
		F_t(Q;s_1,\dots,s_n)
		=\sum_{l_1,l_2,\dots,l_n=0}^\infty
		\langle f_{l_1}f_{l_2}\dots f_{l_n}\rangle_Q^{\textup{t-core}}
		\cdot \prod_{j=1}^n z_j^{l_j-1}.
	\end{align*}
	Then for any $l_1,\dots,l_n$,
	the correlation function of $t$-core partitions $\langle f_{l_1}f_{l_2}\dots f_{l_n}\rangle_Q^{\textup{t-core}}$ is a quasimodular form of weight at most $\sum_{j=1}^n l_j$ for the congruence subgroup $\Gamma_1(t)$.
\end{prop}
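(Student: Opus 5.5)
The plan is to start from the closed formula of Theorem \ref{thm:main formula}. Since the left-hand side does not depend on $Q_2$, I will first specialize $Q_2$ to a convenient value, as in Corollary \ref{cor:Ft closed formula-Q2}. I expect that specialization to remove the awkward prefactor $1/\Theta_3(-Q_2s_{[n]}^{-1})$, or to reduce it to a power of a modular form in $\tau$ alone. The result should express $F_t$ as a finite sum, over set partitions $\{\pi_1,\dots,\pi_k\}$ and residues $l_1,\dots,l_k\in\{1,\dots,t\}$, of products and quotients of Jacobi-type theta functions. These theta functions take arguments of the form $z_{\pi_m}+2\pi i (a-l)/t$, and the quotients are divided by $\prod_j(s_j^{t/2}-s_j^{-t/2})$.

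With $s_j=e^{z_j}$ and $Q=e^{2\pi i\tau}$, every factor becomes $\vartheta(z+2\pi i\, r/t;\tau)$ with $r\in\mathbb{Z}$. Such functions, at torsion points $r/t$, are Jacobi forms for $\Gamma_1(t)$ of the appropriate index. Note that the $t$-core normalization brings in $\eta(t\tau)^t/\eta(\tau)$ type factors, also $\Gamma_1(t)$-modular. The strategy is then the standard Bloch--Okounkov / Zagier one.

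1. Write each factor as $\vartheta(z+\omega)=\vartheta(\omega)\exp\bigl(\sum_{k\ge1} c_k(\omega;\tau) z^k\bigr)$ for $\omega\neq 0$, and as $\vartheta(z)=z\exp\bigl(-\sum_{k\ge1}\tfrac{2}{(2k)!}G_{2k}(\tau)z^{2k}\bigr)$ for $\omega=0$. Here $c_k(\omega;\tau)$ is essentially a shifted Eisenstein series $G_k$ at the torsion point $\omega=2\pi i r/t$, which is a (quasi)modular form of weight $k$ on $\Gamma_1(t)$. Only $c_2$ and $G_2$ are genuinely quasimodular.
2. The constant factors $\vartheta(\xi_t^{a-l})$ cancel between numerator and denominator. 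The numerator $\prod_a\vartheta(s_\pi\xi_t^{a-l})$ contains exactly one $a=l$ factor, producing the pole $z_{\pi}^{-1}$ that matches the expansion index $l_j-1$. The determinant entries $\Theta_3(\cdots)/\vartheta(\cdots)$ also expand as Laurent series whose coefficients are quasimodular; on the diagonal they behave like $1/z_{\pi_i}$ times modular series.
3. Assign weight $k$ to $z^k$-coefficients. Each term is then homogeneous, of total weight equal to the $z$-degree shifted by the number of poles. Extracting the coefficient of $\prod_j z_j^{l_j-1}$ therefore gives a polynomial in quasimodular forms on $\Gamma_1(t)$ of weight at most $\sum_j l_j$, with mixed weights coming from the denominator expansions.

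The main obstacle is the weight bookkeeping: confirming that no negative-weight or nonholomorphic contributions survive. In particular, the factor $1/(s_j^{t/2}-s_j^{-t/2})$ is not theta-type. I would rewrite it as $\tfrac{1}{t z_j}\cdot(\text{series in } z_j)$ with rational coefficients, which only lowers the weight and explains the ``at most''. I also need the cross terms $\Theta_3(-Q_2 s^{-1}\xi^{l_i-l_j})$ to become theta functions at torsion points after the $Q_2$ specialization, and the overall $Q$-prefactors to be modular of weight $0$ up to the $\eta$ quotients. Holomorphy at cusps follows from the $q$-expansions having nonnegative powers, since $F_t$ is a power series in $Q$ by definition.
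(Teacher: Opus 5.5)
Your plan is essentially the paper's proof. The paper takes the $r=1$ case of Corollary~\ref{cor:Ft closed formula-Q2} ($Q_2=Q^{-1/2}s_1$) and views $F_t$ as a combination of $\vartheta$-ratios at $t$-torsion points, whose logarithms expand in Engel's quasimodular Eisenstein series $E^r_l$ for $\Gamma_1(t)$. It then expands $1/(s_j^{t/2}-s_j^{-t/2})=\frac{1}{tz_j}-\frac{tz_j}{24}+\cdots$ with rational coefficients to get weight at most $\sum_j l_j$, so no $\eta$-quotients (the closed formula is already normalized) or separate cusp argument are needed.

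One bookkeeping correction: the $a=l_m$ factor $\vartheta(s_{\pi_m})$ in the numerator gives a zero $z_{\pi_m}$, not a pole. These $k$ zeros are cancelled by the prefactor $1/\bigl(\vartheta(s_{[n]\setminus\{1\}}^{-1})\vartheta(s_1)^{k-1}\bigr)$, which the specialization turns into $z$-dependent theta functions rather than functions of $\tau$ alone. Hence the theta part has weight exactly equal to its $z$-degree, and all the poles $z_j^{-1}$ come from the $\sinh$ factor.
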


The quasimodularity for correlation functions in various cases is widely studied.
For example,
the foundational works \cite{BO,Dij95,Eng21,EO06,KZ95} prove the quasimodularity for correlation functions of all integer partitions
and the Hurwitz theory of elliptic orbifolds.
See \cite{CMZ18,GM20,HIL,Mil03,TW07,W04,WY24,Z16,Zhou23,Zhu96} and references therein for more details and generalizations.
However,
to the author's knowledge,
there are few cases that admit a direct closed-form formula as in equation \eqref{eqn:main Ft closed formula},
apart from the original one established by Bloch and Okounkov \cite{BO} for the $n$-point function of all integer partitions,
and some interesting generalizations in \cite{CW04,TW07} which rely on the Bloch--Okounkov's result.
We recommend the paper \cite{Z16} by Zagier for a new proof  of the Bloch--Okounkov's formula,
and \cite{Zhou23} by Zhou for a new formula of the Bloch--Okounkov's $n$-point function.

The rest of this paper is organized as follows.
In Section \ref{sec:pre},
we review the concepts of partitions, Schur functions, fermions and topological vertex.
In Section \ref{sec:qZn},
we introduce the $q$-deformed $n$-point function,
provide two formulas to compute it,
and prove Theorem \ref{thm:main qZn lim Zn}.
In Section \ref{sec:for Ft},
we derive the closed formula for the $n$-point function of $t$-core partitions,
which gives Theorem \ref{thm:main formula}.
As an application,
in Section \ref{sec:app}, we prove Proposition \ref{thm:main quasimod},
which provides the quasimodularity for the correlation functions of $t$-core partitions.

\section{Preliminaries}
\label{sec:pre}
In this section,
we review the concepts of integer partitions,
$t$-core partitions,
Schur functions,
fermionic Fock space and topological vertex.

\subsection{The integer partitions}

An integer partition is a sequence of weakly decreasing positive integers $\lambda=(\lambda_1,\dots,\lambda_l)$,
i.e., it satisfies $\lambda_1\geq\lambda_2\geq\dots\geq\lambda_l>0$ and $\lambda_i\in\mathbb{Z}$.
The number $l(\lambda)=l$ is called the length of this partition and $|\lambda|=\sum_{i=1}^l \lambda_i$ is called the size of this partition.
For convenience,
we will use $\emptyset$ or $(0)$ to represent the empty partition, both of whose length and size are equal to zero.
Denote by $\mathcal{P}$ the set of all integer partitions.

It is well-known that a partition $\lambda$ can be recorded as $\lambda=(1^{m_1}2^{m_2}\dots n^{m_n}\dots)$,
where $m_n:=\#\{i|\lambda_i=n\}$ is the number of times that $n$ appears within $\lambda=(\lambda_1,\dots,\lambda_l)$,
where $n$ is a positive integer.
As a consequence,
the generating function for the number of integer partitions is equal to
\begin{align}\label{eqn:size-generating function lambda}
	\sum_{\lambda\in\mathcal{P}} Q^{|\lambda|}
	=\sum_{m_1,m_2,\dots=0}^\infty Q^{\sum_{n=1}^\infty n\cdot m_n}
	=\frac{1}{\prod_{n=1}^\infty (1-Q^n)}.
\end{align}

Every integer partition $\lambda=(\lambda_1,\dots,\lambda_l)$ uniquely corresponds to its Young diagram,
which is a collection of boxes and has $\lambda_i$ boxes in its $i$-th row.
For example,
the left picture in Figure \ref{eqn:Yd 64421} provides the Young diagram of the partition $(6,4,4,2,1)$.
The conjugate of $\lambda$ is a new partition denoted by $\lambda^t$,
whose length is $\lambda_1$
and whose $k$-th part is $\lambda^t_k=\#\{j|\lambda_j\geq k\}, 1\leq k\leq \lambda_1$.
Intuitively,
the Young diagram of $\lambda^t$ is obtained from that of $\lambda$ by reflecting along the main diagonal.
For example, we have $(6,4,4,2,1)^t=(5,4,3,3,1,1)$,
whose Young diagram is the right picture in Figure \ref{eqn:Yd 64421}.
\begin{figure}[htbp]
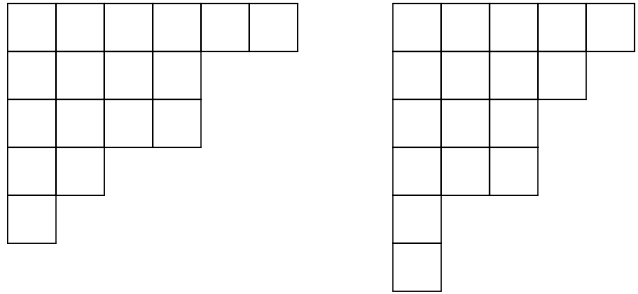

	\ydiagram{6,4,4,2,1}\qquad\quad
	\ydiagram{5,4,3,3,1,1}
	\caption{Young diagrams corresponding to $(6,4,4,2,1)$ and $(5,4,3,3,1,1)$}
	\label{eqn:Yd 64421}
\end{figure}

For any given partition $\lambda$,
there are many important combinatorial numbers coming from its Young diagram,
which is deeply related to representation theory and mathematical physics.
We use the notation $(j,k)\in\lambda$ to represent the $(j,k)$ box in the Young diagram corresponding to $\lambda$,
where $(j,k)$ should satisfy $1\leq j \leq l(\lambda)$, $1\leq k \leq \lambda_j$.
Then the content of this $(j,k)$ box is defined to be $c(j,k)=k-j$,
and the hook length is defined by $h(j,k)=\lambda_j+\lambda^t_k-j-k+1$.

\subsection{The $t$-core partitions}

A partition $\nu\in\mathcal{P}$ is called a $t$-core partition if none of its hook lengths is a multiple of $t$.
We denote by $\mathcal{P}_{\text{t-core}}$ the set of all the $t$-core partitions.
This notion plays an important role in the modular representation theory of symmetric groups and combinatorics (see \cite{And98,CKNS,GKS,Han,JK81,Ols} and references therein for more details).

The generating function for the number of $t$-core partitions is given by (see Proposition 3.3 in \cite{Ols}, Theorem 1.1 in \cite{Kly82}, equation (2.2) in \cite{GKS})
\begin{align}\label{eqn:size-generating function t-core lambda}
	\sum_{\lambda\in\mathcal{P}_{\text{t-core}}} Q^{|\lambda|}
	=\prod_{n=1}^\infty \frac{(1-Q^{nt})^n}{(1-Q^n)}.
\end{align}
We call it the partition function of $t$-core partitions.
For the $t=2$ case,
all the 2-core partitions are given by $(k,k-1,\dots,1), k\in\mathbb{N}$.
Then in this case equation \eqref{eqn:size-generating function t-core lambda} is due to Gauss (see, for example, Section 1 Ex. 91 in \cite{Stan1}).
In general,
it is not an easy task to distinguish $t$-core partitions from the set of all integer partitions (see \cite{JK81}),
so one cannot directly derive equation \eqref{eqn:size-generating function t-core lambda} as easily as we did for equation \eqref{eqn:size-generating function lambda}.
The method used in \cite{GKS,Kly82,Ols} to prove formula \eqref{eqn:size-generating function t-core lambda} all relies on the bijection between integer partitions and their $t$-cores and $t$-quotients. 
In Remark \ref{rmk:new pf Z},
we will give a new proof of formula \eqref{eqn:size-generating function t-core lambda} based on the $q$-deformed partition function.

\begin{rmk}
	For an arbitrary integer partition $\lambda$,
	the $t$-core of $\lambda$ is a new partition $\lambda^{\text{t-core}}$, which can be obtained by deleting all its rim hooks of size $t$ (see, for example Section 2.7 in \cite{JK81}).
	Thus, a general integer partition $\lambda$ is a $t$-core partition if and only if its $t$-core is equal to itself,
	i.e., $\lambda^{\text{t-core}}=\lambda$.
\end{rmk}

To better describe the integer partitions,
one can use the $0/1$-sequence method,
which is also known as the Maya diagram.
More precisely,
for a given partition $\lambda$,
the corresponding $0/1$-sequence is a sequence of $\{0,1\}$ labeled by integers.
We denote it by $s(\lambda)=(s_i)_{i\in\mathbb{Z}}$ and it is constructed by
\begin{align*}
	s_i=\begin{cases}
		0, & \text{if\ }i\in\{\lambda_j-j|j\geq1\},\\
		1, & \text{otherwise},
	\end{cases}
\end{align*}
where we use $\lambda_j=0$ if $j>l(\lambda)$.
Notice that
a $0/1$-sequence $(s_i)_{i\in\mathbb{Z}}$ comes from a partition if and only if $\#\{i<0|s_{i}=1\}
=\#\{i\geq0| s_{i}=0\}.$
For example,
the $0/1$-sequence corresponding to the partition $\lambda=(6,4,4,2,1)$ (see the left picture in Figure \ref{eqn:Yd 64421}) is
\begin{align*}
	s(\lambda)
	=(\cdots s_{-2}s_{-1} | s_0s_1s_2\cdots)
	=(\cdots0010101|10011011\cdots),
\end{align*}
where the vertical bar appears between $s_{-1}$ and $s_0$.
Under this description,
the set of boxes $(i_1,i_2)\in\lambda$ corresponds to the set of pairs $(s_{j_1},s_{j_2})$ such that $j_1<j_2,
s_{j_1}=1, s_{j_2}=0$.
Moreover,
the hook length of the box $(i_1,i_2)$ is equal to $j_2-j_1$,
i.e., we have the following equality between two multisets,
\begin{align*}
	\{h(i_1,i_2)|(i_1,i_2)\in\lambda\}
	=\{j_2-j_1|j_1<j_2,
	s_{j_1}=1, s_{j_2}=0\}.
\end{align*}
This equivalent description is convenient to study the $t$-core partitions.
For example,
the following well-known fact can be easily derived from this viewpoint.
\begin{lem}\label{lem:t-core h=t}
	A partition $\nu$ is a $t$-core partition if and only if
	there is no $(j,k)\in\nu$ such that $h(j,k)=t$.
\end{lem}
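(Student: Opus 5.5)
Throughout I use the $0/1$-sequence description: boxes of $\nu$ are in bijection with pairs $j_1<j_2$ with $s_{j_1}=1$, $s_{j_2}=0$, and that box has hook length $j_2-j_1$. One direction is trivial. If $\nu$ is a $t$-core, no hook length is a multiple of $t$, and in particular none equals $t$.

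For the converse, I argue by contraposition. Suppose some box has hook length $mt$ with $m\geq1$; I will produce a box of hook length exactly $t$. In the $0/1$-sequence this means there exist $j_1<j_2$ with $s_{j_1}=1$, $s_{j_2}=0$ and $j_2-j_1=mt$. Consider the arithmetic progression
\[
j_1,\; j_1+t,\; j_1+2t,\;\dots,\; j_1+mt=j_2 .
\]
The values $s_{j_1+kt}$ for $k=0,\dots,m$ start at $1$ and end at $0$. Let $k$ be the last index in $\{0,\dots,m-1\}$ with $s_{j_1+kt}=1$; it exists since $s_{j_1}=1$. By maximality of $k$ and $s_{j_2}=0$, we get $s_{j_1+(k+1)t}=0$.

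Hence the pair $(j_1+kt,\; j_1+(k+1)t)$ satisfies $s=1$ at the left position and $s=0$ at the right one, with difference exactly $t$. By the hook-length correspondence, this gives a box $(j,k')\in\nu$ with $h(j,k')=t$, which contradicts the hypothesis. Therefore no hook length is divisible by $t$, and $\nu$ is a $t$-core.

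The only point needing care is the hook-length correspondence between boxes and $(1,0)$-pairs in the $0/1$-sequence, and the paper states it just before the lemma, so I take it as given. The rest is a discrete intermediate-value argument along a residue class mod $t$.
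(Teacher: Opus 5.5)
Your proof is correct and matches the paper's: argue by contraposition in the $0/1$-sequence, and walk along the progression $j_1, j_1+t, \dots, j_2$ to find an adjacent $(1,0)$ pair at distance exactly $t$. Taking the last index with value $1$ spells out the existence step that the paper simply asserts.
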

\begin{proof}
	The "if" part is trivial.
	We prove the "only if" part below.
	Assume that $\nu$ is not a $t$-core partition,
	then there exists at least one box $(i_1,i_2)\in\nu$ such that $h(i_1,i_2)$ is a multiple of $t$.
	As a consequence,
	if we denote the $0/1$-sequence corresponding to $\nu$ by $s(\nu)=(s_i)_{i\in\mathbb{Z}}$,
	then there exists a pair $(j_1,j_2)$ such that $j_1<j_2,
	s_{j_1}=1, s_{j_2}=0$,
	and in particular $j_2=j_1+Nt$ for some positive integer $N$.
	Consider the sequence
	\begin{align*}
		(s_{j_1}=1, s_{j_1+t}, s_{j_1+2t},\dots, s_{j_1+(N-1)t}, s_{j_2}=0).
	\end{align*}
	Then there must be a natural number $a<N$ such that
	$s_{j_1+at}=1, s_{j_1+(a+1)t}=0$.
	Consequently, the hook length of the box corresponding to the pair $(j_1+at,j_1+(a+1)t)$ is exactly equal to $t$.
\end{proof}

\subsection{Schur functions}
In this subsection,
we review the Schur functions and a special evaluation of them.
We mainly follow the notations in \cite{Mac}.

Denote by $\Lambda_{\mathbb{Q}}$ the ring of formal symmetric functions in infinitely many variables $({\bm x})=(x_1,x_2,\dots)$.
Then the complete symmetric functions $h_r({\bm x})\in\Lambda_{\mathbb{Q}}$ are defined by
\begin{align*}
	\sum_{r=0}^\infty h_r({\bm x}) \cdot z^r
	=\prod_{i=1}^\infty \frac{1}{1-x_i \cdot z}.
\end{align*}
The set $\{h_r({\bm x})\}_{r=1}^\infty$ forms an algebraic basis of the ring $\Lambda_{\mathbb{Q}}$.
The Schur functions $s_{\lambda}({\bm x})$ are symmetric functions labeled by integer partitions $\lambda\in\mathcal{P}$.
They have many equivalent definitions.
By virtue of the complete symmetric functions,
they can be defined by
\begin{align*}
	s_{\lambda}({\bm x})
	=\det\big(h_{\lambda_i-i+j}({\bm x})\big)_{i,j=1}^{l(\lambda)}.
\end{align*}
It is well-known that the set $\{s_\lambda({\bm x})\}_{\lambda\in\mathcal{P}}$ forms a linear basis of the space $\Lambda_{\mathbb{Q}}$.

The Schur functions play an important role in representation theory, combinatorics, geometry, mathematical physics, etc.
Special evaluations of Schur functions are also very useful.
For example,
$s_{\lambda}(1^m,0,.\dots)$ are related to dimensions of irreducible representations of the general linear group $\text{GL}(m,\mathbb{C})$.
In this paper,
we will need the following special evaluation (see I.3 Example 1 in \cite{Mac}),
\begin{align}\label{eqn:schur q^rho}
	s_{\lambda}(q^{\rho})
	=s_{\lambda}(q^{-1/2}, q^{-3/2}, q^{-5/2}, \dots)
	=q^{-n(\lambda)-|\lambda|/2}\prod_{(i,j)\in\lambda} \frac{1}{1-q^{-h(i,j)}},
\end{align}
where $n(\lambda)=\sum_{i=1}^{l(\lambda)}(i-1)\lambda_i$,
$\rho=(-1/2,-3/2,-5/2,\dots)$,
so the notation $s_{\lambda}(q^{\rho})$ means the evaluation of $s_{\lambda}({\bm x})$ at the point $x_1=q^{-1/2}, x_2=q^{-3/2}, x_3=q^{-5/2}, \dots$.
This special evaluation is related to the quantum dimension and is widely used in the topological string theory of toric Calabi--Yau 3-fold geometry.
Notice that even though the Schur function $s_{\lambda}({\bm x})$ is a formal function in infinitely many variables,
the result of this special evaluation performed in equation \eqref{eqn:schur q^rho} is a rational function in $q^{1/2}$.

The skew Schur functions are generalizations of Schur functions.
Each of them is still a symmetric function but depends on two partitions.
To be precise,
for any two partitions $\lambda,\mu\in\mathcal{P}$,
the corresponding skew Schur functions are denoted by $s_{\lambda/\mu}({\bm x})$.
Then it is defined by
\begin{align*}
	s_{\lambda/\mu}({\bm x})
	=\sum_{\nu\in\mathcal{P}} c^{\lambda}_{\mu \nu} s_{\nu}({\bm x}),
\end{align*}
where the Littlewood--Richardson coefficients $\{c^{\lambda}_{\mu \nu}\}$ are given by $s_{\mu}({\bm x}) s_{\nu}({\bm x})=\sum_{\lambda} c^{\lambda}_{\mu \nu} s_{\lambda}({\bm x})$.
Moreover,
when $\mu=\emptyset$,
one has $s_{\lambda/\emptyset}({\bm x})=s_{\lambda}({\bm x})$.

\subsection{Fermions and fermionic Fock space}
We review the construction of the fermionic Fock space and fermionic operators.
We mainly follow the notation in \cite{DJM}.
Let $\bm{a}=(a_1,a_2,\dots)$ be a decreasing sequence of half-integers $a_i\in\mathbb{Z}+\half$.
It is called admissible if both of the following sets
\begin{align}\label{eqn:def S+S-}
	S_-=(\mathbb{Z}_{\leq 0}+\half)\setminus\{a_i|i\in\mathbb{Z}_{\geq1}\},
	\qquad\text{and}
	\qquad S_+=\{a_i|i\in\mathbb{Z}_{\geq1}\}\setminus(\mathbb{Z}_{\leq 0}-\half)
\end{align}
are finite.
For each admissible sequence $\bm{a}=(a_1,a_2,\dots)$,
we assign a vector
\begin{align*}
	|\bm{a}\rangle
	=\underline{a_1}\wedge\underline{a_2}\wedge\underline{a_3}\wedge\cdots.
\end{align*}
In particular,
the vector associated with $(-\half,-\frac{3}{2},-\frac{5}{2}, -\frac{7}{2},\dots)$ is called the vacuum,
which is denoted by $|0\rangle=\underline{-\frac{1}{2}}\wedge\underline{-\frac{3}{2}}\wedge\underline{-\frac{5}{2}}\wedge\cdots$.
Then the fermionic Fock space $\mathcal{F}$ is the linear space generated by all those vectors $|\bm{a}\rangle$ over $\mathbb{C}$.
More precisely,
each element in $\mathcal{F}$ is of the following form
\begin{align*}
	\sum_{\bm{a}\ \text{admissible}}
	c_{\bm{a}} \cdot |\bm{a}\rangle,
\end{align*}
where the summation is taken over all admissible sequences $\bm{a}$ and it could be an infinite sum.
There is a standard inner product on the fermionic Fock space $\mathcal{F}$ such that $\{|\bm{a}\rangle\}$ is an orthonormal basis.
This inner product is denoted by $(\cdot,\cdot)$.

The fermions are two families of operators acting on the fermionic Fock space $\mathcal{F}$.
They are denoted by $\{\psi_k\}_{k\in\mathbb{Z}+\half}$ and $\{\psi^*_k\}_{k\in\mathbb{Z}+\half}$.
The action of $\psi_k$ on a vector $|\mathbf{a}\rangle$ is given by
\begin{align*}
	\psi_k \cdot |\bm{a}\rangle = \underline{k}\wedge |\bm{a}\rangle.
\end{align*}
The $\psi_k^*$ is the adjoint operator of $\psi_k$,
whose action on $|\bm{a}\rangle$ is given by
\begin{align*}
	\psi^*_k \cdot |\bm{a}\rangle
	=\begin{cases}
		(-1)^{i+1} \underline{a_1}\wedge\underline{a_2}\wedge\cdots\wedge\widehat{\underline{a_i}}\wedge\cdots, & \text{\ if\ }a_i=k\text{\ for\ some\ }i;\\
		0, &\text{otherwise}.
	\end{cases}
\end{align*}
These two families of operators satisfy the following anti-commutation relations
\begin{align}\label{eqn:anti comm}
	[\psi_{k_1},\psi_{k_2}]_+=0,
	\qquad [\psi^*_{k_1},\psi^*_{k_2}]_+=0,
	\qquad [\psi_{k_1},\psi^*_{k_2}]_+=\delta_{k_1,k_2}\cdot \text{id},
\end{align}
where the bracket is defined by $[A,B]_+=AB+BA$.
There is a natural grading on the fermionic Fock space $\mathcal{F}$ which is called the charge.
For any $|\bm{a}\rangle$,
its charge is
\begin{align*}
	\text{charge}(|\bm{a}\rangle)=(|S_+|-|S_-|) \cdot |\bm{a}\rangle,
\end{align*}
where the two sets $S_+$ and $S_-$ are defined in equation \eqref{eqn:def S+S-}.
The operator $\psi_k$ has charge $+1$ and $\psi^*_k$ has charge $-1$.
We denote by $\mathcal{F}^{(0)}$ the charge zero fermionic Fock space.

We use $\mathcal{F}^*$ to represent the dual space of $\mathcal{F}$.
For a vector $|v\rangle\in\mathcal{F}$,
we use $\langle v|$ to represent its dual vector in $\mathcal{F}$.
Then the vacuum expectation value is given by
\begin{align*}
	\langle v_1| A |v_2\rangle
	:=(|v_1\rangle, A |v_2\rangle)
	=(A^* |v_1\rangle, |v_2\rangle),
\end{align*}
where $A$ is a linear operator acting on the fermionic Fock space and $A^*$ is its adjoint.

For each partition $\lambda=(\lambda_1,\lambda_2,\dots)$,
we can define an admissible sequence by
$(\lambda_1-\half, \lambda_2-\frac{3}{2},\dots,\lambda_i-i+\frac{1}{2},\dots),$
where we set $\lambda_i=0$ if $i>l(\lambda)$.
Thus the partition $\lambda$ corresponds to a vector in the fermionic Fock space which is denoted as
\begin{align}
	|\lambda\rangle
	=\underline{\lambda_1-\half}\wedge\underline{\lambda_2-\frac{3}{2}}\wedge\cdots.
\end{align}
Notice that the empty partition $\emptyset$ exactly corresponds to the vacuum vector $|0\rangle$. 
One can easily prove that $\{|\lambda\rangle\}_{\lambda\in\mathcal{P}}$ forms a basis of the charge zero fermionic Fock space $\mathcal{F}^{(0)}$.
The action of fermions $\psi_k$ and $\psi^*_k$ on $|\lambda\rangle$ is direct.
The consequence is
\begin{align}\label{eqn:psipsi* action}
	\psi_k\psi^*_k |\lambda\rangle
	=\begin{cases}
		|\lambda\rangle, &\text{if}\ k=\lambda_i-i+\half \text{\ for\ some}\ i,\\
		0, &\text{otherwise}.
	\end{cases}
\end{align}

To study the product of several fermionic operators,
one always uses the normal ordering product.
In particular,
for the special case of product of two operators,
we have
\begin{align*}
	:\psi_{k_1}\psi_{k_2}:
	=\psi_{k_1}\psi_{k_2},
	\ \ \ :\psi^*_{k_1}\psi^*_{k_2}:
	=\psi^*_{k_1}\psi^*_{k_2},
\end{align*}
for any $k_1, k_2$ and
\begin{align*}
	:\psi_{a}\psi^*_{b}:
	=-:\psi^*_{b}\psi_{a}:
	=\begin{cases}
		\psi_{a}\psi^*_{b}, & \text{if\ }b>0,\\
		-\psi^*_{b}\psi_{a}, & \text{otherwise}.
	\end{cases}
\end{align*}
One can verify $:\phi_{k_1}\phi_{k_2}:=\phi_{k_1}\phi_{k_2}-\langle0|\phi_{k_1}\phi_{k_2}|0\rangle$,
where $\phi_k$ denotes a fermion, which can be either $\psi_k$ or $\psi^*_k$.
The fermionic fields are generating functions of fermions as
\begin{align*}
	\psi(z)=
	\sum_{k\in\mathbb{Z}+\half} \psi_k z^{k},
	\qquad\psi^*(w)=
	\sum_{k\in\mathbb{Z}+\half} \psi^*_k w^{-k}.
\end{align*}
The anti-commutation relations \eqref{eqn:anti comm} are equivalent to
\begin{align*}
	\psi(z)\psi(w)
	=-\psi(w)\psi(z),
	\  \ \ \psi^*(z)\psi^*(w)
	=-\psi^*(w)\psi^*(z),
\end{align*}
and
\begin{align*}
	\psi(z)\psi^*(w)
	=-\psi^*(w)\psi(z)
	+\sum_{k\in\mathbb{Z}+\half} z^k w^{-k}.
\end{align*}
Then,
the normal ordering product of $\psi(z)$ and $\psi^*(w)$ can be represented by the following
\begin{align*}
	:\psi(z)\psi^*(w):
	=\psi(z)\psi^*(w)-\sum_{k\in\mathbb{Z}_{\leq0}+\half} z^k w^{-k}.
\end{align*}

The translation operator $R$ is defined by
\begin{align}\label{eqn:def R}
	R\ \underline{a_1}\wedge\underline{a_2}\wedge\underline{a_3}\wedge\dots
	=\underline{a_1+1}\wedge\underline{a_2+1}\wedge\underline{a_3+1}\wedge\cdots.
\end{align}
Then $R$ is of charge 1 as an operator.
From the definitions of $\psi_k$ and $\psi^*_k$,
we have $R \psi_k R^{-1}=\psi_{k+1}$ and $R \psi^*_k R^{-1}=\psi^*_{k+1}$.
Correspondingly,
at the level of fermionic fields,
we have
\begin{align*}
	R^N \psi(z) R^{-N}=z^{-N}\psi(z)
	\qquad\text{and}\qquad
	R^N \psi^*(z) R^{-N}=z^N \psi^*(z)
\end{align*}
for any integer $N$.

\subsection{Vertex operators}
In this subsection,
we introduce the vertex operators.
Let us start with the bosons,
which are some operators acting on the fermionic Fock space $\mathcal{F}$ defined by
\begin{align}\label{eqn:def alphan}
	\alpha_n = \sum_{k\in\mathbb{Z}+\half} :\psi_{k} \psi^*_{k+n}:,
	\ n\in\mathbb{Z}.
\end{align}
They satisfy the following commutation relation
\begin{align*}
	[\alpha_m, \alpha_n]=m\delta_{m,-n} \cdot \text{Id}.
\end{align*}
Notice that the dual operator of $\alpha_n$ is $\alpha_{-n}$. 
Then the vertex operators are generating functions of these bosons defined by
\begin{align*}
	\Gamma_\pm(z)
	=\exp\Big(\sum_{b=1}^\infty \frac{z^b}{b} \alpha_{\pm b}\Big).
\end{align*}
The vertex operators satisfy the following commutation relations
\begin{align}\label{eqn:comm Gammapm}
	[\Gamma_\pm(z),\Gamma_{\pm}(w)]=0,
	\qquad [\Gamma_\pm(z),\Gamma_{\mp}(w)]=\frac{1}{(1-zw)^{\pm1}}
\end{align}
with the assumption $|zw|<1$.
For infinitely many variables $\bm x=(x_1,x_2,\dots)$ and $\bm y=(y_1,y_2,\dots)$,
we usually use the following notation
\begin{align}
	\Gamma_\pm(\bm x)
	=\prod_{i=1}^\infty \Gamma_\pm(x_i).
\end{align}
Then the commutation relation \eqref{eqn:comm Gammapm} gives
\begin{align}\label{eqn:comm Gammapminf}
	[\Gamma_{\pm}(\bm x),\Gamma_{\pm}(\bm y)]=0,
	\qquad [\Gamma_\pm(\bm x),\Gamma_{\mp}(\bm y)] =\prod_{j,k=1}^\infty \frac{1}{(1-x_jy_k)^{\pm1}}.
\end{align}

Equation \eqref{eqn:def alphan} tells us how to construct the bosons from fermions.
In the opposite direction,
one can reconstruct fermionic fields by the following formulas (see, for examples, Section 5 in \cite{DJM} and Section 14 in \cite{Kac})
\begin{align}\label{eqn:psi as gamma}
	\psi(z)
	=z^{C-\half}
	R \Gamma_-(z) \Gamma_+(z^{-1})^{-1}
	\quad\text{\ and\ }\quad
	\psi^*(z)
	=R^{-1}z^{-C+\half}
	\Gamma_-(z)^{-1} \Gamma_+(z^{-1}).
\end{align}
The famous boson-fermion correspondence deeply relies on this construction and reconstruction.
Using the vertex operators and vacuum expectation value,
it not only establishes a relation between the fermionic Fock space and the ring of formal symmetric functions,
but also tells us how to compute the actions of fermions and bosons.
In particular,
the vector $|\lambda\rangle\in\mathcal{F}^{(0)}$ corresponds to the Schur function $s_{\lambda}(\bm x)$.
To be precise,
we have
\begin{align}
	s_\lambda(\bm x)
	=\langle0|\Gamma_+(\bm x)|\lambda\rangle
	=\langle\lambda|\Gamma_-(\bm x)|0\rangle.
\end{align}
Moreover,
the skew Schur functions can also be represented in terms of vacuum expectation value as the following
\begin{align}\label{eqn:skew as vev}
	s_{\lambda/\mu}(\bm x)
	=\langle\mu|\Gamma_+(\bm x)|\lambda\rangle
	=\langle\lambda|\Gamma_-(\bm x)|\mu\rangle.
\end{align}

In this paper,
we use two techniques to obtain identities of vacuum expectation values involving vertex operators.
First,
by taking dual of vertex operators and vectors in the fermionic Fock space,
we have
\begin{align}\label{eqn:taking dual}
	\langle\mu|\Gamma_-(\bm{x}) \Gamma_+(\bm{y})|\nu\rangle
	=\langle\nu|\Gamma_-(\bm{y}) \Gamma_+(\bm{x})|\mu\rangle.
\end{align}
Second,
it was Wang and the author who introduced the $\omega$-transform on the fermionic Fock space and fermions (see \cite{WY24,Y25}),
which is a generalization of the usual $\omega$-transform on the ring of symmetric functions (see Section I.2 in \cite{Mac}).
In particular,
we have
\begin{align}\label{eqn:taking w}
	\langle\mu|\Gamma_-(\bm{x}) \Gamma_+(\bm{y})|\nu\rangle
	=\langle\mu^t|\Gamma_-(-\bm{x})^{-1} \Gamma_+(-\bm{y})^{-1}|\nu^t\rangle
\end{align}

There are two basic operators on the fermionic Fock space. 
The first is $C=\alpha_0= \sum_{k\in\mathbb{Z}+\half} :\psi_{k} \psi^*_{k}:$.
It is called the charge operator and satisfies
\begin{align*}
	C\ |\bm{a}\rangle
	=(|S_+|-|S_-|) \cdot |\bm{a}\rangle
	=\text{charge}(|\bm{a}\rangle) \cdot |\bm{a}\rangle.
\end{align*}
The second operator is
$L_0= \sum_{k\in\mathbb{Z}+\half} k :\psi_{k} \psi^*_{k}:$.
It is called the energy operator since
\begin{align}\label{eqn:action L0}
	L_0\ |\lambda\rangle
	=|\lambda| \cdot |\lambda\rangle.
\end{align}
By definition of the translation operator $R$ and direct computations,
for any integer $N$,
we have the following conjugate relations,
\begin{align}\label{eqn:comm RCH}
	R^{N} C R^{-N}=C-N,
	\ \ \ 
	R^{N} L_0 R^{-N}=L_0 -N C +N^2/2
\end{align}
and
\begin{align}\label{eqn:comm RGamma}
	R^{N} \Gamma_\pm(z) R^{-N}
	=\Gamma_\pm(z).
\end{align}

\subsection{Topological vertex}
\label{sec:top ver}
In this subsection,
we review the topological vertex and its properties.
It is one of the main tools used in this paper to study the $t$-core partitions.

Motivated by the duality between topological string theory and gauge theory,
Aganagic, Klemm, Mari\~{n}o and Vafa introduced the topological vertex to capture the A-model topological string amplitudes of smooth toric Calabi--Yau 3-folds.
The combinatorial expression of the topological vertex is given by
\begin{align}\label{eqn:def C}
	C_{\lambda,\mu,\nu}(q)
	= q^{\kappa(\lambda)/2+\kappa(\nu)/2}
	\cdot s_{\nu^t} (q^{\rho})
	\cdot \sum_{\eta\in\mathcal{P}} s_{\lambda^t /\eta} (q^{\nu + \rho})
	s_{\mu /\eta} (q^{\nu^t + \rho}),
\end{align}
where the notations have been reviewed in previous subsections.
This expression for the topological vertex comes from the colored HOMFLY-PT polynomials from knot theory and Chern--Simons theory \cite{ADKMV,AKMV}.

The mathematical theory of the topological vertex was established in terms of relative Gromov--Witten theory \cite{LLLZ},
and Donaldson--Thomas theory \cite{MOOP}.
In principle,
the topological vertex $C_{\lambda,\mu,\nu}(q)$ is a generating function of open Gromov--Witten invariants of $\mathbb{C}^3$.
Together with the gluing rules \cite{ADKMV,AKMV,LLLZ,MOOP},
it can be used to compute all (open) Gromov--Witten invariants of general smooth toric Calabi--Yau 3-folds.

The topological vertex has been intensively studied in the literature.
See \cite{ADKMV,AKMV,INRS,LLLZ,MOOP,ORV,WYZ24} and references therein.
It has beautiful combinatorial structures and possesses deep connections with many other fields in mathematics.
For example,
the topological vertex has the following symmetry of rotation,
\begin{align}\label{eqn:rotation symm}
	C_{\lambda,\mu,\nu}(q)
	=C_{\mu,\nu,\lambda}(q)
	=C_{\nu,\lambda,\mu}(q),
\end{align}
and symmetry of mirror,
\begin{align}
	q^{\|\lambda^t\|^2/2+\|\mu^t\|^2/2+\|\nu^t\|^2/2}
	C_{\lambda,\mu,\nu}(q)
	=q^{\|\lambda\|^2/2+\|\mu\|^2/2+\|\nu\|^2/2}
	C_{\mu^t,\lambda^t,\nu^t}(q),
\end{align}
where $\|\lambda\|=\sum_{i=1}^{l(\lambda)} \lambda_i^2$.
Moreover,
there are at least two methods to represent the topological vertex as the vacuum expectation value.
The first method is given by Okounkov, Reshetikhin and Vafa \cite{ORV}.
They first express the generating function of certain plane partitions as a vacuum expectation value involving vertex operators (see their equation (3.17)),
and then they show the equivalence between this generating function and topological vertex up to a MacMahon factor (see their equation (3.21)).
The second method, proposed by Wang, Zhou and the author,
expresses the topological vertex as a vacuum expectation value involving not only vertex operators but also fermionic fields (see Definition 4.2 and Theorem 4.2 in \cite{WYZ24}) without any additional factor.
This is the main tool used in \cite{WYZ24,WYZ25} to prove the multi-component KP integrability of open Gromov--Witten theory of general smooth toric Calabi--Yau 3-folds.
In this paper,
we only need a special case of the above mentioned vacuum expectation formula as
\begin{align}\label{eqn:clm0=<>}
	C_{\lambda,\mu,\emptyset}(q)
	=q^{\kappa(\lambda)/2}
	\cdot\sum_{\eta\in\mathcal{P}} s_{\lambda^t /\eta} (q^{\rho})
	s_{\mu /\eta} (q^{\rho})
	=q^{\kappa(\lambda)/2}
	\langle\lambda^t|
	\Gamma_-(q^\rho)
	\Gamma_+(q^\rho)|\mu\rangle.
\end{align}
The first equal sign comes from the combinatorial expression \eqref{eqn:def C} of the topological vertex,
and the second equal sign can be directly obtained from the property of skew Schur functions performed in equation \eqref{eqn:skew as vev}.

\section{The $q$-deformed $n$-point function and its relation with $t$-core partitions}
\label{sec:qZn}
In this section,
we introduce the $q$-deformed $n$-point function and establish its relation with the $n$-point function of $t$-core partitions.
The main tools are the topological vertex and free fermions.
This function is not only a generalization of the ordinary $n$-point function of all integer partitions studied by Bloch and Okounkov \cite{BO} (see Proposition \ref{prop:qZn lim F}),
but also a generalization of the $n$-point function of $t$-core partitions studied in this paper (see Theorem \ref{prop:qZn lim Zn}).

\subsection{The $q$-deformed $n$-point function}
\label{sec:qZn and Ft}
In this subsection,
we use the topological vertex to define the $q$-deformed $n$-point function,
which is denoted by $Z(Q;Q_1,q;s_1,\dots,s_n)$
and depends on two extra variables $Q_1, q$ compared to the $n$-point function of $t$-core partitions $F_t(Q;s_1,\dots,s_n)$.
We first consider the $q$-deformed partition function defined as
\begin{align}\label{eqn:def qZ}
	Z(Q;Q_1,q):=
	\sum_{\mu,\nu\in\mathcal{P}}
	(-Q_1)^{|\mu|}
	(-Q)^{|\nu|}
	C_{\emptyset,\mu^t,\nu}(q)
	C_{\emptyset,\mu,\nu^t}(q),
\end{align}
where $C_{\emptyset,\mu,\nu}(q)$ is a special case of the topological vertex introduced in subsection \ref{sec:top ver}.
As we will show in equation \eqref{eqn:t-core=Zq lim},
the $q$-deformed partition function $Z(Q;Q_1,q)$ can recover the ordinary generating function of $t$-core partitions by taking certain limit,
which is our motivation to consider the $q$-deformed $n$-point function.
We define it by
\begin{align}\label{eqn:def qZn}
	\begin{split}
	Z(Q;Q_1,&q;s_1,\dots,s_n)\\
	:=&\frac{1}{Z(Q;Q_1,q)}
	\cdot\sum_{\mu,\nu\in\mathcal{P}}
	\bigg((-Q_1)^{|\mu|}
	(-Q)^{|\nu|}
	C_{\emptyset,\mu^t,\nu}(q)
	C_{\emptyset,\mu,\nu^t}(q)
	\cdot \prod_{j=1}^n \sum_{i=1}^\infty s_j^{\nu_i-i+\half}\bigg).
	\end{split}
\end{align}

The main result of this subsection is the following relation between the $n$-point function of $t$-core partitions and the $q$-deformed $n$-point function.
\begin{thm}\label{prop:qZn lim Zn}
	For any integer $t$ which is greater than $1$,
	denote by $\xi_t$ the $t$-th root of unity.
	Then we have
	\begin{align}\label{eqn:qZn lim Zn}
		F_t(Q;s_1,\dots,s_n)
		=\lim_{q\rightarrow1^-} Z(Q;q^t,\xi_t q;s_1,\dots,s_n),
	\end{align}
	where we recall that
	$F_t(s_1,\dots,s_n)$ is the $n$-point function of $t$-core partitions as
	\begin{align}\label{eqn:def F_t}
		F_t(Q;s_1,\cdots,s_n)
		=\frac{1}{\sum\limits_{\nu\in\mathcal{P}_{\text{t-core}}} Q^{|\nu|}}
		\cdot \sum_{\nu\in\mathcal{P}_{\text{t-core}}} \bigg(\prod_{j=1}^n \sum_{i=1}^\infty s_j^{\nu_i-i+\half}
		\cdot Q^{|\nu|}\bigg).
	\end{align}
\end{thm}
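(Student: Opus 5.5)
The plan is to compute the summand of \eqref{eqn:def qZn} in closed form for each fixed $\nu$. The sum over $\mu$ should factor into a $\nu$-independent constant times an explicit product over the boxes of $\nu$. That constant cancels against the normalization $Z(Q;Q_1,q)$. Taking the specialization and limit box by box then leaves the indicator function of $t$-cores.

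\textbf{Step 1: rewrite the vertices as vacuum expectation values.} By the rotation symmetry \eqref{eqn:rotation symm}, $C_{\emptyset,\mu^t,\nu}(q)=C_{\mu^t,\nu,\emptyset}(q)$ and $C_{\emptyset,\mu,\nu^t}(q)=C_{\mu,\nu^t,\emptyset}(q)$. Then \eqref{eqn:clm0=<>} gives
\[
q^{\kappa(\mu^t)/2}\langle\mu|\Gamma_-(q^\rho)\Gamma_+(q^\rho)|\nu\rangle
\quad\text{and}\quad
q^{\kappa(\mu)/2}\langle\mu^t|\Gamma_-(q^\rho)\Gamma_+(q^\rho)|\nu^t\rangle .
\]
Since $\kappa(\mu^t)=-\kappa(\mu)$, the $q$-powers cancel.

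\textbf{Step 2: sum over $\mu$.} I would apply the $\omega$-transform \eqref{eqn:taking w} to the second factor, turning it into $\langle\mu|\Gamma_-(-q^\rho)^{-1}\Gamma_+(-q^\rho)^{-1}|\nu\rangle$. I would then use \eqref{eqn:taking dual} to write the first factor as $\langle\nu|\Gamma_-(q^\rho)\Gamma_+(q^\rho)|\mu\rangle$. Inserting $(-Q_1)^{L_0}$ and $\sum_\mu|\mu\rangle\langle\mu|$ turns the $\mu$-sum into one vacuum expectation value between $\langle\nu|$ and $|\nu\rangle$. Commuting the $\Gamma_+$'s to the right and the $\Gamma_-$'s to the left with \eqref{eqn:comm Gammapminf} produces a $\nu$-independent factor $\prod_{k}(1-Q_1q^k)^{\pm k}$, which cancels against the same factor in $Z(Q;Q_1,q)$.

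What remains should be the known ``edge'' formula
\[
Q^{|\nu|}\prod_{s\in\nu}\frac{(1-Q_1q^{h(s)})(1-Q_1q^{-h(s)})}{(1-q^{h(s)})(1-q^{-h(s)})}.
\]
To reach it, I would evaluate the remaining term $s_\nu(q^\rho)s_{\nu^t}(q^\rho)$ with \eqref{eqn:schur q^rho}. In that evaluation the $q$-powers cancel, using $n(\nu)+n(\nu^t)=\sum_{s}h(s)-|\nu|$, and the sign $(-1)^{|\nu|}$ absorbs the minus in $(-Q)^{|\nu|}$. I expect this step to be the main technical obstacle. The cleanest route is probably to recognise the sum $\sum_\mu(-Q_1)^{|\mu|}s_{\mu/\eta}s_{\mu^t/\eta'}$ through the Cauchy identities (this is essentially the Iqbal--Kashani-Poor computation). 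Alternatively, one can compare $\log$ of the operator product with the standard identity $\sum_{s\in\nu}(q^{h(s)}+q^{-h(s)})=\sum_{i,j}(\ldots)$ expressing the box-product via the content generating function of $\nu$.

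\textbf{Step 3: specialize and take the limit.} Set $Q_1=q^t$ and $q\to\xi_tq$; the branch of $q^{1/2}$ is irrelevant since all half-integer powers have cancelled. Each box contributes
\[
\frac{(1-\xi_t^{h}q^{t+h})(1-\xi_t^{-h}q^{t-h})}{(1-\xi_t^hq^h)(1-\xi_t^{-h}q^{-h})},\qquad h=h(s).
\]
If $t\nmid h$, this tends to $1$ as $q\to1^-$. If $t\mid h$ and $h\neq t$, it tends to the finite value $(t+h)(t-h)/(-h^2)$. If $h=t$, it tends to $0$. By Lemma \ref{lem:t-core h=t}, the product therefore tends to $1$ exactly when $\nu$ is a $t$-core and to $0$ otherwise. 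So the limiting weight is $Q^{|\nu|}\mathbf 1_{\nu\in\mathcal P_{\text{t-core}}}$, and the same computation gives the denominator limit $\sum_{\nu\in\mathcal P_{\text{t-core}}}Q^{|\nu|}$.

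\textbf{Step 4: exchange limit and sum.} The limit can be taken termwise because everything is a formal power series in $Q$. Each coefficient of $Q^d$ involves only the finitely many $\nu\vdash d$, and for fixed $\nu$ the factor $\prod_j\sum_i s_j^{\nu_i-i+\frac12}$ does not depend on $q$. Finally, since the normalized quotient has invertible constant term, the limit of the quotient is the quotient of the limits. This gives \eqref{eqn:qZn lim Zn}.
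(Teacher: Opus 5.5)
Your architecture matches the paper's, and the per-$\nu$ weight you state is correct: it is the paper's $(QQ_1)^{|\nu|}\prod_{s\in\nu}\frac{(1-Q_1q^{h(s)})(1-Q_1^{-1}q^{h(s)})}{(1-q^{h(s)})^2}$ rewritten. You cancel the prefactor $\prod_{j,k\ge1}(1-Q_1q^{-j-k+1})$ against $Z(Q;Q_1,q)$, take the limit box by box, and conclude with Lemma~\ref{lem:t-core h=t}, as the paper does. Steps 3 and 4 are correct. Step 4 even supplies the coefficientwise justification for interchanging limit and sum, which the paper leaves implicit.

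The weak point is Step 2: the route you set up does not produce that weight. The rotation in Step 1 replaces the explicit, non-skew form of the vertex with a skew one. After the $\omega$-transform and commuting $\Gamma_+(-Q_1q^\rho)$ past $\Gamma_-(-q^\rho)^{-1}$, you are left with
\begin{align*}
\prod_{j,k\ge1}(1-Q_1q^{-j-k+1})\cdot\langle\nu|\Gamma_-(q^\rho)\Gamma_-(Q_1q^\rho)^{-1}(-Q_1)^{L_0}\Gamma_+(-Q_1q^\rho)\Gamma_+(-q^\rho)^{-1}|\nu\rangle .
\end{align*}
This is a finite sum over $\eta\subseteq\nu$ of products of skew functions. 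It contains no factor $s_\nu(q^\rho)s_{\nu^t}(q^\rho)$ to which \eqref{eqn:schur q^rho} could be applied. Likewise, the skew Cauchy identity applied to $\sum_\mu(-Q_1)^{|\mu|}s_{\mu/\eta}s_{\mu^t/\eta'}$ returns the same prefactor times a further sum, not a hook product. The rotated vacuum-expectation form is what the paper uses in Proposition~\ref{prop:qZn as []}, where $\nu$ is traced over. For a single fixed $\nu$ it is the wrong move.

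What works, and what the paper does, is to keep the empty partition in the first slot and use \eqref{eqn:def C} directly: $C_{\emptyset,\mu,\nu}(q)=q^{\kappa(\nu)/2}s_{\nu^t}(q^\rho)s_\mu(q^{\nu^t+\rho})$. The $\kappa$-factors then cancel, and the summand becomes $(-Q_1)^{|\mu|}(-Q)^{|\nu|}s_{\nu^t}(q^\rho)s_\nu(q^\rho)s_{\mu^t}(q^{\nu^t+\rho})s_\mu(q^{\nu+\rho})$, with $\nu$ entering the $\mu$-dependence only through the evaluation points.

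The ordinary dual Cauchy identity gives $\sum_\mu(-Q_1)^{|\mu|}s_\mu(q^{\nu+\rho})s_{\mu^t}(q^{\nu^t+\rho})=\prod_{i,j\ge1}(1-Q_1q^{\nu_i+\nu^t_j-i-j+1})$. The content-sum identity you allude to, $\sum_{i,j\ge1}(q^{\nu_i+\nu^t_j-i-j+1}-q^{-i-j+1})=\sum_{s\in\nu}(q^{h(s)}+q^{-h(s)})$, applied with $q\to q^r$ inside the logarithm, converts this into $\prod_{j,k\ge1}(1-Q_1q^{-j-k+1})\prod_{s\in\nu}(1-Q_1q^{h(s)})(1-Q_1q^{-h(s)})$. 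This is how the paper uses Lemma~\ref{lem:from Y23} with $\nu^1=\nu^2=\nu$, imported from \cite{Y25}. From there on, your evaluation of $s_\nu(q^\rho)s_{\nu^t}(q^\rho)$ via Lemma~\ref{lem:comb lem for nu}, the sign absorption, and the limit analysis coincide with the paper's proof.
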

\begin{proof}
Let us start with a much easier case.
Comparing equations \eqref{eqn:def qZn} and \eqref{eqn:def F_t},
we first prove that the normalization constants in these two functions are related to each other.
To be more precise,
we first prove the following
\begin{align}\label{eqn:t-core=Zq lim}
	\sum_{\nu\in\mathcal{P}_{\text{t-core}}} Q^{|\nu|}
	=\lim_{q\rightarrow1^-}
	\frac{Z(Q;Q_1,q)}{\prod_{j,k=1}^\infty (1-Q_1 q^{-j-k+1})}|_{q\rightarrow\xi_tq, Q_1\rightarrow q^t}.
\end{align}
Note that the notation $\lim\limits_{q\rightarrow1^-}f|_{q\rightarrow\xi_tq, Q_1\rightarrow q^t}$ means,
for the function $f$
we first replace the variables $q$ and $Q_1$ by $\xi_tq$ and $q^t$, respectively,
and after that we take the limit $q\rightarrow1^-$.
The order between change of the variables $q$ and $Q_1$ does not matter since we have $(\xi_t q)^t=q^t$.

By the definition of the topological vertex in equation \eqref{eqn:def C},
the $q$-deformed partition function defined in equation \eqref{eqn:def qZ} can be represented by
\begin{align}\label{eqn:Zq as ssss}
	Z(Q;Q_1,q)
	=&\sum_{\mu,\nu\in\mathcal{P}}
	(-Q_1)^{|\mu|}
	(-Q)^{|\nu|}
	s_{\nu^t}(q^{\rho})s_{\mu^t}(q^{\nu^t+\rho})
	s_{\nu}(q^{\rho})s_{\mu}(q^{\nu+\rho}).
\end{align}
To compute the right hand side of above equation,
we need to apply Lemma \ref{lem:from Y23} for the case of $\nu^1=\nu^2=\nu$ to take the summation over $\mu$,
and note that for $(j,k)\in\nu$, we have $h(j,k)=\nu_j+\nu^t_k-j-k+1$.
Then equation \eqref{eqn:Zq as ssss} reduces to
\begin{align*}
	Z(Q;Q_1,q)
	=\prod_{j,k=1}^\infty& (1-Q_1 q^{-j-k+1})\\
	&\cdot \sum_{\nu\in\mathcal{P}}
	(-Q)^{|\nu|}
	s_{\nu^t}(q^{\rho})s_{\nu}(q^{\rho})
	\prod_{(j,k)\in\nu}(1-Q_1q^{h(j,k)})(1-Q_1q^{-h(j,k)}).
\end{align*}
Thus,
by the evaluation formula \eqref{eqn:schur q^rho} for Schur function $s_{\nu}(q^{\rho})$,
we have
\begin{align}\label{eqn:Zq as 1}
	\begin{split}
	\frac{Z(Q;Q_1,q)}{\prod_{j,k=1}^\infty (1-Q_1 q^{-j-k+1})}
	=&\sum_{\nu\in\mathcal{P}}
	(-Q)^{|\nu|}
	q^{-n(\nu)-n(\nu^t)-|\nu|}
	\prod_{(j,k)\in\nu}\frac{(1-Q_1q^{h(j,k)})(1-Q_1q^{-h(j,k)})} {(1-q^{-h(j,k)})^2}\\
	=&\sum_{\nu\in\mathcal{P}}
	(QQ_1)^{|\nu|}
	\prod_{(j,k)\in\nu}\frac{(1-Q_1q^{h(j,k)})(1-Q_1^{-1}q^{h(j,k)})} {(1-q^{h(j,k)})^2},
	\end{split}
\end{align}
where we have used the combinatorial identity
$\sum_{(j,k)\in\nu} h(j,k)=n(\nu)+n(\nu^t)+|\nu|$,
which is proved in Lemma \ref{lem:comb lem for nu}.
We take $\xi_t$ to be the $t$-th root of unity and apply the change of variables $q\rightarrow \xi_t q, Q\rightarrow q^t$ to equation \eqref{eqn:Zq as 1}.
The result is
\begin{align}\label{eqn:qZ|cov=}
	\begin{split}
	&\frac{Z(Q;Q_1,q)}{\prod_{j,k=1}^\infty (1-Q_1 q^{-j-k+1})}|_{q\rightarrow\xi_tq, Q_1\rightarrow q^t}\\
	&\qquad\qquad\qquad=\sum_{\nu\in\mathcal{P}}
	(q^tQ)^{|\nu|}
	\prod_{(j,k)\in\nu}\frac{(1-\xi_t^{h(j,k)}q^{h(j,k)+t}) (1-\xi_t^{h(j,k)}q^{h(j,k)-t})}
	{\big(1-\xi_t^{h(j,k)}q^{h(j,k)}\big)^2}.
	\end{split}
\end{align}

We are now ready to take the limit $q\rightarrow1^-$.
For each given $(j,k)\in\nu$,
we examine the corresponding term in the right hand side of equation above in detail.
The limit $q\rightarrow1^-$ gives
\begin{align}\label{eqn:dis lim}
	\frac{(1-\xi_t^{h(j,k)}q^{h(j,k)+t}) (1-\xi_t^{h(j,k)}q^{h(j,k)-t})}
	{\big(1-\xi_t^{h(j,k)}q^{h(j,k)}\big)^2}
	\rightarrow\begin{cases}
		\frac{(h(j,k)+t) (h(j,k)-t)}
		{h(j,k)^2}, &\text{if\ } t| h(j,k),\\
		1, &\text{if\ } t\nmid h(j,k),
	\end{cases}
\end{align}
where in the case of $t|h(j,k)$,
we have used the L'Hospital's rule.
As a consequence,
after taking $\lim_{q\rightarrow1^-}$ in both sides of equation \eqref{eqn:qZ|cov=},
we have proved
\begin{align}\label{eqn:lim Zq=nu..}
	\lim_{q\rightarrow1^-}
	\frac{Z(Q;Q_1,q)}{\prod\limits_{j,k=1}^\infty (1-Q_1 q^{-j-k+1})}|_{q\rightarrow\xi_tq \atop Q_1\rightarrow q^t}
	=\sum_{\nu\in\mathcal{P}}
	\bigg(\prod_{(j,k)\in \nu \atop t|h(j,k)} \frac{\big(h(j,k)+t\big) \big(h(j,k)-t\big)}{h(j,k)^2}
	\cdot Q^{|\nu|}\bigg).
\end{align}
By Lemma \ref{lem:t-core h=t},
if $\nu$ is not a $t$-core partition,
there must be at least one box $(j,k)\in\nu$ such that $h(j,k)=t$.
Thus the corresponding contribution of $\nu$ in the right hand side of equation \eqref{eqn:lim Zq=nu..} vanishes due to the factor $\big(h(j,k)-t\big)$.
Otherwise, if $\nu$ is a $t$-core partition,
then by definition there is no $(j,k)\in\nu$ such that $t|h(j,k)$.
Thus the corresponding contribution of $\nu$ will simply be $Q^{|\nu|}$.
Consequently,
the above equation \eqref{eqn:lim Zq=nu..} reduces to
\begin{align}\label{eqn:qZ as t-core}
	\lim_{q\rightarrow1^-}
	\frac{Z(Q;Q_1,q)}{\prod_{j,k=1}^\infty (1-Q_1 q^{-j-k+1})}|_{q\rightarrow\xi_tq, Q_1\rightarrow q^t}
	=\sum_{\nu\in\mathcal{P}_{\text{t-core}}} Q^{|\nu|},
\end{align}
which is precisely equation \eqref{eqn:t-core=Zq lim}.

A similar method can be directly applied to prove the case of $n$-point function.
We omit the details and only list the result of each step.
First,
using the formula \eqref{eqn:def C},
the $q$-deformed $n$-point function can be rewritten as 
\begin{align*}
	\begin{split}
	Z(Q;Q_1,&q;s_1,\dots,s_n)
	=\frac{1}{Z(Q;Q_1,q)}\\
	&\cdot \sum_{\mu,\nu\in\mathcal{P}}
	\bigg((-Q_1)^{|\mu|}
	(-Q)^{|\nu|}
	s_{\nu^t}(q^{\rho})s_{\mu^t}(q^{\nu^t+\rho})
	s_{\nu}(q^{\rho})s_{\mu}(q^{\nu+\rho})
	\cdot \prod_{j=1}^n \sum_{i=1}^\infty s_j^{\nu_i-i+\half}\bigg).
	\end{split}
\end{align*}
Then after applying Lemma \ref{lem:from Y23} to take the summation over $\mu$, and using the evaluation formula \eqref{eqn:schur q^rho},
we arrive at
\begin{align}
	\begin{split}
	Z(Q;Q_1,&q;s_1,\dots,s_n)
	=\frac{\prod_{j,k=1}^\infty (1-Q_1 q^{-j-k+1})}{Z(Q;Q_1,q)}\\
	&\cdot\sum_{\nu\in\mathcal{P}}
	\Bigg((QQ_1)^{|\nu|}
	\prod_{(j,k)\in\nu}\frac{(1-Q_1q^{h(j,k)})(1-Q_1^{-1}q^{h(j,k)})} {(1-q^{h(j,k)})^2}
	\cdot \prod_{j=1}^n \sum_{i=1}^\infty s_j^{\nu_i-i+\half}\bigg).
	\end{split}
\end{align}
Note that the change of variables $q\rightarrow\xi_tq,  Q_1\rightarrow q^t$ and taking the limit $q\rightarrow1^-$ do not affect the $\prod_{j=1}^n \sum_{i=1}^\infty s_j^{\nu_i-i+\half}$ part.
Consequently,
from previous discussions performed in equation \eqref{eqn:dis lim} we obtain
\begin{align*}
	\lim_{q\rightarrow1^-} Z(Q;Q_1,q;s_1,\dots,s_n)|_{q\rightarrow\xi_tq \atop Q_1\rightarrow q^t}
	=\frac{1}{\sum\limits_{\nu\in\mathcal{P}_{\text{t-core}}} Q^{|\nu|}}
	\cdot \sum_{\nu\in\mathcal{P}_{\text{t-core}}} \bigg(Q^{|\nu|}\cdot \prod_{j=1}^n \sum_{i=1}^\infty s_j^{\nu_i-i+\half}\bigg),
\end{align*}
which exactly matches with the equation \eqref{eqn:qZn lim Zn}.
\end{proof}

The following two combinatorial lemmas are needed in the above proof.
\begin{lem}\label{lem:from Y23}
	For two fixed partitions $\nu^1, \nu^2\in\mathcal{P}$, we have the following identity
	\begin{align}\label{eqn:inf/inf=finite}
		\begin{split}
			\sum_{\lambda}
			z^{|\lambda|}
			s_{\lambda}(q^{\nu^1+\rho})
			&s_{\lambda^t}(q^{\nu^{2,t}+\rho})
			=\prod_{j,k=1}^\infty (1+zq^{-j-k+1})\\
			&\cdot\prod_{(j,k)\in\nu^1}(1+zq^{\nu_j^1+\nu^{2,t}_k-j-k+1})
			\cdot\prod_{(j,k)\in\nu^2}(1+zq^{-\nu^{1,t}_k-\nu^{2,t}_j+j+k-1}),
		\end{split}
	\end{align}
	where $\nu^{i,t}$ means the transpose of $\nu^i$ for $i=1,2$.
\end{lem}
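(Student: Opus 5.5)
The plan is to reduce the left hand side of \eqref{eqn:inf/inf=finite} to an infinite product via the dual Cauchy identity, and then to identify that product with the right hand side through an identity of multisets of exponents. I would first work in the region $|q|>1$, where every specialization $q^{\nu+\rho}$ converges absolutely, and treat everything as a power series in $z$. By definition, $s_\lambda(q^{\nu^1+\rho})$ is the Schur function evaluated at $x_i=q^{\nu^1_i-i+\half}$, $i\geq1$. Likewise $s_{\lambda^t}(q^{\nu^{2,t}+\rho})$ is evaluated at $y_k=q^{\nu^{2,t}_k-k+\half}$. The dual Cauchy identity $\sum_\lambda s_\lambda(\bm x)s_{\lambda^t}(\bm y)=\prod_{i,k}(1+x_iy_k)$, combined with homogeneity (absorb $z^{|\lambda|}$ by replacing $\bm x\to z\bm x$), gives
\begin{align*}
\sum_\lambda z^{|\lambda|}s_\lambda(q^{\nu^1+\rho})s_{\lambda^t}(q^{\nu^{2,t}+\rho})
=\prod_{i,k=1}^\infty\big(1+zq^{\nu^1_i+\nu^{2,t}_k-i-k+1}\big).
\end{align*}
So it remains to show that the multiset $M=\{\nu^1_i+\nu^{2,t}_k-i-k+1\}_{i,k\geq1}$ equals the disjoint union of three multisets. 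These are $\{-j-k+1\}_{j,k\geq1}$, the exponents $\{\nu^1_j+\nu^{2,t}_k-j-k+1\}_{(j,k)\in\nu^1}$, and the exponents $\{-\nu^{1,t}_k-\nu^{2,t}_j+j+k-1\}_{(j,k)\in\nu^2}$. Since only multiplicities matter, both sides can then be compared factor by factor.

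To prove the multiset identity, I would encode each multiset by its generating series in a formal variable $u$ (with $|u|>1$ so everything converges). Write $\sum_{i\geq1}u^{\nu_i-i}=E+D_\nu(u)$, where $E=\sum_{i\geq1}u^{-i}=\frac{1}{u-1}$. The correction term is the finite Laurent polynomial
\begin{align*}
D_\nu(u)=\sum_i(u^{\nu_i}-1)u^{-i}=(u-1)\sum_{(i,j)\in\nu}u^{j-1-i},
\end{align*}
and the analogous decomposition holds for $\nu^{2,t}$. Then
\begin{align*}
\sum_{m\in M}u^m=u\big(E+D_{\nu^1}\big)\big(E+D_{\nu^{2,t}}\big),
\end{align*}
and the term $uE^2$ is exactly the generating series of $\{-j-k+1\}$. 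The remaining finite piece $u(ED_{\nu^{2,t}}+ED_{\nu^1}+D_{\nu^1}D_{\nu^{2,t}})$ must be shown to equal
\begin{align*}
\sum_{(j,k)\in\nu^1}u^{\nu^1_j+\nu^{2,t}_k-j-k+1}+\sum_{(j,k)\in\nu^2}u^{-\nu^{1,t}_k-\nu^{2,t}_j+j+k-1}.
\end{align*}
This is the standard Nekrasov--Okounkov type identity for arm/leg lengths relative to a pair of partitions. I expect this rewriting to be the main obstacle: it is pure bookkeeping, but the signs and shifts are easy to get wrong.

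My first attempt would be to expand $D_{\nu^1}(u)$ row by row as $\sum_i u^{-i}\sum_{j=1}^{\nu^1_i}(u^j-u^{j-1})$, so that products with the other factor telescope along rows. Summing over $k$ should then produce the arm-type exponents $\nu^1_j-j+\nu^{2,t}_k-k+1$ for boxes $(j,k)\in\nu^1$. The symmetric telescoping in columns of $\nu^{2,t}$ (that is, rows of $\nu^2$) should produce the second family. If the direct computation becomes messy, the fallback is induction on $|\nu^1|+|\nu^2|$. When a single box is added to $\nu^1$ at row $i_0$, the left hand side changes by replacing one column of factors $\{\nu^1_{i_0}-i_0+\nu^{2,t}_k-k+1\}_k$ by its shift by $1$. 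One then checks that the right hand side changes by the same amount: the first product acquires a new factor for the new box and shifts the factors in row $i_0$, while the second product shifts the factors involving $\nu^{1,t}$ in the affected column. The identity then follows from a telescoping cancellation.

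Once the multiset equality is established, the infinite products agree for $|q|>1$. Both sides are power series in $z$ whose coefficients are rational functions in $q^{1/2}$: on the left because $s_\lambda(q^{\nu+\rho})$ are rational, on the right by expanding the finite products together with $\prod_{j,k}(1+zq^{-j-k+1})=\sum_\lambda z^{|\lambda|}s_\lambda(q^\rho)s_{\lambda^t}(q^{\rho})$. The identity therefore holds as an identity of rational functions, which is the form in which the lemma is used in the proof of Theorem~\ref{prop:qZn lim Zn}.
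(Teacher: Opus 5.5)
\textbf{Gap.} Several parts of your proposal are fine: the dual Cauchy reduction for $|q|>1$, your expansion of the finite piece as $C_1+\bar C_2-(1-u)(1-u^{-1})C_1\bar C_2$ (with $C_1=\sum_{s\in\nu^1}u^{c(s)}$ and $\bar C_2=\sum_{s\in\nu^2}u^{-c(s)}$), and the continuation to rational functions. The problem is the one step that carries all the content. You never match this finite piece with the exponents on the right; you only say it ``should produce'' them and ``follows from a telescoping cancellation''. As you have set it up, it cannot be matched, because the target, taken verbatim from the statement, is false.

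Take $\nu^1=\emptyset$ and $\nu^2=(2)$, so $\nu^{2,t}=(1,1)$. Only the exponents with $k=1,2$ move: they go from $-i$ and $-i-1$ to $1-i$ and $-i$. So the left side is $\prod_{j,k}(1+zq^{-j-k+1})\cdot(1+z)(1+zq^{-1})$, whereas the last product as printed gives $(1+z)(1+zq)$.

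The correct exponent there is $-\nu^{1,t}_k-\nu^{2}_j+j+k-1=-(a_{\nu^2}(s)+l_{\nu^1}(s)+1)$, where $a_\lambda(j,k)=\lambda_j-k$ and $l_\lambda(j,k)=\lambda^t_k-j$. (The first family is $a_{\nu^1}(s)+l_{\nu^2}(s)+1$.) This corrected version is also the one the paper actually uses: for $\nu^1=\nu^2=\nu$ it produces the factors $1+zq^{\pm h(j,k)}$. The paper gives no argument of its own; it only cites Lemma 3.7 of \cite{Y25}. So the verification you postponed is exactly what is missing, and a two-box check would have shown that the statement must be corrected first.

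\textbf{How to close it.} With the corrected exponent your plan works, and your fallback induction is efficient if you only add boxes to $\nu^1$.

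For the base case $\nu^1=\emptyset$, the finite piece is $\bar C_2$. Row $j$ of $\nu^2$ contributes $\{j-1,\dots,j-\nu^2_j\}$ both to $\bar C_2$ and to $\{j+k-1-\nu^2_j\}_{1\le k\le\nu^2_j}$, so the two sides agree.

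For the inductive step, add the box $(i_0,c)$ to $\nu^1$, and put $A=\nu^1_{i_0}-i_0$ and $e_k=\nu^{2,t}_k-k$.
\begin{itemize}
\item The left side changes by $u^{A+1}\sum_{k\ge1}(u^{e_k+1}-u^{e_k})$.
\item The new box and the shift of row $i_0$ change the right side by $u^{A+1}(\sum_{k\le c}u^{e_k+1}-\sum_{k<c}u^{e_k})$.
\item The boxes of $\nu^2$ in column $c$ (whose $\nu^{1,t}_c$ grows by one) add $u^{A+1}\sum_{j\le\nu^{2,t}_c}(u^{j-1-\nu^2_j}-u^{j-\nu^2_j})$.
\end{itemize}
The difference is $u^{A+1}$ times
$\sum_{k\ge c}u^{-k}(u^{\nu^{2,t}_{k+1}}-u^{\nu^{2,t}_k})-\sum_{j\le\nu^{2,t}_c}(u^{j-1-\nu^2_j}-u^{j-\nu^2_j})$.
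This vanishes once you write $u^{\nu^{2,t}_{k+1}}-u^{\nu^{2,t}_k}=\sum_{\nu^{2,t}_{k+1}<m\le\nu^{2,t}_k}(u^{m-1}-u^m)$ and note that these $m$ are precisely the rows of $\nu^2$ of length $k$. Alternatively, the corrected identity is the $t_1t_2=1$ case of the standard Nakajima--Yoshioka arm--leg character formula, and you may cite it.
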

\begin{proof}
	This lemma is proved by the author in Lemma 3.7 of \cite{Y25}.
\end{proof}

\begin{lem}\label{lem:comb lem for nu}
	For any fixed partition $\nu\in\mathcal{P}$,
	we have
	\begin{align}\label{eqn:sum h=k||}
		\sum_{(j,k)\in\nu} h(j,k)
		=n(\nu)+n(\nu^t)+|\nu|,
	\end{align}
	where for convenience we recall that $n(\nu)=\sum_{i=1}^{l(\nu)} (i-1)\nu_i$.
\end{lem}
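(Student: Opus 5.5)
We prove $\sum_{(j,k)\in\nu} h(j,k)=n(\nu)+n(\nu^t)+|\nu|$ by splitting each hook length into its arm, its leg and the box itself, and summing each piece separately. For $(j,k)\in\nu$ write
\begin{align*}
	h(j,k)=(\nu_j-k)+(\nu^t_k-j)+1 ,
\end{align*}
where $a(j,k)=\nu_j-k$ is the arm length and $l(j,k)=\nu^t_k-j$ is the leg length. Summing the constant term $1$ over all boxes gives $|\nu|$.

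\emph{Arm sum.} Fix a row $j$ and let $k$ run from $1$ to $\nu_j$. The arm lengths $\nu_j-k$ then take the values $0,1,\dots,\nu_j-1$, so
\begin{align*}
	\sum_{(j,k)\in\nu} a(j,k)=\sum_{j}\binom{\nu_j}{2}.
\end{align*}
We compare this with $n(\nu^t)$. Since $n(\nu^t)=\sum_k (k-1)\nu^t_k$, we can write $n(\nu^t)=\sum_{(j,k)\in\nu}(k-1)$, because column $k$ contains $\nu^t_k$ boxes, each with weight $k-1$. Grouping the same sum by rows instead gives $\sum_j\sum_{k=1}^{\nu_j}(k-1)=\sum_j\binom{\nu_j}{2}$. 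Hence the arm sum equals $n(\nu^t)$.

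\emph{Leg sum.} The same argument applies with rows and columns exchanged. For a fixed column $k$, the leg lengths $\nu^t_k-j$ take the values $0,\dots,\nu^t_k-1$, so
\begin{align*}
	\sum_{(j,k)\in\nu} l(j,k)=\sum_k\binom{\nu^t_k}{2}=\sum_{(j,k)\in\nu}(j-1)=\sum_j (j-1)\nu_j=n(\nu).
\end{align*}
The middle equality holds because $\sum_{(j,k)\in\nu}(j-1)$ is the column-wise sum $\sum_k\sum_{j=1}^{\nu^t_k}(j-1)$.

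Adding the arm sum, the leg sum and the box count gives $n(\nu^t)+n(\nu)+|\nu|$, which proves the identity.
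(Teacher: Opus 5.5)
Your proof is correct, but it organizes the computation differently from the paper.

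The paper never splits the hook into arm and leg. It evaluates both sides in closed form in the row lengths and shows each equals $\sum_i \frac{\nu_i^2+2i\nu_i-\nu_i}{2}$. On the right-hand side it regroups $n(\nu^t)+|\nu|$ as $\sum_i\sum_{j=1}^{\nu_i} j$. On the left-hand side, the only column-dependent piece is $\sum_{(i,j)\in\nu}\nu^t_j$, and the paper removes it with the double count $\sum_{(i,j)\in\nu}\nu^t_j=\sum_j(\nu^t_j)^2=\sum_{(i,j)\in\nu}(2i-1)$.

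Your decomposition $h=a+l+1$ instead matches the three pieces term by term with $n(\nu^t)$, $n(\nu)$ and $|\nu|$. This has three advantages:
\begin{itemize}
\item it proves slightly more, namely the refined identities $\sum_{(j,k)\in\nu} a(j,k)=n(\nu^t)$ and $\sum_{(j,k)\in\nu} l(j,k)=n(\nu)$;
\item it makes the symmetry of the statement under $\nu\mapsto\nu^t$ visible;
\item it needs only the row/column regrouping of $\sum_{(j,k)\in\nu}(k-1)$ and its transpose, not the square-sum identity.
\end{itemize}
The paper's route is a mechanical verification that needs no arm/leg interpretation. Yours is shorter and more transparent.
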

\begin{proof}
	The proof is straightforward.
	We compute both sides of equation \eqref{eqn:sum h=k||} and show that they are equal.
	On the one hand,
	from the definition we have
	\begin{align}\label{eqn:n+n+||}
		\begin{split}
		n(\nu)+n(\nu^t)+|\nu|
		=&\sum_{i=1}^{l(\nu)} (i-1)\nu_i
		+\sum_{i=1}^{l(\nu^t)} (i-1)\nu^t_i+|\nu|\\
		=&\sum_{i=1}^{l(\nu)} (i-1)\nu_i 
		+\sum_{i=1}^{l(\nu)}\sum_{j=1}^{\nu_i} j
		=\sum_{i=1}^{l(\nu)} \frac{\nu_i^2+2i\nu_i-\nu_i}{2}.
		\end{split}
	\end{align}
	
	On the other hand,
	\begin{align}\label{eqn:last sum}
		\begin{split}
		\sum_{u\in\nu} h(u)
		=\sum_{i=1}^{l(\nu)}\sum_{j=1}^{\nu_i}
		(\nu_i+\nu_j^t-i-j+1)
		=\sum_{i=1}^{l(\nu)}
		\frac{\nu_i^2-2i\nu_i+\nu_i}{2}
		+\sum_{i=1}^{l(\nu)}\sum_{j=1}^{\nu_i}\nu^t_j.
		\end{split}
	\end{align}
	For the last summand in the above equation,
	we consider the following identity
	\begin{align*}
		\sum_{i=1}^{l(\nu)}\sum_{j=1}^{\nu_i}(2i-1)
		=\sum_{j=1}^{l(\nu^t)} \sum_{i=1}^{\nu^t_j}(2i-1)
		=\sum_{j=1}^{l(\nu^t)} (\nu^t_j)^2
		=\sum_{j=1}^{l(\nu^t)}\sum_{i=1}^{\nu^t_j}\nu^t_j,
	\end{align*}
	Note that this is exactly equal to the last term in equation \eqref{eqn:last sum}.
	Thus, we have
	\begin{align*}
		\sum_{u\in\nu} h(u)
		=\sum_{i=1}^{l(\nu)}
		\frac{\nu_i^2-2i\nu_i+\nu_i}{2}
		+\sum_{i=1}^{l(\nu)}\sum_{j=1}^{\nu_i}(2i-1)
		=\sum_{i=1}^{l(\nu)}
		\frac{\nu_i^2+2i\nu_i-\nu_i}{2}.
	\end{align*}
	Comparing with equation \eqref{eqn:n+n+||},
	we obtain the desired identity.
\end{proof}

As we have promised,
the $q$-deformed $n$-point function $Z(Q;Q_1,q;s_1,\dots,s_n)$ is also a generalization of the ordinary $n$-point function of all integer partitions.
We state this result as follows.
\begin{prop}\label{prop:qZn lim F}
	Denote by
	\begin{align}\label{eqn:qZn lim F}
		F(Q;s_1,\dots,s_n)
		=\frac{1}{\sum\limits_{\nu\in\mathcal{P}} Q^{|\nu|}}
		\cdot \sum_{\nu\in\mathcal{P}} \bigg(\prod_{j=1}^n \sum_{i=1}^\infty s_j^{\nu_i-i+\half}
		\cdot Q^{|\nu|}\bigg)
	\end{align}
	the ordinary $n$-point function of all integer partitions studied by Bloch and Okounkov in \cite{BO}.
	Then we have
	\begin{align*}
		F(Q;s_1,\dots,s_n)
		=\lim_{q\rightarrow \infty} Z(Q;Q_1,q;s_1,\dots,s_n)|_{Q_1=1}.
	\end{align*}
\end{prop}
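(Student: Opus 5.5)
The plan is to reuse the intermediate formula obtained in the proof of Theorem \ref{prop:qZn lim Zn}, before any specialization of $Q_1$ and $q$. There, after using the combinatorial expression \eqref{eqn:def C}, summing over $\mu$ with Lemma \ref{lem:from Y23} (with $\nu^1=\nu^2=\nu$), and applying the evaluation \eqref{eqn:schur q^rho} together with Lemma \ref{lem:comb lem for nu}, we get two facts. First, $Z(Q;Q_1,q)$ equals $\prod_{j,k\ge1}(1-Q_1q^{-j-k+1})$ times
\[
\sum_{\nu}(QQ_1)^{|\nu|}\prod_{(j,k)\in\nu}\frac{(1-Q_1q^{h(j,k)})(1-Q_1^{-1}q^{h(j,k)})}{(1-q^{h(j,k)})^2}.
\]
Second, the numerator of $Z(Q;Q_1,q;s_1,\dots,s_n)$ has the same form with the extra factor $\prod_{j=1}^n\sum_{i\ge1}s_j^{\nu_i-i+\half}$ inserted in the sum over $\nu$. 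I will write both identities out explicitly and keep the common prefactor $\prod_{j,k}(1-Q_1q^{-j-k+1})$ in front of each.

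Next I will specialize $Q_1=1$. Each hook factor becomes $(1-q^{h})(1-q^{h})/(1-q^{h})^2=1$, because $Q_1^{-1}=1$ as well. The weight $(QQ_1)^{|\nu|}$ becomes $Q^{|\nu|}$. The prefactor $\prod_{j,k}(1-q^{-j-k+1})$ is the same in the numerator and in $Z(Q;1,q)$, so it cancels in the ratio. What remains is exactly $\sum_\nu Q^{|\nu|}\prod_j\sum_i s_j^{\nu_i-i+\half}$ divided by $\sum_\nu Q^{|\nu|}$, which is $F(Q;s_1,\dots,s_n)$ as defined in \eqref{eqn:qZn lim F}. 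The right-hand side is therefore constant in $q$, and the limit $q\to\infty$ is trivially $F$.

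The only real obstacle is justifying this cancellation: the factor being divided out must be well defined and nonzero. This is why the limit is taken as $q\to\infty$ rather than $q\to1^-$. For $|q|>1$, the product $\prod_{j,k\ge1}(1-q^{-j-k+1})$ converges absolutely, since $\sum_{j,k}|q|^{-j-k+1}<\infty$. It is also nonzero, since no factor vanishes when $q$ is not a root of unity; in particular $|q|>1$ suffices. For such $q$, the sum over $\mu$ in Lemma \ref{lem:from Y23} makes sense analytically, or at least as a formal power series in $Q_1$ whose coefficients are rational in $q$. I will point out that the identities from that lemma are identities of such series, so specializing $Q_1=1$ is legitimate for $|q|>1$. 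The hook-factor products are finite for each $\nu$, so they raise no issue, and all sums over $\nu$ are formal power series in $Q$. With this justification, $Z(Q;1,q;s_1,\dots,s_n)=F(Q;s_1,\dots,s_n)$ for all large $|q|$, which gives the proposition.
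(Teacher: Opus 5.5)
Your proposal is correct and follows the paper's own proof: you specialize the hook-product expressions for $Z(Q;Q_1,q)$ and for the numerator of $Z(Q;Q_1,q;s_1,\dots,s_n)$ at $Q_1=1$, where every hook factor equals $1$ and the common prefactor $\prod_{j,k}(1-q^{-j-k+1})$ cancels, leaving exactly $F$. You go slightly beyond the paper, which only says the term "becomes $Q^{|\nu|}$", by noting that the specialized expression is independent of $q$ (so the limit is trivial) and by checking that the prefactor is nonzero for $|q|>1$.
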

\begin{proof}
In the proof of Theorem \ref{prop:qZn lim Zn},
we used properties of topological vertex to derive the following two formulas for the $q$-deformed partition function $Z(Q;Q_1,q)$ 
and the $q$-deformed $n$-point function $Z(Q;Q_1,q;s_1,\dots,s_n)$,
\begin{align}\label{eqn:Zq as 1-second}
	\begin{split}
		\frac{Z(Q;Q_1,q)}{\prod_{j,k=1}^\infty (1-Q_1 q^{-j-k+1})}
		=&\sum_{\nu\in\mathcal{P}}
		\bigg((QQ_1)^{|\nu|}
		\prod_{(j,k)\in\nu}\frac{(1-Q_1q^{h(j,k)})(1-Q_1^{-1}q^{h(j,k)})} {(1-q^{h(j,k)})^2}\bigg)
	\end{split}
\end{align}
and
\begin{align}\label{eqn:qZn=-second}
	\begin{split}
		Z(Q;Q_1,&q;s_1,\dots,s_n)
		=\frac{\prod_{j,k=1}^\infty (1-Q_1 q^{-j-k+1})}{Z(Q;Q_1,q)}\\
		&\cdot\sum_{\nu\in\mathcal{P}}
		\Bigg((QQ_1)^{|\nu|}
		\prod_{(j,k)\in\nu}\frac{(1-Q_1q^{h(j,k)})(1-Q_1^{-1}q^{h(j,k)})} {(1-q^{h(j,k)})^2}
		\cdot \prod_{j=1}^n \sum_{i=1}^\infty s_j^{\nu_i-i+\half}\bigg).
	\end{split}
\end{align}
Then,
by setting $Q_1=1$ and $q\rightarrow\infty$,
the term
$$(QQ_1)^{|\nu|}
\prod_{(j,k)\in\nu}\frac{(1-Q_1q^{h(j,k)})(1-Q_1^{-1}q^{h(j,k)})} {(1-q^{h(j,k)})^2}$$
becomes $Q^{|\nu|}$ for any partition $\nu$.
Thus equations \eqref{eqn:Zq as 1-second}, \eqref{eqn:qZn=-second} imply the formula \eqref{eqn:qZn lim F}.
\end{proof}

\subsection{The infinite product formula for the $q$-deformed $n$-point function}
In this subsection,
we prove a formula for the $q$-deformed $n$-point function $Z(Q;Q_1,q;s_1,\dots,s_n)$ expressed as a certain coefficient of an infinite product.
As an application,
we also give a new proof of the formula \eqref{eqn:size-generating function t-core lambda} about the generating function for the number of $t$-core partitions.

We need to use the operator $T(s)$ on the fermionic Fock space $\mathcal{F}$ (see \cite{O01}),
which is defined by
\begin{align*}
	T(s)=\sum_{k\in\mathbb{Z}+\frac{1}{2}}
	s^k \psi_{k}\psi^*_k.
\end{align*}
There are at least two reasons to consider the operator $T(s)$.
On the one hand,
from equation \eqref{eqn:psipsi* action},
we have
\begin{align}\label{eqn:T action}
	T(s) \cdot |\nu\rangle
	=\sum_{i=1}^\infty s^{\nu_i-i+\frac{1}{2}}
	\cdot |\nu\rangle
\end{align}
for any partition $\nu\in\mathcal{P}$.
This helps us to transfer the $q$-deformed $n$-point function $Z(Q;Q_1,q;s_1,\dots,s_n)$ to a vacuum expectation value.
On the other hand,
we have the simple relation between this operator $T(s)$ and fermionic fields as
\begin{align}
	T(s)=[w^0]\ \psi(sw)\psi^*(w),
\end{align}
where the notation $[w^0]$ means taking the coefficient of $w^0$ in the corresponding function/operator.
This can help us to compute the corresponding vacuum expectation value in this subsection and the next subsection.

\begin{prop}\label{prop:qZn as []}
	We have
	\begin{align}\label{eqn:qZn as []}
		\begin{split}
		Z(Q;Q_1,q;s_1,\dots,s_n)
		=\prod_{j=1}^n \frac{\sqrt{s_j^{-1}}}{1-s_j^{-1}}
		\cdot
		[w_1^0\dots w_n^0]
		\ E(Q;Q_1,q;\mathbf{sw},\mathbf{w}),
		\end{split}
	\end{align}
	where we abbreviate the notations $\mathbf{sw}=(s_1w_1,\dots,s_nw_n)$ and
	$\mathbf{w}=(w_1,\dots,w_n)$.
	The function $E(Q;Q_1,q;\mathbf{z},\mathbf{w})$ is given by
	\begin{align}\label{eqn:def E first}
		\begin{split}
		&\prod_{1\leq i<k\leq n} \frac{(1-z_i^{-1}z_k)(1-w_i^{-1}w_k)}{(1-z_i^{-1}w_k)(1-w_i^{-1}z_k)}
		\cdot \prod_{i,k=1}^n \frac{\big(Q_1Qz_i^{-1}z_k,Q_1Qw_i^{-1}w_k;Q_1Q\big)_{\infty}}
		{\big(Q_1Qz_i^{-1}w_k,Q_1Qw_i^{-1}z_k;Q_1Q\big)_{\infty}}\\
		&\qquad\cdot \prod_{j=1}^n\prod_{a=1}^\infty \frac{\big(-Q_1Qz_jq^{-a+\half},-Qw_jq^{-a+\half},-Q_1z_j^{-1}q^{-a+\half},-w_j^{-1}q^{-a+\half};Q_1Q\big)_{\infty}}
		{\big(-Qz_jq^{-a+\half},-Q_1Qw_jq^{-a+\half},-z_j^{-1}q^{-a+\half},-Q_1w_j^{-1}q^{-a+\half};Q_1Q\big)_{\infty}},
		\end{split}
	\end{align}
	where
	$(z;Q)_{\infty} :=\prod_{b=0}^\infty (1-z Q^b)$ and
	$(z_1,\cdots,z_k;Q)_{\infty}:=\prod_{i=1}^k (z_i;Q)_{\infty}$,
	and we restrict ourselves to the region
	\begin{align}\label{eqn:region}
		|Q_1|, |Q|\ll 1,
		\qquad 1<|w_n|<|s_nw_n|<\dots<|w_1|<|s_1w_1|< |Q^{-1}|.
	\end{align}
\end{prop}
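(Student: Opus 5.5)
We will write the numerator of \eqref{eqn:def qZn} as a vacuum expectation value, commute all vertex operators to the outside, and then use Wick's theorem on the fermionic fields. First, by \eqref{eqn:clm0=<>} together with the rotation symmetry \eqref{eqn:rotation symm}, each factor becomes
\[
C_{\emptyset,\mu^t,\nu}(q)=C_{\mu^t,\nu,\emptyset}(q)=q^{\kappa(\mu)/2}\langle\mu|\Gamma_-(q^\rho)\Gamma_+(q^\rho)|\nu\rangle,
\qquad
C_{\emptyset,\mu,\nu^t}(q)=q^{\kappa(\mu^t)/2}\langle\mu^t|\Gamma_-(q^\rho)\Gamma_+(q^\rho)|\nu^t\rangle .
\]
Since $\kappa(\mu^t)=-\kappa(\mu)$, the framing factors cancel. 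We then apply the $\omega$-transform \eqref{eqn:taking w} to the second factor, turning it into $\langle\mu|\Gamma_-(-q^\rho)^{-1}\Gamma_+(-q^\rho)^{-1}|\nu\rangle$. Taking the dual \eqref{eqn:taking dual} in one of the two factors lets us glue them along $\sum_\mu|\mu\rangle\langle\mu|$ and along $\nu$. The weights $(-Q_1)^{|\mu|}$ and $(-Q)^{|\nu|}$ are inserted as $(-Q_1)^{L_0}$ and $(-Q)^{L_0}$, using \eqref{eqn:action L0}, and the observable $\prod_j\sum_i s_j^{\nu_i-i+\half}$ is inserted as $\prod_jT(s_j)$ by \eqref{eqn:T action}. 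The result is a trace over $\mathcal{F}^{(0)}$ of the form
\[
\mathrm{tr}_{\mathcal F^{(0)}}\Big(Q^{L_0}\prod_{j} T(s_j)\,\Gamma_-(\cdots)\Gamma_+(\cdots)\,Q_1^{L_0}\,\Gamma_-(\cdots)\Gamma_+(\cdots)\Big),
\]
with the arguments $\pm q^{\rho}$, after absorbing the signs $(-1)^{|\mu|},(-1)^{|\nu|}$ into $q^{\rho}\to-q^{\rho}$ via $(-1)^{L_0}\Gamma_\pm(x)(-1)^{L_0}=\Gamma_\pm(-x)$.

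Next, we write $T(s_j)=[w_j^0]\,\psi(s_jw_j)\psi^*(w_j)$ and use \eqref{eqn:psi as gamma} to express each $\psi,\psi^*$ through $R^{\pm1}$ and vertex operators. The charge-zero trace is evaluated in the standard way: cycle the vertex operators around the trace, commuting them past $Q^{L_0}$ and $Q_1^{L_0}$ with $Q^{L_0}\Gamma_\pm(x)Q^{-L_0}=\Gamma_\pm(Q^{\pm1}x)$, and use the relations \eqref{eqn:comm Gammapminf} repeatedly. The geometric series in $Q_1Q$ produced by going once around the trace yield the $(\,\cdot\,;Q_1Q)_\infty$ Pochhammer symbols. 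Contractions among the fermion vertex operators give the cross-ratio prefactor over $i<k$ and the $(Q_1Qz_i^{-1}z_k;Q_1Q)_\infty$-type factors. Contractions between fermions and the $\Gamma_\pm(\pm q^\rho)$ give the products over $a$ of $(-Qz_jq^{-a+\half};Q_1Q)_\infty$, and similarly for the other three families. The pure $\Gamma$–$\Gamma$ contractions give a $w$-independent factor, which must equal $Z(Q;Q_1,q)$, so it cancels against the normalization; we will check this by setting $n=0$.

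The $R^{\pm1}$ and $z^{C}$ pieces are handled with \eqref{eqn:comm RCH}. Because each $T(s_j)$ is charge neutral, the powers of $R$ cancel. The remaining scalar pieces, $z^{-\half}$ factors and the vacuum normal-ordering constant coming from $\psi\psi^*$ versus $:\psi\psi^*:$, should produce the prefactor $\sqrt{s_j^{-1}}/(1-s_j^{-1})$. Concretely, the diagonal contraction $\langle\psi(s_jw_j)\psi^*(w_j)\rangle$ equals $\sum_{k<0}s_j^k=\sqrt{s_j^{-1}}/(1-s_j^{-1})$ in the region $|s_j|>1$. We must track the theta-like charge sum carefully; the restriction to charge zero is what prevents a genuine theta function from appearing here. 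The region \eqref{eqn:region} is exactly what makes every expansion, $|zw|<1$ and the radial ordering of the fields, convergent.

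The main obstacle is the bookkeeping of the trace: ensuring that every contraction is expanded in the correct direction, and that the charge-zero projection is handled correctly. The plan for the latter is to replace the trace over $\mathcal{F}^{(0)}$ by vacuum expectations, working directly with $\sum_{\mu,\nu}$ of matrix elements rather than with a full trace, so that no theta-function zero mode arises. If this becomes unwieldy, we fall back on first establishing the formula for $n=1$, where the result is a single contraction, and then using Wick's theorem to assemble the general case from pairwise contractions.
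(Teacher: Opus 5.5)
Your main line is the paper's proof. You rotate the vertices and use \eqref{eqn:clm0=<>}, where the $q^{\kappa/2}$ factors cancel. The dual and the $\omega$-transform then glue the two matrix elements into a trace over $\mathcal{F}^{(0)}$ of $G_0\prod_j T(s_j)$. Next you write $T(s_j)=[w_j^0]\psi(s_jw_j)\psi^*(w_j)$ and bosonize with \eqref{eqn:psi as gamma}; normal ordering gives $\sqrt{w_j/z_j}/(1-z_j^{-1}w_j)$, hence the prefactor at $z_j=s_jw_j$. Finally you commute and cycle the vertex operators around the trace, and normalize by the $n=0$ case.

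Two corrections:
\begin{itemize}
\item A slip: $Q^{L_0}\Gamma_\pm(x)Q^{-L_0}=\Gamma_\pm(Q^{\mp1}x)$, not $\Gamma_\pm(Q^{\pm1}x)$. With your sign, cycling around the trace would generate divergent products instead of the $(\,\cdot\,;Q_1Q)_\infty$ ones you intend.
\item Drop the appeal to Wick's theorem, both in your first sentence and in the fallback. On a trace restricted to $\mathcal{F}^{(0)}$, Wick's theorem fails, so the charge-zero $n$-point function is not the determinant of the $n=1$ functions. The paper uses its Wick-type Lemma \ref{lem:Wick} only after inserting $Q_2^{-C}$ and summing over all charges via $\sum_N R^N\cdots R^{-N}$. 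That produces the different, $\Theta_3$-weighted determinant of Proposition \ref{thm:qZn as det}, not the product formula \eqref{eqn:def E first}.
\end{itemize}

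The pure vertex-operator computation you describe is sufficient on its own.
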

\begin{proof}
Starting from the definition \eqref{eqn:def qZn} of the $q$-deformed $n$-point function and using the rotation symmetry  \eqref{eqn:rotation symm} of the topological vertex,
we rewrite it as
\begin{align}\label{eqn:top vet as vev}
	\begin{split}
	Z(Q;Q_1,q;s_1,&\dots,s_n)
	=\frac{1}{Z(Q;Q_1,q)}\\
	&\cdot\sum_{\mu,\nu\in\mathcal{P}}
	\bigg((-Q_1)^{|\mu|}
	(-Q)^{|\nu|}
	C_{\mu^t,\nu,\emptyset}(q)
	C_{\mu,\nu^t,\emptyset}(q)
	\cdot \prod_{j=1}^n \sum_{i=1}^\infty s_j^{\nu_i-i+\half}\bigg).
	\end{split}
\end{align}
Then we apply the vacuum expectation value formula \eqref{eqn:clm0=<>} for the topological vertex.
Consequently,
the second line of the above equation \eqref{eqn:top vet as vev} becomes
\begin{align}\label{eqn:qZn as <><>}
	&\sum_{\mu,\nu\in\mathcal{P}}
	\Big((-Q_1)^{|\mu|}
	(-Q)^{|\nu|}\langle\mu|\Gamma_-(q^{\rho})\Gamma_+(q^{\rho})
	|\nu\rangle
	\langle\mu^t|\Gamma_-(q^{\rho})\Gamma_+(q^{\rho})
	|\nu^t\rangle
	\cdot \prod_{j=1}^n \sum_{i=1}^\infty s_j^{\nu_i-i+\half}\Big).
\end{align}
To simplify the above formula,
we use the following identity
\begin{align*}
	\langle\mu^t|\Gamma_-(q^{\rho})\Gamma_+(q^{\rho})
	|\nu^t\rangle
	=\langle\nu^t|\Gamma_-(q^{\rho})\Gamma_+(q^{\rho})
	|\mu^t\rangle
	=\langle\nu|\Gamma_-(-q^{\rho})^{-1}\Gamma_+(-q^{\rho})^{-1}
	|\mu\rangle,
\end{align*}
where the first equal sign follows from taking dual (see equation \eqref{eqn:taking dual}),
and the second equal sign is obtained from taking the $\omega$-transform (see equation \eqref{eqn:taking w}).
We also need the following two equations
\begin{align*}
	Q^{L_0}\cdot|\nu\rangle
	=Q^{|\nu|}\cdot|\nu\rangle
	\quad\text{and} \quad
	\prod_{j=1}^n T(s_j) \cdot|\nu\rangle
	=\prod_{j=1}^n \sum_{i=1}^\infty s_j^{\nu_i-i+\half} \cdot |\nu\rangle,
\end{align*}
where the first follows from equation \eqref{eqn:action L0} and the second from \eqref{eqn:T action}.
Then equation \eqref{eqn:qZn as <><>} becomes
\begin{align*}
	&\sum_{\mu,\nu\in\mathcal{P}}
	\Big(\langle\mu|\Gamma_-(q^{\rho})\Gamma_+(q^{\rho})
	(-Q)^{L_0}\prod_{j=1}^n T(s_j)|\nu\rangle
	\cdot \langle\nu|\Gamma_-(-q^{\rho})^{-1}\Gamma_+(-q^{\rho})^{-1}
	(-Q_1)^{L_0}|\mu\rangle\Big).
\end{align*}
Since $\{|\mu\rangle\}_{\mu\in\mathcal{P}}$ is a basis of the charge zero fermionic Fock space $\mathcal{F}^{(0)}$, 
the operator $\sum_{\mu\in\mathcal{P}} |\mu\rangle\langle\mu|$ could be regarded as an identity operator on this space.
Finally, we have
\begin{align}\label{eqn:qZn as <>}
	\begin{split}
	Z(Q;Q_1,q;s_1,\dots,s_n)=
	\frac{1}{Z(Q;Q_1,q)}
	\cdot\sum_{\nu\in\mathcal{P}}
	\langle\nu| G_0
	\prod_{j=1}^n T(s_j) |\nu\rangle,
	\end{split}
\end{align}
where we denote
\begin{align}\label{def:G_0}
	G_0=\Gamma_-(-q^{\rho})^{-1}\Gamma_+(-q^{\rho})^{-1}
	(-Q_1)^{L_0}\Gamma_-(q^{\rho})\Gamma_+(q^{\rho})
	(-Q)^{L_0}.
\end{align}
This gives a vacuum expectation value formula for the $q$-deformed $n$-point function.

Motivated by the above vacuum expectation value formula \eqref{eqn:qZn as <>} and the relation between the operator $T(s)$ and the fermionic fields $\psi(z), \psi^*(w)$,
we introduce the following new function
\begin{align}\label{eqn:def E}
	\begin{split}
	\bar{E}(Q;Q_1,q;\mathbf{z},\mathbf{w})
	:=\sum_{\nu\in\mathcal{P}}
	\langle\nu|G_0
	\prod_{j=1}^n\Big(\psi(z_i)\psi^*(w_i)\Big)|\nu\rangle.
	\end{split}
\end{align}
For simplicity,
we use the notation $\bar{E}(Q;Q_1,q)$ to denote the $n=0$ case of $\bar{E}(Q;Q_1,q;\mathbf{z},\mathbf{w})$.
Indeed, we have $\bar{E}(Q;Q_1,q)=Z(Q;Q_1,q)$.
Then from the relation $T(s)=[w^0]\psi(sw)\psi^*(w)$
and equation \eqref{eqn:qZn as <>},
we have
\begin{align}\label{eqn:qZn as []E}
	Z(Q;Q_1,q;s_1,\dots,s_n)
	=[w_1^0\dots w_n^0]
	\ \frac{\bar{E}(Q;Q_1,q;\mathbf{sw},\mathbf{w})}
	{\bar{E}(Q;Q_1,q)}.
\end{align}
Thus,
to obtain an explicit formula for the $q$-deformed $n$-point function $Z(Q;Q_1,q;\mathbf{s})$,
we need to compute the function $\bar{E}(Q;Q_1,q;\mathbf{z},\mathbf{w})$ defined in equation \eqref{eqn:def E}.
Such vacuum expectation values are widely studied in the literature.
See, for examples, \cite{Eng21,EO06,Mil03,O01,Y25}.
Below,
we derive an infinite product formula for the function $\bar{E}(Q;Q_1,q;\mathbf{z},\mathbf{w})$,
which proves this proposition.
To do that,
we use the formula \eqref{eqn:psi as gamma} to represent the fermionic fields in equation \eqref{eqn:def E}  in terms of vertex operators.
Then we have
\begin{align*}
	\bar{E}(Q;Q_1,q;\mathbf{z},\mathbf{w})
	=\prod_{j=1}^n \sqrt{w_j/z_j}
	\cdot \sum_{\nu\in\mathcal{P}}
	\langle\nu|G_0
	\prod_{j=1}^n \Big(\Gamma_-(z_j)\Gamma_+(z_j^{-1})^{-1}
	\Gamma_-(w_j)^{-1} \Gamma_+(w_j^{-1}) \Big)|\nu\rangle.
\end{align*}
Using the commutation relations \eqref{eqn:comm Gammapminf} to reorder the operators in the above vacuum expectation value,
and restricting ourselves to the following region
\begin{align}
	|Q_1|,|Q|\ll 1,
	\qquad 1<|w_n|<|z_n|<\dots<|w_1|<|z_{1}|< |Q^{-1}|,
\end{align}
we move all the operators $\Gamma_-(\cdot)$ to the left,
and all $\Gamma_+(\cdot)$ to the right.
As a result,
\begin{align}\label{eqn:E as E_{VEV}}
	\begin{split}
	\bar{E}(Q;Q_1,q;\mathbf{z},\mathbf{w})
	=&\prod_{a=1}^\infty (1-Q_1q^{-a})^a
	\cdot \prod_{j=1}^n\prod_{a=1}^\infty \frac{(1+Qw_jq^{-a+\half})(1+Q_1Qz_jq^{-a+\half})}{(1+Qz_jq^{-a+\half})(1+Q_1Qw_jq^{-a+\half})}\\
	&\cdot \prod_{j=1}^n \frac{\sqrt{w_j/z_j}}{1-z_j^{-1}w_j}
	\cdot\prod_{1\leq i<k\leq n} \frac{(1-z_i^{-1}z_k)(1-w_i^{-1}w_k)}{(1-z_i^{-1}w_k)(1-w_i^{-1}z_k)}
	\cdot E_{VEV},
	\end{split}
\end{align}
where
\begin{align*}
	E_{VEV}:=\sum_{\nu\in\mathcal{P}}
	\langle\nu|&\Gamma_-(-q^{\rho})^{-1}\Gamma_-(-Q_1q^{\rho})
	\cdot \prod_{j=1}^n \Big(\Gamma_-(Q_1Qz_j)
	\Gamma_-(Q_1Qw_j)^{-1}\Big)\\
	&\cdot (Q_1Q)^{L_0}
	\cdot \Gamma_+(-Q_1Qq^{\rho})^{-1}\Gamma_+(-Qq^{\rho})
	\cdot \prod_{j=1}^n \Big( \Gamma_+(z_j^{-1})^{-1}\Gamma_+(w_j^{-1}) \Big)|\nu\rangle
\end{align*}
Note that
since $\{|\nu\rangle\}_{\nu\in\mathcal{P}}$ is a basis of the fermionic Fock space $\mathcal{F}^{(0)}$,
and $\{\langle\nu|\}_{\nu\in\mathcal{P}}$ is its dual basis,
the above equation is a trace over the space $\mathcal{F}^{(0)}$.
Thus,
one can first move all the operators $\Gamma_+(\cdot)$ to the far left side using the cyclic property of the trace,
and then use the commutation relations \eqref{eqn:comm Gammapminf} again to move them to the right.
The same is done for the operators $\Gamma_-(\cdot)$.
Iterating this process yields the following infinite product formula for the function $E_{VEV}$
\begin{align}\label{eqn: for Evev}
	\begin{split}
		E_{VEV}
		=&\prod_{b=1}^\infty
		\frac{M\big((Q_1Q)^{b}Q_1;q\big)M\big((Q_1Q)^{b-1}Q;q\big)}{(1-(Q_1Q)^{b})\cdot M\big((Q_1Q)^{b};q\big)^2}
		\cdot \prod_{i,k=1}^n \frac{\big(Q_1Qz_i^{-1}z_k,Q_1Qw_i^{-1}w_k;Q_1Q\big)_{\infty}}
		{\big(Q_1Qz_i^{-1}w_k,Q_1Qw_i^{-1}z_k;Q_1Q\big)_{\infty}}\\
		&\cdot \prod_{j=1}^n\prod_{a=1}^\infty \frac{\big(-Q_1^2Q^2z_jq^{-a+\half},-Q_1Q^2w_jq^{-a+\half},-Q_1z_j^{-1}q^{-a+\half},-w_j^{-1}q^{-a+\half};Q_1Q\big)_{\infty}}
		{\big(-Q_1Q^2z_jq^{-a+\half},-Q_1^2Q^2w_jq^{-a+\half},-z_j^{-1}q^{-a+\half},-Q_1w_j^{-1}q^{-a+\half};Q_1Q\big)_{\infty}},
	\end{split}
\end{align}
where $M(\cdot;\cdot)$ is the MacMahon formula given by $M(z;q)=\prod_{j=1}^\infty(1-zq^{-j})^j$,
and we have used the formula
\begin{align*}
	\sum_{\nu\in\mathcal{P}}
	\langle\nu| (Q_1Q)^{L_0} |\nu\rangle
	=\sum_{\nu\in\mathcal{P}}
	(Q_1Q)^{|\nu|}
	=\prod_{b=1}^{\infty} \frac{1}{1-(Q_1Q)^b},
\end{align*}
and the fact that
as operators on the charge zero fermionic Fock space $\mathcal{F}^{(0)}$,
$$\Gamma_+(Q^M\cdot)^{\pm},\ \Gamma_-(Q^M\cdot)^{\pm} \rightarrow \text{id},
\text{\ when\ } M\rightarrow \infty.$$

Combing the above formula \eqref{eqn: for Evev} of $E_{VEV}$ and equation \eqref{eqn:E as E_{VEV}},
we can obtain an infinite product formula for the function $\bar{E}(Q;Q_1,q;\mathbf{z},\mathbf{w})$.
We omit the details and
in particular, for the case of $n=0$, we have
\begin{align}\label{eqn:qZ formula n=0}
	Z(Q;Q_1,q)
	=\bar{E}(Q;Q_1,q)
	=M(Q_1;q)
	\cdot\prod_{b=1}^\infty
	\frac{M\big((Q_1Q)^{b}Q_1;q\big)\cdot M\big((Q_1Q)^{b-1}Q;q\big)}
	{(1-(Q_1Q)^b)\cdot M\big((Q_1Q)^{b};q\big)^2}.
\end{align}
By comparing equations \eqref{eqn:E as E_{VEV}}, \eqref{eqn: for Evev} with the function $E(Q;Q_1,q;\mathbf{z},\mathbf{w})$ defined in equation \eqref{eqn:def E first},
we have
\begin{align}\label{eqn:barE and E}
	\frac{\bar{E}(Q;Q_1,q;\mathbf{z},\mathbf{w})}
	{\bar{E}(Q;Q_1,q)}
	=E(Q;Q_1,q;\mathbf{z},\mathbf{w})
	\cdot\prod_{j=1}^n \frac{\sqrt{w_j/z_j}}{1-z_j^{-1}w_j}.
\end{align}
Together with formula \eqref{eqn:qZn as []E},
the equation \eqref{eqn:qZn as []} is proved.
\end{proof}

\begin{rmk}\label{rmk:new pf Z}
	Note that in equation \eqref{eqn:qZ as t-core},
	we showed that the generating function for the number of $t$-core partitions can be expressed in terms of the $q$-deformed partition function,
	which is computed in equation \eqref{eqn:qZ formula n=0}.
	Thus we have
	\begin{align}
		\begin{split}
		\sum_{\nu\in\mathcal{P}_{\text{t-core}}} Q^{|\nu|}
		=&\lim_{q\rightarrow1^-}
		\frac{Z(Q;Q_1,q)}{\prod_{j,k=1}^\infty (1-Q_1 q^{-j-k+1})}|_{q\rightarrow\xi_tq, Q_1\rightarrow q^t}\\
		=&\lim_{q\rightarrow1^-}
		\prod_{b=1}^\infty
		\frac{\prod_{n=1}^\infty (1-Q^bq^{(b+1)t-n})^n(1-Q^{b}q^{(b-1)t-n})^n}
		{(1-Q^bq^{bt}) \prod_{n=1}^\infty (1-Q^bq^{bt-n})^{2n}}|_{q\rightarrow\xi_tq}\\
		=&\prod_{b=1}^\infty
		\frac{\prod_{n=1}^t(1-Q^b\xi_t^{-n})^t}
		{(1-Q^b)}
		=\prod_{b=1}^\infty
		\frac{(1-Q^{bt})^t}
		{(1-Q^b)},
		\end{split}
	\end{align}
	which gives a new proof of the formula \eqref{eqn:size-generating function t-core lambda}.
\end{rmk}

\subsection{The determinant formula for the $q$-deformed $n$-point function}
In this subsection,
using a version of Wick Theorem proved in \cite{WYZ25},
we give another formula for the $q$-deformed $n$-point function $Z(Q;Q_1,q;s_1,\dots,s_n)$ as a certain coefficient of a matrix determinant.
This formula will be used to derive the closed formula for the $n$-point function of $t$-core partitions.

We first recall the following version of Wick Theorem proved by Wang, Zhou and the author in Lemma 5.1 of \cite{WYZ25}, which is a generalization of Lemma B.1 in \cite{BB19},
\begin{lem}\label{lem:Wick}
Let $\varphi_1, \cdots, \varphi_n$
be linear combinations of the free fermions $\{\psi_r\}_{r\in \bZ +\half}$,
and let $\varphi_1^*, \cdots, \varphi_n^* $
be linear combinations of $\{\psi_r^*\}_{r\in \bZ +\half}$.
Suppose that $G_1,\cdots,G_n,G_{n+1}$ are some fermionic operators satisfying the
Hirota bilinear relation
\begin{align}\label{eqn:Hirota}
	[G\otimes G, \sum_{r\in \bZ+\half} \psi_r^* \otimes \psi_{r}] = 0,
\end{align}
and assume that
$\sum_{N\in \bZ}\sum_{\mu\in \cP} \langle\mu| R^N \cdot
G_1  G_2 \cdots  G_{n+1}
\cdot R^{-N} |\mu\rangle \not= 0$.
Then we have
\begin{align}
\label{eq-lem-det}
\begin{split}
	 \frac{\sum_{N\in \bZ}\sum_{\mu\in \cP} \langle\mu| R^N \cdot
		G_1 \varphi_1\varphi_1^* G_2 \varphi_2\varphi_2^* \cdots  \varphi_n\varphi_n^* G_{n+1}
		\cdot R^{-N} |\mu\rangle}
	{\sum_{N\in \bZ}\sum_{\mu\in \cP} \langle\mu| R^N \cdot
		G_1  G_2 \cdots  G_{n+1}
		\cdot R^{-N} |\mu\rangle} 
	= \det(A_{ij})_{i,j=1}^n,
\end{split}
\end{align}
where
\begin{equation*}
	\begin{split}
		A_{ij} =\begin{cases}
			\frac{\sum_{N\in \bZ}\sum_{\mu\in \cP} \langle\mu| R^N \cdot
				G_1 G_2 \cdots G_i\varphi_iG_{i+1}
				\cdots G_{j} \varphi_j^* G_{j+1}  \cdots G_{n+1}
				\cdot R^{-N} |\mu\rangle}
			{\sum_{N\in \bZ}\sum_{\mu\in \cP} \langle\mu| R^N \cdot
				G_1  G_2 \cdots  G_{n+1}
				\cdot R^{-N} |\mu\rangle}, & \text{if\ }i\leq j,\\
				-\frac{\sum_{N\in \bZ}\sum_{\mu\in \cP} \langle\mu| R^N \cdot
				G_1 G_2 \cdots G_j\varphi_jG_{j+1}
				\cdots G_{i} \varphi_i^* G_{i+1}  \cdots G_{n+1}
				\cdot R^{-N} |\mu\rangle}
				{\sum_{N\in \bZ}\sum_{\mu\in \cP} \langle\mu| R^N \cdot
				G_1  G_2 \cdots  G_{n+1}
				\cdot R^{-N} |\mu\rangle}, & \text{if\ }i>j.
		\end{cases}
	\end{split}
\end{equation*}
\end{lem}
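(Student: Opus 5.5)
The plan is to reduce the statement to a Wick theorem for a quasi-free (Gaussian) trace functional on the Clifford algebra. First, the sum $\sum_{N\in\bZ}\sum_{\mu\in\cP}\langle\mu|R^N\,\cdot\,R^{-N}|\mu\rangle$ is the trace over the full fermionic Fock space $\mathcal{F}=\bigoplus_N\mathcal{F}^{(N)}$. This is because $\{R^{-N}|\mu\rangle\}_{N,\mu}$ is an orthonormal basis of $\mathcal{F}$ and $\langle\mu|R^N=(R^{-N}|\mu\rangle)^*$. So, writing $\operatorname{Tr}$ for the trace on $\mathcal{F}$, both numerators and the denominator in \eqref{eq-lem-det} are traces. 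It is essential that the trace runs over all charges: only then is the functional cyclic, and only then are the fermions (which change charge) allowed inside it.

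Next I would use the Hirota bilinear relation \eqref{eqn:Hirota}. It is the standard characterization of group-like elements of $\widehat{GL}(\infty)$: each $G_k$ conjugates $\psi_r$ into a linear combination of the $\psi_s$, and $\psi^*_r$ into a linear combination of the $\psi^*_s$. Concretely, I would deduce from \eqref{eqn:Hirota} that $G_k\psi_r=\sum_s a_{sr}\psi_sG_k$, and similarly for $\psi^*$, with the dual matrix. Using this, I would push every $\varphi_i,\varphi_i^*$ to the right past $G_{i+1}\cdots G_{n+1}$. This rewrites each numerator as $\operatorname{Tr}(G\,\tilde\varphi_1\tilde\varphi_1^*\cdots\tilde\varphi_n\tilde\varphi_n^*)$ with $G=G_1\cdots G_{n+1}$, where the tilde-fermions are again linear combinations of $\psi$'s (resp.\ $\psi^*$'s). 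The entries $A_{ij}$ transform the same way. Hence it suffices to prove the classical Wick formula for $\omega(X)=\operatorname{Tr}(GX)/\operatorname{Tr}(G)$ with a single group-like $G$. The sign convention for $i>j$ in $A_{ij}$ corresponds exactly to the reordering $\tilde\varphi_j\tilde\varphi_i^*=-\tilde\varphi_i^*\tilde\varphi_j$ relative to the operator ordering.

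For the Wick property of $\omega$, I would argue by induction on $n$ using the cyclicity trick. Take a single fermion $a$ and write $X=x_1\cdots x_m$. Anticommuting $a$ through $X$ gives
\[
\operatorname{Tr}(G\,aX)=\sum_k \pm\{a,x_k\}\operatorname{Tr}(G\,x_1\cdots\widehat{x_k}\cdots x_m)\pm\operatorname{Tr}(G\,X a).
\]
By cyclicity and $aG=G\,(G^{-1}aG)$, the last term equals $\pm\operatorname{Tr}(G\,(G^{-1}aG)X)$. Therefore $\omega\big(((1+\operatorname{Ad}_{G^{-1}})a)X\big)$ expands as a sum of anticommutators times $\omega$ of shorter words. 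Inverting $1+\operatorname{Ad}_{G^{-1}}$ on the span of fermions then shows that $\omega(aX)$ is a signed sum of two-point functions times $\omega$ of shorter words. Since $\omega$ vanishes on words with unequal numbers of $\psi$'s and $\psi^*$'s, this recursion is exactly the Laplace expansion of $\det(\omega(\tilde\varphi_i\tilde\varphi_j^*))$ along the first row, and induction closes the argument.

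The main obstacle is making this last step rigorous in the infinite-dimensional setting. One must know that $\operatorname{Tr}(G\,\cdot)$ converges and is genuinely cyclic for the operators involved, and that $1+\operatorname{Ad}_{G^{-1}}$ is invertible on the relevant space of fermions. I would first try to handle this by working in the convergence region \eqref{eqn:region}, where $G$ carries a damping factor like $(Q_1Q)^{L_0}$ and the traces converge absolutely. The hypothesis that the denominator is nonzero plays the role of the invertibility condition. An alternative, if this becomes delicate, is to quote the finite-rank version (Lemma B.1 of \cite{BB19}) and pass to the limit.
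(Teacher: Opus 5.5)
The paper does not prove this lemma. It quotes it from \cite{WYZ25} (Lemma 5.1) as a generalization of Lemma B.1 of \cite{BB19}, so your argument has to stand on its own.

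Its algebraic skeleton is sound. The double sum is $\operatorname{Tr}_{\mathcal F}$, and your conjugations act identically on the numerator and on every $A_{ij}$. For $m$ odd, the identity $\omega\big(((1+\mathrm{Ad}_{G^{-1}})a)\,x_1\cdots x_m\big)=\sum_k(-1)^{k-1}\{a,x_k\}\,\omega(x_1\cdots\widehat{x_k}\cdots x_m)$ is correct. Once $b=(1+\mathrm{Ad}_{G^{-1}})a$ is available, it is the Pfaffian recursion, which under charge conservation gives the determinant. (A small correction: the minus sign for $i>j$ is Pfaffian bookkeeping for the pair $\tilde\varphi_j,\tilde\varphi_i^*$ in their original order. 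It is not a reordering, since $\tilde\varphi_j$ and $\tilde\varphi_i^*$ do not anticommute.)

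Two steps are not justified. The first is that \eqref{eqn:Hirota} alone does not give $G\psi_r=\sum_s a_{sr}\psi_sG$. Take $G=|0\rangle\langle0|$: it satisfies \eqref{eqn:Hirota}, since both sides vanish. Yet for $r<0$ the operator $G\psi_r$ is nonzero with image in charge $0$, while every $\psi_sG$ has image in charge $1$. So you must add invertibility (group-likeness) of the $G_k$. This holds where the paper uses the lemma ($G_1=Q_2^{-C}G_0$, all other $G_k=\mathrm{id}$), but it is not among the stated hypotheses.

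The second step is the real gap. You need, for each $b\in\{\tilde\varphi_i,\tilde\varphi_j^*\}$, some $a$ with $(1+\mathrm{Ad}_{G^{-1}})a=b$, in a space of infinite fermion combinations on which $\operatorname{Tr}(G\,\cdot)$ converges and is cyclic. This carries all the content, and you leave it open. The hypothesis $\operatorname{Tr}(G)\neq0$ does not supply it. Your own cyclicity trick only gives injectivity: if $G^{-1}aG=-a$ and $\{a,b\}=1$, then $\operatorname{Tr}(G)=\operatorname{Tr}(Gab)+\operatorname{Tr}(aGb)=0$. Surjectivity is a separate issue in infinite dimensions. Here $\mathrm{Ad}_{G^{-1}}$ contains the unbounded rescaling by $(Q_1Q)^{-L_0}$ and, through the vertex operators, mixes the modes $\psi_k$. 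So there is no evident Neumann series or mode-by-mode inversion. Passing to a limit from a finite-rank statement would need exactly the same control.

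In the paper's situation you can close the gap as follows:
\begin{itemize}
\item Conjugation by $\Gamma_\pm$ only multiplies $\psi(z)$ and $\psi^*(w)$ by scalar functions. So cycling the vertex operators around the trace, as in the computation of $E_{VEV}$, turns the numerator, the denominator and each $A_{ij}$ into traces with a diagonal weight $G=u^{C}x^{L_0}$ (with $x=Q_1Q$) and dressed fields.
\item The scalar produced by commuting the vertex operators among themselves is common to all these traces, so it cancels.
\item For such $G$ one has $\mathrm{Ad}_{G^{-1}}\psi_k=u^{-1}x^{-k}\psi_k$ and $\operatorname{Tr}_{\mathcal F}G=\Theta_3(u;x)/(x;x)_\infty$.
\item Hence $1+\mathrm{Ad}_{G^{-1}}$ is diagonal with entries $1+u^{-1}x^{-k}$, and it is invertible exactly when the denominator is nonzero.
\end{itemize}
Your Gaudin-type argument then goes through verbatim.
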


The main result of this subsection is the following formula for the $q$-deformed $n$-point function $Z(Q;Q_1,q;s_1,\dots,s_n)$.
\begin{prop}\label{thm:qZn as det}
	We have
	\begin{align}\label{eqn:qZn as det}
		\begin{split}
		Z(Q;Q_1,q;s_1,&\dots,s_n)
		=\frac{s_{[n]}^{-1/2}}{\Theta_3(Q_2;Q_1Q)^{n-1}\cdot \Theta_3(Q_2s_{[n]}^{-1};Q_1Q)}\\
		&\cdot[w_1^0\dots w_n^0]\det \Big(\frac{\Theta_3\big(Q_2s_i^{-1}w_i^{-1}w_j;Q_1Q\big)E(Q;Q_1,q;s_iw_i,w_j)}
		{(1-s_i^{-1}w_i^{-1}w_j)}\Big)_{i,j=1}^n,
		\end{split}
	\end{align}
	where we have used $s_{[n]}:=\prod_{j=1}^n s_j$, $\Theta_3(z;Q):=\sum_{a\in\mathbb{Z}} z^a Q^{\frac{a^2}{2}}$,
	and we recall that the function $E(Q;Q_1,q;z,w)$ satisfies ($n=1$ case of equation \eqref{eqn:def E first})
	\begin{align}\label{eqn:def E first 1-variable}
		\begin{split}
			E(Q;&Q_1,q;z,w)= \frac{\big(Q_1Q;Q_1Q\big)_{\infty}^2}
			{\big(Q_1Qz^{-1}w,Q_1Qw^{-1}z;Q_1Q\big)_{\infty}}\\
			&\cdot \prod_{a=1}^\infty \frac{\big(-Q_1Qzq^{-a+\half},-Qwq^{-a+\half},-Q_1z^{-1}q^{-a+\half},-w^{-1}q^{-a+\half};Q_1Q\big)_{\infty}}
			{\big(-Qzq^{-a+\half},-Q_1Qwq^{-a+\half},-z^{-1}q^{-a+\half},-Q_1w^{-1}q^{-a+\half};Q_1Q\big)_{\infty}}.
		\end{split}
	\end{align}
\end{prop}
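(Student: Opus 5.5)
The plan is to run the Wick theorem of Lemma \ref{lem:Wick} on the vacuum expectation value formula \eqref{eqn:qZn as <>}. Lemma \ref{lem:Wick} is stated for sums over \emph{all} charge sectors $\sum_{N\in\bZ}\sum_{\mu}\langle\mu|R^N\cdots R^{-N}|\mu\rangle$, whereas \eqref{eqn:qZn as <>} and \eqref{eqn:def E} are charge-zero traces. So I first insert an auxiliary operator $Q_2^{C}$ next to the operator $(-Q)^{L_0}$ in $G_0$.

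\textbf{Step 1: relate each charge-$N$ sector to charge zero.} Using the conjugation relations \eqref{eqn:comm RCH} and \eqref{eqn:comm RGamma}, all $\Gamma_\pm$ commute with $R^N$. The operators $(-Q_1)^{L_0}(-Q)^{L_0}$ produce $(Q_1Q)^{N^2/2}$ on the charge-$N$ sector (the $-NC$ term vanishes on charge zero). The insertion $Q_2^{C}$ gives $Q_2^{N}$. The fermion fields pick up $(w_j/z_j)^N$, from $R^N\psi(z)R^{-N}=z^{-N}\psi(z)$ and its $\psi^*$ analogue. Summing over $N$ gives
\[
\sum_{N}\sum_\mu\langle\mu|R^N(\cdots)R^{-N}|\mu\rangle=\Theta_3\Big(Q_2\prod_j w_j/z_j;\,Q_1Q\Big)\cdot \bar E(Q;Q_1,q;\mathbf z,\mathbf w).
\]
For $n=0$ this reads $\Theta_3(Q_2;Q_1Q)\,Z(Q;Q_1,q)$. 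Setting $z_j=s_jw_j$ turns $\prod_j w_j/z_j$ into $s_{[n]}^{-1}$. This accounts for the prefactor $\Theta_3(Q_2s_{[n]}^{-1};Q_1Q)^{-1}$.

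\textbf{Step 2: apply Lemma \ref{lem:Wick}.} Take $G_1=G_0Q_2^{C}$, $G_2=\dots=G_{n+1}=\mathrm{id}$, $\varphi_i=\psi(z_i)$ and $\varphi_i^*=\psi^*(w_i)$. The Hirota condition \eqref{eqn:Hirota} holds because vertex operators, $Q^{L_0}$ and $Q_2^{C}$ are exponentials of fermion bilinears, i.e.\ $GL(\infty)$ group elements. Each $\psi(z),\psi^*(w)$ is a (formal, convergent in region \eqref{eqn:region}) linear combination of fermions.

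The lemma then says the full-charge $n$-point ratio equals $\det(A_{ij})$. For $i\le j$, the entry $A_{ij}$ is the normalized full-charge two-point function with $\psi(z_i)$ and $\psi^*(w_j)$. By Step 1 and \eqref{eqn:barE and E} it equals
\[
\frac{\Theta_3(Q_2w_j/z_i;Q_1Q)}{\Theta_3(Q_2;Q_1Q)}\, E(Q;Q_1,q;z_i,w_j)\,\frac{\sqrt{w_j/z_i}}{1-z_i^{-1}w_j}.
\]
Dividing by Step 1's theta factor shows that the charge-zero ratio $\bar E/\bar E(Q;Q_1,q)$ equals
\[
\frac{\Theta_3(Q_2;Q_1Q)}{\Theta_3(Q_2s_{[n]}^{-1};Q_1Q)}\det(A_{ij}).
\]
Pulling the $n$ factors $\Theta_3(Q_2;Q_1Q)^{-1}$ out of the rows leaves $\Theta_3(Q_2;Q_1Q)^{1-n}$. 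In every permutation term the square roots multiply to $\prod_i\sqrt{w_{\sigma(i)}/(s_iw_i)}=s_{[n]}^{-1/2}$, so this factor comes out of the determinant. Finally, I substitute $z_i=s_iw_i$ and take $[w_1^0\cdots w_n^0]$ via \eqref{eqn:qZn as []E}, which yields \eqref{eqn:qZn as det}.

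\textbf{Main obstacle.} The hard part is the entries with $i>j$, where $\psi(z_j)$ stands to the left of $\psi^*(w_i)$ with other fields in between in the original ordering. I would handle these by cyclicity of the (full-charge) trace: move $\psi^*(w_i)$ around the trace and anticommute. The sign from this reordering should cancel the explicit minus sign in Lemma \ref{lem:Wick}. The result is the same function $E(z_j,w_i)\sqrt{w_i/z_j}/(1-z_j^{-1}w_i)$, now expanded in the opposite region and so analytically continued. This must be checked against the product formula \eqref{eqn: for Evev}, using the quasi-periodicity of the $(Q_1Q;Q_1Q)_\infty$-type factors. I would verify this carefully, together with convergence in the region \eqref{eqn:region}, before concluding.
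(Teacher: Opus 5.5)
Your proposal is essentially the paper's proof. The paper also passes to the full-charge trace $\tilde E$ with an inserted $Q_2^{-C}$, extracts the factor $\Theta_3(Q_2\prod_j w_j/z_j;Q_1Q)$, applies Lemma \ref{lem:Wick} to the single group element $Q_2^{-C}G_0$, and asserts that the $i\le j$ and $i>j$ entries are the same function, after which the bookkeeping of $\Theta_3(Q_2;Q_1Q)^{1-n}$ and $s_{[n]}^{-1/2}$ is exactly yours.

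Two small slips should be corrected. First, with \eqref{eqn:comm RCH}, $R^NQ_2^{C}R^{-N}$ acts as $Q_2^{-N}$ on $\mathcal{F}^{(0)}$, so inserting $Q_2^{C}$ only renames $Q_2\to Q_2^{-1}$; use $Q_2^{-C}$ as the paper does. Second, your $i>j$ entry is misindexed. The correlator that actually enters is $-\langle\psi^*(w_j)\,\psi(s_iw_i)\rangle$: the printed Lemma \ref{lem:Wick} puts the star on the wrong index, but the paper's application uses this ordering. As written, your version would give entries depending on $(z_j,w_i)$, which is not the matrix you assemble at the end. With the correct ordering, $\psi(z)\psi^*(w)+\psi^*(w)\psi(z)=\sum_k z^kw^{-k}$ shows the entry is the expansion in $|w_j|>|s_iw_i|$ of the same function of $(s_iw_i,w_j)$ as in the $i\le j$ case, so no cyclicity argument is needed.
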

\begin{proof}
In the proof of Proposition \ref{prop:qZn as []},
we have obtained (see equation \eqref{eqn:qZn as []E})
\begin{align}\label{eqn:qZn as []Edet}
	Z(Q;Q_1,q;s_1,\dots,s_n)
	=[w_1^0\dots w_n^0]
	\frac{\bar{E}(Q;Q_1,q;\mathbf{sw},\mathbf{w})}
	{\bar{E}(Q;Q_1,q)},
\end{align}
where 

\begin{align}\label{eqn:def Edet}
	\begin{split}
		\bar{E}(Q;Q_1,q;\mathbf{z},\mathbf{w})
		=\sum_{\nu\in\mathcal{P}}
		\langle\nu|G_0
		\prod_{j=1}^n\Big(\psi(z_i)\psi^*(w_i)\Big)|\nu\rangle,
	\end{split}
\end{align}
and the operator $G_0$ is given in equation \eqref{def:G_0}.
Notice that,
the function
$\bar{E}(Q;Q_1,q;\mathbf{z},\mathbf{w})$ is a trace of certain vertex operators and fermionic fields over the charge zero fermionic Fock space $\mathcal{F}^{(0)}$.
To apply Lemma \ref{lem:Wick},
we introduce the following new function
\begin{align}\label{eqn:def tildeEdet}
	\begin{split}
		\tilde{E}(Q;Q_1,Q_2,q;\mathbf{z},\mathbf{w})
		:=
		\sum_{N\in\mathbb{Z}}
		\sum_{\nu\in\mathcal{P}}
		\langle\nu|R^NQ_2^{-C}G_0
		\prod_{j=1}^n\big(\psi(z_i)\psi^*(w_i)\big)
		R^{-N}|\nu\rangle,
	\end{split}
\end{align}
which is a trace over the full fermionic Fock space $\mathcal{F}$.
Then, using the commutation relations \eqref{eqn:comm RCH} and \eqref{eqn:comm RGamma} for the operators $R^N$ and $C, H, \Gamma_\pm(z), \psi(z), \psi^*(w)$,
we obtain the following relation between functions $\tilde{E}(Q;Q_1,Q_2,q;\mathbf{z},\mathbf{w})$
and $\bar{E}(Q;Q_1,q;\mathbf{z},\mathbf{w})$
\begin{align*}
	\tilde{E}(Q;Q_1,Q_2,q;\mathbf{z},\mathbf{w})
	=&\sum_{N\in\mathbb{Z}}
	(Q_1Q)^{N^2/2} \cdot \Big(Q_2\prod_{j=1}^n (w_j/z_j)\Big)^N
	\cdot \bar{E}(Q;Q_1,q;\mathbf{z},\mathbf{w})\\
	=&\Theta_3\Big(Q_2\prod_{j=1}^n (w_j/z_j);Q_1Q\Big)
	\cdot \bar{E}(Q;Q_1,q;\mathbf{z},\mathbf{w}).
\end{align*}
Thus,
from equation \eqref{eqn:qZn as []Edet},
the $q$-deformed $n$-point function can be represented in terms of $\tilde{E}(Q;Q_1,Q_2,q;\mathbf{z},\mathbf{w})$ by
\begin{align}\label{eqn:qZn as []tildeEdet}
	\begin{split}
	Z(Q;Q_1,q;s_1,\dots,s_n)
	=
	\frac{\Theta_3(Q_2;Q_1Q)}{\Theta_3(Q_2s_{[n]}^{-1};Q_1Q)}
	\cdot[w_1^0\dots w_n^0]
	\frac{\tilde{E}(Q;Q_1,Q_2,q;\mathbf{sw},\mathbf{w})}
	{\tilde{E}(Q;Q_1,Q_2,q)},
	\end{split}
\end{align}
To compute the function $\tilde{E}(Q;Q_1,Q_2,q;\mathbf{z},\mathbf{w})$ appearing in the last term of the above equation,
note that $G_0\in\widehat{GL(\infty)}$ satisfies the Hirota bilinear relation \eqref{eqn:Hirota},
so Lemma \ref{lem:Wick} applies.
Thus
the last term $\tilde{E}(Q;Q_1,Q_2,q;\mathbf{sw},\mathbf{w})/\tilde{E}(Q;Q_1,Q_2,q)$ is equal to the determinant $\det(A_{i,j})_{i,j=1}^n$,
where elements of the matrix $(A_{i,j})_{1\leq i,j \leq n}$ are given by
\begin{align*}
	A_{i,j}=
	\frac{\sum_{N\in\mathbb{Z}}
	\sum_{\nu\in\mathcal{P}}
	\langle\nu|R^NQ_2^{-C}G_0
	\psi(s_iw_i)\psi^*(w_j)
	R^{-N}|\nu\rangle}
	{\sum_{N\in\mathbb{Z}}
		\sum_{\nu\in\mathcal{P}}
		\langle\nu|R^NQ_2^{-C}G_0
		R^{-N}|\nu\rangle}
\end{align*}
if $i\leq j$,
and
\begin{align*}
	A_{i,j}=
	-\frac{\sum_{N\in\mathbb{Z}}
		\sum_{\nu\in\mathcal{P}}
		\langle\nu|R^NQ_2^{-C}G_0
		\psi^*(w_j)\psi(s_iw_i)
		R^{-N}|\nu\rangle}
	{\sum_{N\in\mathbb{Z}}
		\sum_{\nu\in\mathcal{P}}
		\langle\nu|R^NQ_2^{-C}G_0
		R^{-N}|\nu\rangle}
\end{align*}
if $i>j$.
The entries $A_{i,j}, 1\leq i,j \leq n$ can be computed explicitly using the same method used as in Proposition \ref{prop:qZn as []}.
The result is,
for both of the two cases $i\leq j$ and $i>j$,
\begin{align*}
	A_{i,j}
	=\frac{\Theta_3\big(Q_2s_i^{-1}w_i^{-1}w_j;Q_1Q\big)E(Q;Q_1,q;s_iw_i,w_j)\sqrt{s_i^{-1}w_i^{-1}w_j}}
	{\Theta_3(Q_2;Q_1Q)(1-s_i^{-1}w_i^{-1}w_j)},
\end{align*}
where the function $E(Q;Q_1,q;,w_j)$ is given in equation \eqref{eqn:def E first 1-variable}.
Consequently,
following equation \eqref{eqn:qZn as []tildeEdet},
the $q$-deformed $n$-point function has the following formula
\begin{align}\label{eqn:qZn as []detdet}
	\begin{split}
	Z(Q;&Q_1,q;s_1,\dots,s_n)=\frac{\Theta_3(Q_2;Q_1Q)}{\Theta_3(Q_2s_{[n]}^{-1};Q_1Q)}
	\cdot[w_1^0\dots w_n^0]
	\det(A_{i,j})_{i,j=1}^n\\
	=&\frac{s_{[n]}^{-1/2}\Theta_3(Q_2;Q_1Q)}{\Theta_3(Q_2s_{[n]}^{-1};Q_1Q)}
	\cdot[w_1^0\dots w_n^0]
	\det\Bigg(\frac{\Theta_3\big(Q_2s_i^{-1}w_i^{-1}w_j;Q_1Q\big)E(Q;Q_1,q;s_iw_i,w_j)}
	{\Theta_3(Q_2;Q_1Q)(1-s_i^{-1}w_i^{-1}w_j)}\Bigg),
	\end{split}
\end{align}
which is equivalent to the desired equation \eqref{eqn:qZn as det}.
\end{proof}

\section{Closed formula for the $n$-point function of $t$-core partitions}
\label{sec:for Ft}
In this section,
we provide formulas for the $n$-point function of $t$-core partitions in terms of theta functions.

\subsection{Two formulas for the $n$-point function of $t$-core partitions}
As corollaries of the results in the previous section,
we give two formulas for the $n$-point function of $t$-core partitions,
each expressed as a coefficient of a function.

Following the notations in \cite{Eng21,EO06},
the theta function $\vartheta(z;Q)$ is defined to be
\begin{align}\label{eqn:def vartheta}
	\vartheta(z;Q):=&(z^{1/2}-z^{-1/2})
	\cdot \prod_{b=1}^\infty \frac{(1-zQ^b)(1-z^{-1}Q^b)}{(1-Q^b)^2}.
\end{align}
One can consider the following function
\begin{align}\label{eqn:def j}
	j(z;Q)
	:=\sum_{a\in\mathbb{Z}} (-z)^a Q^{\frac{a^2-a}{2}}
	=\prod_{b=1}^\infty (1-Q^b)(1-zQ^{b-1})(1-z^{-1}Q^b),
\end{align}
where the second equal sign follows from the famous Jacobi triple product formula (see \cite{And98}).
For fixed $|Q|<1$,
the above sum over $a\in\mathbb{Z}$ is absolutely convergent.
Thus the function $j(z;Q)$ is an entire function on the non-zero complex plane $z\in\mathbb{C}\setminus\{0\}$.
Moreover,
$z=1$ is a zero of $j(z;Q)$, 
and $j'(1;Q)=\prod_{b=1}^\infty(1-Q^b)^3$.
Then
by comparing the infinite product formulas \eqref{eqn:def vartheta} and \eqref{eqn:def j} of $j(z;Q)$ and $\vartheta(z;Q)$, respectively,
the theta function $\vartheta(z;Q)$ can be represented as
$$\vartheta(z;Q)=z^{-1/2}\cdot \frac{j(z;Q)}{j'(1,Q)}.$$
As a result, $\vartheta(z;Q)$ can be regarded as a two-valued analytic function on $z\in\mathbb{C}\setminus\{0\}$,
and moreover $\vartheta'(1;Q)=1$.
From properties of $j(z;Q)$,
it is easy to verify the following transformation rules satisfied by the theta function $\vartheta(z;Q)$,
\begin{align}\label{eqn:mod theta}
	\vartheta(Qz;Q)=-Q^{-1/2} z^{-1} \vartheta(z;Q),
	\qquad\vartheta(Q^{-1}z;Q)=-Q^{-1/2} z \vartheta(z;Q).
\end{align}
Moreover,
we have $\vartheta'(Q;Q)=-Q^{-3/2}$ and $\vartheta'(Q^{-1};Q)=-Q^{1/2}$.
These values will be frequently used in deriving the closed formula for the $n$-point function of $t$-core partitions in subsection \ref{sec:closed formula}.
We remark that since $\vartheta(z;Q)$ is a two-valued function,
one actually has $\vartheta(e^{2\pi i}z;Q)=-\vartheta(z;Q)$ and $\vartheta(z^{-1};Q)=-\vartheta(z;Q)$,
$\vartheta(-z^{-1};Q)=\vartheta(-z;Q)$.

In the last section,
we already used another theta function as
\begin{align}\label{eqn:def theta3}
	\Theta_3(z;Q):=\sum_{a\in\mathbb{Z}} z^a Q^{\frac{a^2}{2}}
	=\prod_{b=1}^\infty (1-Q^b)(1+zQ^{b-1/2})(1+z^{-1}Q^{b-1/2}).
\end{align}
By comparing equations \eqref{eqn:def j} and \eqref{eqn:def theta3},
we have $\Theta_3(z;Q)=j(-zQ^{1/2};Q)$.
Then still from the properties of $j(z;Q)$,
we have $\Theta_3(Qz;Q)=Q^{-1/2}z^{-1} \Theta_3(z;Q)$.
Notice that the theta function $\Theta_3(z;Q)$ is two-valued with respect to the variable $Q$
and one has $\Theta_3(z;e^{2\pi i}Q)=\Theta_3(-z;Q)$.

Based on Theorem \ref{prop:qZn lim Zn} and Proposition \ref{prop:qZn as []},
the first formula for the $n$-point function of $t$-core partitions can be summarized as the following
\begin{cor}
	For the $n$-point function of $t$-core partitions $F_t(s_1,\dots,s_n)$,
	we have
	\begin{align}\label{eqn:first eqn for F_t}
		\begin{split}
		F_t(Q;s_1,\dots,s_n)
		=&\prod_{j=1}^n \frac{s_j^{-t/2}}
		{\vartheta(s_j)}
		\cdot [w_1^0\dots w_n^0]\\
		&\qquad
		\prod_{i<k}^n \frac{\vartheta(s_i^{-1}s_kw_i^{-1}w_k)
			\vartheta(w_i^{-1}w_k)}
		{\vartheta(s_i^{-1}w_i^{-1}w_k)
			\vartheta(s_kw_i^{-1}w_k)}
		\cdot \prod_{j=1}^n\prod_{a=1}^t
		\frac{\vartheta(-s_jw_j\xi_t^{a})}
		{\vartheta(-w_j\xi_t^{a})}
		\end{split}
	\end{align}
	in the region \eqref{eqn:region},
	where the second argument $Q$ of the theta function will be suppressed when no ambiguity arises.
\end{cor}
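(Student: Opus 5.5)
The plan is to combine Theorem \ref{prop:qZn lim Zn} with Proposition \ref{prop:qZn as []}. First I substitute $q\to\xi_t q$ and $Q_1\to q^t$ into the product $E(Q;Q_1,q;\mathbf{sw},\mathbf{w})$ of \eqref{eqn:def E first}. Then I let $q\to1^-$ factor by factor, and finally rewrite every surviving factor as a ratio of theta functions $\vartheta(\cdot;Q)$ using \eqref{eqn:def vartheta}. Two facts make the limit and the coefficient extraction commute. The prefactor $\prod_j \sqrt{s_j^{-1}}/(1-s_j^{-1})$ does not depend on $q$. And in the region \eqref{eqn:region} the Laurent expansion in the $w_j$ converges uniformly as $q\to 1^-$, so taking $[w_1^0\cdots w_n^0]$ commutes with the limit.

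The first two groups of factors are straightforward. In the limit $Q_1Q\to Q$, so the cross terms become
\begin{align*}
\prod_{i<k}\frac{(1-z_i^{-1}z_k)(1-w_i^{-1}w_k)}{(1-z_i^{-1}w_k)(1-w_i^{-1}z_k)}\prod_{i,k}\frac{(Qz_i^{-1}z_k,Qw_i^{-1}w_k;Q)_\infty}{(Qz_i^{-1}w_k,Qw_i^{-1}z_k;Q)_\infty}.
\end{align*}
In the product over $i,k$ the diagonal terms $i=k$ give $(Q;Q)_\infty^2/(Qz_i^{-1}w_i,Qw_i^{-1}z_i;Q)_\infty$. For each pair $i\ne k$, I group the $(i,k)$ and $(k,i)$ factors with the corresponding rational factor to form $\vartheta$'s. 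This should produce
\begin{align*}
\frac{\vartheta(z_i^{-1}z_k)\vartheta(w_i^{-1}w_k)}{\vartheta(z_i^{-1}w_k)\vartheta(w_i^{-1}z_k)},
\end{align*}
where the half-integer powers $z^{\pm1/2}$ cancel. I then substitute $z_i=s_iw_i$ and use $\vartheta(z^{-1})=-\vartheta(z)$ to reach the stated form, tracking signs. The diagonal factor combines with $1/(1-s_j^{-1})$ and $\sqrt{s_j^{-1}}$ into $1/\vartheta(s_j)$, again up to explicit powers of $s_j$.

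The main obstacle is the third group, the infinite product over $a$, which must be evaluated in the limit. Fix $j$ and write $x=q^{1/2}$-type factors explicitly. The ratio
\begin{align*}
\frac{(-Q_1Qzq^{-a+1/2};Q_1Q)_\infty}{(-Qzq^{-a+1/2};Q_1Q)_\infty}
\end{align*}
after the substitution $q\to\xi_tq$, $Q_1= q^t$ is a product over $b\ge0$ and $a\ge1$. Shifting $a\mapsto a+t$ in the numerator telescopes it against the denominator, leaving only finitely many terms, namely $a=1,\dots,t$ in the $a$-direction. This is the mechanism already used in Remark \ref{rmk:new pf Z}, and the factors $q^{\pm t}$ combine into shifts of $a$ by $t$. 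At $q=1$ the surviving factors are $\prod_{a=1}^t\prod_{b\ge0}(1+Q^{b+1}z\xi_t^{-a+1/2})$ and analogous products for $w$, $z^{-1}$ and $w^{-1}$. I will carry out this telescoping carefully for each of the four pairs, since the $Q_1$ appears on different sides in the $z^{-1}$ and $w^{-1}$ factors, so the shift goes in the opposite direction there.

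The boundary terms should assemble into the Jacobi triple products $\prod_{a=1}^t\vartheta(-z\xi_t^{a})/\vartheta(-w\xi_t^{a})$, after reindexing $\xi_t^{-a+1/2}$ to $\xi_t^{a}$ by a choice of the root $\xi_t^{1/2}$. The powers $z^{1/2}$ coming from the $\vartheta$ normalization, collected over the $t$ values of $a$, give $s_j^{t/2}$. Combined with the earlier $s_j$-powers, this should yield the factor $s_j^{-t/2}$ in \eqref{eqn:first eqn for F_t}.

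I expect the bookkeeping of these monomial prefactors and signs to be the delicate part. I would check it at the end by comparing with the $n=0$ normalization of Remark \ref{rmk:new pf Z} and the leading $Q^0$ term, namely the contribution of the empty partition.
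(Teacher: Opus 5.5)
Your plan is the paper's own proof: combine Theorem \ref{prop:qZn lim Zn} with Proposition \ref{prop:qZn as []}, specialize, telescope each $a$-product down to $a=1,\dots,t$, and regroup via \eqref{eqn:def vartheta}. Your exponent bookkeeping needs correcting: the cross terms come out directly in the stated form (no appeal to $\vartheta(z^{-1})=-\vartheta(z)$ is needed), the diagonal factor gives exactly $1/\vartheta(s_j)$ with no leftover power, and all of $s_j^{-t/2}$ comes from the third group via $(xQ;Q)_\infty(x^{-1};Q)_\infty=x^{-1/2}(Q;Q)_\infty^2\,\vartheta(x)$.

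The one step to repair is the half-integer power of $q$. For even $t$, no choice of $\xi_t^{1/2}$ turns $\{\xi_t^{a-1/2}\}_{a=1}^t$ into $\{\xi_t^{a}\}_{a=1}^t$; you get the roots of $x^t=-1$ instead. So do as the paper does and first replace every $w_i$ by $q^{1/2}w_i$ in $E$. This leaves $[w_1^0\cdots w_n^0]$ unchanged and makes all powers of $q$ integral before $q\to\xi_t q$. Equivalently, rescale all $w_j$ by a common unimodular constant at the end.
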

\begin{proof}
First,
in equation \eqref{eqn:qZn as []},
since we are only interested in the coefficient of $w_1^0\dots w_n^0$,
we can replace $w_i$ by $q^{\half}w_i$ in the function $E(Q;Q_1,q;\mathbf{sw},\mathbf{w})$.
That is to say,
we have
\begin{align*}
	Z(Q;Q_1,q;s_1,\dots,s_n)
	=\prod_{j=1}^n \frac{\sqrt{s_j^{-1}}}{1-s_j^{-1}}
	\cdot
	[w_1^0\dots w_n^0]
	\ E(Q;Q_1,q;q^{\half}\cdot \mathbf{sw},q^{\half}\cdot \mathbf{w}).
\end{align*}
Then by Theorem \ref{prop:qZn lim Zn},
the $n$-point function of $t$-core partitions $F_t(s_1,\dots,s_n)$ equals
\begin{align*}
	&\lim_{q\rightarrow1^-}
	\Big(\prod_{j=1}^n \frac{\sqrt{s_j^{-1}}}{1-s_j^{-1}}
	\cdot
	[w_1^0\dots w_n^0]
	\ E(Q;Q_1,q;q^{\half}\cdot \mathbf{sw},q^{\half}\cdot \mathbf{w})\Big)|_{q\rightarrow \xi_t q,Q_1\rightarrow q^t},
\end{align*}
which,
together with the explicit formula \eqref{eqn:def E first} for $E(Q;Q_1,q;\mathbf{z},\mathbf{w})$,
can be simplified as
\begin{align*}
	&\prod_{j=1}^n \frac{\sqrt{s_j^{-1}}}{1-s_j^{-1}}
	\cdot [w_1^0\dots w_n^0]\ \lim_{q\rightarrow1^-}
	\bigg(
	 \prod_{1\leq i<k\leq n} \frac{(1-s_i^{-1}s_kw_i^{-1}w_k)(1-w_i^{-1}w_k)}{(1-s_i^{-1}w_i^{-1}w_k)(1-s_kw_i^{-1}w_k)}\\
	&\qquad\cdot\prod_{j=1}^n\prod_{a=1}^\infty \frac{(-Q_1Qs_jw_jq^{-a+1},-Qw_jq^{-a+1},-Q_1s_j^{-1}w_j^{-1}q^{-a},-w_j^{-1}q^{-a};Q_1Q)_{\infty}}
	{(-Qs_jw_jq^{-a+1},-Q_1Qw_jq^{-a+1},-s_j^{-1}w_j^{-1}q^{-a},-Q_1w_j^{-1}q^{-a};Q_1Q)_{\infty}}
	\\
	&\qquad\qquad\qquad\qquad\cdot \prod_{i,k=1}^n \frac{(Q_1Qs_i^{-1}s_kw_i^{-1}w_k,Q_1Qw_i^{-1}w_k;Q_1Q)_{\infty}}
	{(Q_1Qs_i^{-1}w_i^{-1}w_k,Q_1Qs_kw_i^{-1}w_k;Q_1Q)_{\infty}}\bigg)\Big|_{q\rightarrow \xi_t q,Q_1\rightarrow q^t}\\
	=&\prod_{j=1}^n \frac{\sqrt{s_j^{-1}}}{1-s_j^{-1}}
	\cdot [w_1^0\dots w_n^0]
	\prod_{1\leq i<k\leq n} \frac{(1-s_i^{-1}s_kw_i^{-1}w_k)(1-w_i^{-1}w_k)}{(1-s_i^{-1}w_i^{-1}w_k)(1-s_kw_i^{-1}w_k)}\\
	&\qquad\cdot \prod_{j=1}^n\prod_{a=1}^t \frac{(-Qs_jw_j\xi_t^{-a+1},-s_j^{-1}w_j^{-1}\xi_t^{-a};Q)_{\infty}}
	{(-Qw_j\xi_t^{-a+1},-w_j^{-1}\xi_t^{-a};Q)_{\infty}}
	\cdot
	\prod_{i,k=1}^n \frac{(Qs_i^{-1}s_kw_i^{-1}w_k,Qw_i^{-1}w_k;Q)_{\infty}}
	{(Qs_i^{-1}w_i^{-1}w_k,Qs_kw_i^{-1}w_k;Q)_{\infty}}.
\end{align*}
Using the infinite product formula \eqref{eqn:def vartheta} for the theta function $\vartheta(z;Q)$,
the above equation can be represented by
\begin{align*}
	&\prod_{j=1}^n \frac{s_j^{-t/2}}
	{\vartheta(s_j;Q)}
	\cdot [w_1^0\dots w_n^0]
	\prod_{i<k}^n \frac{\vartheta(s_i^{-1}s_kw_i^{-1}w_k;Q)
		\vartheta(w_i^{-1}w_k;Q)}
	{\vartheta(s_i^{-1}w_i^{-1}w_k;Q)
		\vartheta(s_kw_i^{-1}w_k;Q)}
	\cdot \prod_{j=1}^n\prod_{a=1}^t
	\frac{\vartheta(-s_jw_j\xi_t^{a};Q)}
	{\vartheta(-w_j\xi_t^{a};Q)},
\end{align*}
which is the right hand side of equation \eqref{eqn:first eqn for F_t}.
\end{proof}

Based on Theorem \ref{prop:qZn lim Zn} and Proposition \ref{thm:qZn as det},
the second formula for the $n$-point function of $t$-core partitions can be stated as the following
\begin{cor}\label{cor:Ft as []det}
For the $n$-point function of $t$-core partitions $F_t(Q;s_1,\dots,s_n)$,
we have
\begin{align}\label{eqn:Ft as []det}
	\begin{split}
	F_t(Q;s_1,&\dots,s_n)
	=\frac{s_{[n]}^{-t/2}}
	{\Theta_3(-Q_2)^{n-1} \cdot \Theta_3(-Q_2s_{[n]}^{-1})}\\
	&\cdot [w_1^0\dots w_n^0]\ \prod_{j=1}^n\prod_{a=1}^t \frac{\vartheta(-s_jw_j\xi_t^a)}{\vartheta(-w_j\xi_t^a)}
	\cdot 
	\det \Big(\frac{\Theta_3(-Q_2s_i^{-1}w_i^{-1}w_j)}
	{\vartheta(s_iw_iw_j^{-1})} \Big)_{i,j=1}^n.
	\end{split}
\end{align}
in the region \eqref{eqn:region}.
\end{cor}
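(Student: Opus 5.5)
The plan is to take the limit of Proposition \ref{thm:qZn as det}, in the same way the first corollary was obtained from Proposition \ref{prop:qZn as []}. By Theorem \ref{prop:qZn lim Zn}, $F_t(Q;s_1,\dots,s_n)$ equals the limit $q\to1^-$ of the right-hand side of \eqref{eqn:qZn as det} after the substitution $q\to\xi_t q$, $Q_1\to q^t$. Under this substitution the nome $Q_1Q$ becomes $q^tQ$, which tends to $Q$. The $w$-independent prefactor therefore becomes $s_{[n]}^{-1/2}\big/\big(\Theta_3(Q_2;Q)^{n-1}\Theta_3(Q_2s_{[n]}^{-1};Q)\big)$. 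I would handle the sign convention by replacing the free auxiliary variable $Q_2$ with $-Q_2$, which is harmless because the left-hand side is independent of $Q_2$. This accounts for the arguments $-Q_2$ in \eqref{eqn:Ft as []det}.

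Next I would shift the variables before taking the limit. Since only the coefficient $[w_1^0\cdots w_n^0]$ matters, I rescale every $w_j$ by a common factor $q^{1/2}$ inside the determinant, as in the previous corollary. The ratios $w_i^{-1}w_j$ are unchanged, so only the $E$-factors are affected. The factorization of the determinant needs some care. Each entry contains $E(Q;Q_1,q;s_iw_i,w_j)$. I split this into two parts:
\begin{itemize}
\item the part depending on the pair, namely $(Q_1Q;Q_1Q)_\infty^2/(Q_1Qz^{-1}w,\,Q_1Qw^{-1}z;Q_1Q)_\infty$ with $z=s_iw_i$, $w=w_j$;
\item the product over $a$, which factorizes as $f(s_iw_i)g(w_j)$.
\end{itemize}
Row $i$ then carries the factor $f(s_iw_i)$ and column $j$ carries $g(w_j)$. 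These can be pulled out of the determinant, giving $\prod_j f(s_jw_j)g(w_j)$.

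The single-variable product $\prod_j f(s_jw_j)g(w_j)$ has the same limit as the one computed in the previous corollary. In that computation the $a$-products collapse, via roots of unity, to $\prod_{a=1}^t \vartheta(-s_jw_j\xi_t^a)/\vartheta(-w_j\xi_t^a)$, up to explicit powers of $s_j$. For the pair factor, I combine $(1-s_i^{-1}w_i^{-1}w_j)^{-1}$ with the limit of $(Q;Q)_\infty^2/(Qs_i^{-1}w_i^{-1}w_j,\,Qs_iw_iw_j^{-1};Q)_\infty$. This produces $1/\vartheta(s_iw_iw_j^{-1})$ up to a half-power $(s_iw_iw_j^{-1})^{\pm1/2}$.

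The main obstacle is collecting the scalar factors correctly:
\begin{itemize}
\item the half-powers $(s_iw_iw_j^{-1})^{1/2}$ coming from converting to $\vartheta$;
\item the factor $s_{[n]}^{-1/2}$ from the prefactor;
\item the $s_j$-powers coming from the $t$-fold product;
\item the $\Theta_3$ values in the limit.
\end{itemize}
For the half-powers, the contributions $w_i^{1/2}w_j^{-1/2}$ form a diagonal conjugation $D\,M\,D^{-1}$ and so leave the determinant unchanged. What remains is $\prod_i s_i^{1/2}$, which should combine with the other factors to give the stated $s_{[n]}^{-t/2}$. I would check this bookkeeping against the case $n=1$ and against the first corollary, and verify the $\Theta_3$ arguments by tracking $Q_1Q\to Q$ inside $\Theta_3(Q_2s_i^{-1}w_i^{-1}w_j;Q_1Q)$. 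Finally, I would justify exchanging the limit with coefficient extraction by uniform convergence in the region \eqref{eqn:region}, exactly as in the previous corollary.
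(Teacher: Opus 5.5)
Your proposal is correct and follows essentially the paper's route: apply Theorem~\ref{prop:qZn lim Zn} to Proposition~\ref{thm:qZn as det}, rescale the $w_j$ by $q^{1/2}$, and convert the limiting $q$-Pochhammer symbols into $\vartheta$'s, with the half-powers collapsing to $s_{[n]}^{-t/2}$. The only difference is cosmetic: the paper gets the argument $-Q_2$ from the branch identity $\Theta_3(z;e^{2\pi i}Q)=\Theta_3(-z;Q)$ (tracking $(Q_1Q)^{1/2}$ under $q\to\xi_t q$), whereas you relabel $Q_2\to -Q_2$ using $Q_2$-independence, which is equally valid and sidesteps that branch bookkeeping.
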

\begin{proof}
The second argument of $\Theta_3(\cdot,\cdot)$ is not always $Q$ in the following equations,
so we do not omit it in this proof.
Motivated by Theorem \ref{prop:qZn lim Zn} and Proposition \ref{thm:qZn as det},
we need to compute
\begin{align*}
	&\lim_{q\rightarrow1^-}
	([w_1^0\dots w_n^0]\det \Big(\frac{\Theta_3\big(Q_2s_i^{-1}w_i^{-1}w_j;Q_1Q\big)E(Q;Q_1,q;s_iw_i,w_j)}
	{(1-s_i^{-1}w_i^{-1}w_j)}\Big))|_{q\rightarrow\xi_t q,Q_1\rightarrow q^t}\\
	=&[w_1^0\dots w_n^0]\det \Big(\frac{\Theta_3\big(-Q_2s_i^{-1}w_i^{-1}w_j;Q\big) \cdot (Q;Q)_\infty^2}
	{(1-s_i^{-1}w_i^{-1}w_j)
		\cdot (Qs_i^{-1}w_i^{-1}w_j,Qs_iw_iw_j^{-1};Q)_{\infty}}\\
	&\qquad\qquad\qquad\cdot \prod_{a=1}^\infty
	\frac{(1+s_i^{-1}w_i^{-1}\xi_t^{-a})\cdot (-Qs_iw_i\xi_t^{-a+1},-Qs_i^{-1}w_i^{-1}\xi_t^{-a};Q)_{\infty}}
		{(1+w_j^{-1}\xi_t^{-a})\cdot
		(-Qw_j\xi_t^{-a+1},-Qw_j^{-1}\xi_t^{-a};Q)_{\infty}}\Big),
\end{align*}
where we have used $\Theta_3(z;e^{2\pi i}Q)=\Theta_3(-z;Q)$.
Then from the infinite product formula \eqref{eqn:def vartheta} for the theta function $\vartheta(z;Q)$,
the above equation is equal to
\begin{align*}
	[w_1^0\dots w_n^0]
	\det \Big(\frac{\Theta_3\big(-Q_2s_i^{-1}w_i^{-1}w_j;Q\big)}
	{\sqrt{s_i^{-1}w_i^{-1}w_j}\cdot \vartheta(s_iw_iw_j^{-1};Q)}
	\cdot \prod_{a=1}^t \frac{\sqrt{s_i^{-1}w_i^{-1}}\cdot \vartheta(-s_iw_i\xi_t^a;Q)}
		{\sqrt{w_j}\cdot \vartheta(-w_j\xi_t^a;Q)} \Big).
\end{align*}

Based on Theorem \ref{prop:qZn lim Zn} and comparing the above computations to formula \eqref{eqn:qZn as det} for the $q$-deformed $n$-point function $Z(Q;Q_1,q;s_1,\dots,s_n)$,
we have
\begin{align*}
	F_t(Q;s_1,\dots,s_n)
	=&\lim_{q\rightarrow1^-} Z(Q;Q_1,q;s_1,\dots,s_n)|_{q\rightarrow\xi_tq, Q_1\rightarrow q^t}\\
	=&\frac{s_{[n]}^{-t/2}}
	{\Theta_3(-Q_2;Q)^{n-1} \cdot \Theta_3(-Q_2s_{[n]}^{-1};Q)}\\
	&\cdot [w_1^0\dots w_n^0]\ 
	\det \Big(\frac{\Theta_3\big(-Q_2s_i^{-1}w_i^{-1}w_j;Q\big)}
	{\vartheta(s_iw_iw_j^{-1};Q)}
	\cdot \prod_{a=1}^t \frac{\vartheta(-s_iw_i\xi_t^a;Q)}{\vartheta(-w_j\xi_t^a;Q)} \Big),
\end{align*}
which is equivalent to equation \eqref{eqn:Ft as []det}.
\end{proof}

\subsection{Closed formula for the $n$-point function of $t$-core partitions}
\label{sec:closed formula}
In this subsection,
we provide a closed formula for the $n$-point function of $t$-core partitions based on the equation \eqref{eqn:Ft as []det}.

\begin{lem}\label{cor:special res+res=0}
	For any given $1\leq l\leq n-1$,
	we have
	\begin{align}\label{eqn:special res+res=0}
		\begin{split}
		0=&[w_{l}^0]\ \Big(\Res_{w_n=s_{l}w_{l}Q}+\Res_{w_n=s_n^{-1}w_{l}Q}\Big)\\
		&\qquad\qquad \frac{1}{w_n}\cdot \prod_{j=1}^n\prod_{a=1}^t \frac{\vartheta(-s_jw_j\xi_t^a)}{\vartheta(-w_j\xi_t^a)}
		\cdot 
		\det \Big(\frac{\Theta_3(-Q_2s_i^{-1}w_i^{-1}w_j)}
		{\vartheta(s_iw_iw_j^{-1})} \Big)_{1\leq i,j\leq n}.
		\end{split}
	\end{align}
\end{lem}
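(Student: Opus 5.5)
The plan is to compute both residues explicitly as Laurent series in $w_l$ (with the other $w_i$ held fixed in the region \eqref{eqn:region}), and then show that the two resulting functions of $w_l$ are negatives of each other after a multiplicative rescaling $w_l\mapsto c\,w_l$ by a constant $c$ built from $s_l,s_n,Q$. I expect this to give the lemma because taking $[w_l^0]$ is insensitive to such a rescaling, provided the annulus of expansion is not moved across a pole. I have not checked that the rescaled expressions match, so this step is a conjecture to be verified.

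Write $f$ for the integrand, i.e. $\frac{1}{w_n}\prod_{j}\prod_a\frac{\vartheta(-s_jw_j\xi_t^a)}{\vartheta(-w_j\xi_t^a)}$ times the determinant. For the first residue, the pole at $w_n=s_lw_lQ$ comes only from the entry $(l,n)$, namely $1/\vartheta(s_lw_lw_n^{-1})$, since $\vartheta$ vanishes at $Q^{-1}\cdot$ (inverse argument) $=Q^{\pm1}$ up to the convention. Expanding the determinant along row $l$, only the cofactor of $(l,n)$ survives. The residue of that entry is computed from $\vartheta'(Q^{\pm1})=-Q^{\mp 3/2}$ or $-Q^{1/2}$ (values recorded after \eqref{eqn:mod theta}), together with the Jacobian from $w_n$.

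For the second residue, at $w_n=s_n^{-1}w_lQ$, the pole comes only from the entry $(n,l)$, $1/\vartheta(s_nw_nw_l^{-1})$, so only the cofactor of $(n,l)$ survives. In both cases I then simplify the remaining factors using \eqref{eqn:mod theta} and $\Theta_3(Qz)=Q^{-1/2}z^{-1}\Theta_3(z)$.

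In the first residue, the factor $\prod_a\vartheta(-s_lw_l\xi_t^a)/\vartheta(-w_n\xi_t^a)$ becomes a pure quasi-periodicity multiplier. In the second, $\prod_a\vartheta(-s_nw_n\xi_t^a)/\vartheta(-w_l\xi_t^a)$ does. So in each case the $2t$ theta factors attached to $w_l,w_n$ collapse to $t$ ratios in the single variable $w_l$. The $\Theta_3$ entries of the two cofactors likewise have arguments that agree after shifting by powers of $Q$.

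Next I compare the two cofactors. The first is the minor with row $l$ and column $n$ deleted, with $w_n$ specialized. The second is the minor with row $n$ and column $l$ deleted. I expect that after the substitution $w_l\mapsto c\,w_l$ in one of them, the surviving row coming from index $n$ (respectively $l$) and the surviving column from index $l$ (respectively $n$) become identical. Here the multipliers $Q^{-1/2}z^{-1}$ from $\Theta_3$ must exactly cancel those from $\vartheta$ in the same entries, since they sit in a common row or column. The cofactor signs $(-1)^{l+n}$ agree, and the overall minus sign should come from the residue values $\vartheta'(Q)$ versus $\vartheta'(Q^{-1})$ together with the orientation of the ratio inside $\vartheta$.

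The main obstacle is this bookkeeping. One must check that all quasi-periodicity factors (the powers of $w_l$, $s_l$, $s_n$, $Q$, and the $\xi_t$-products, where $\prod_a\xi_t^a$ contributes a sign) cancel exactly, so that the two residues really are exact negatives after rescaling. One must also check that the rescaling $w_l\mapsto cw_l$ keeps $w_l$ inside the annulus prescribed by \eqref{eqn:region}, so that $[w_l^0]$ is preserved; if it does not, extra residues would appear. I would first test the case $n=2$, $l=1$, where the determinant is $2\times2$ and everything is explicit, to pin down $c$ and the signs before doing the general cofactor argument.
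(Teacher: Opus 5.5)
This is the paper's argument, and the step you left unverified does hold, with $c=s_n$. Substitute $w_l\mapsto s_nw_l$ in the residue at $w_n=s_n^{-1}w_lQ$. Its row $l$ and its specialized column $n$ then become the specialized row $n$ and the column $l$ of the minor coming from $w_n=s_lw_lQ$, and the two realignments needed to match the minors have the same parity $(-1)^{n-1-l}$. Both $\vartheta$-prefactors become $s_n^{-t}\prod_{a=1}^t\vartheta(-s_ns_lw_l\xi_t^a)/\vartheta(-w_l\xi_t^a)$, and the scalar factors are $(-1)^{n+l+1}\Theta_3(-Q_2)$ versus $(-1)^{n+l}\Theta_3(-Q_2)$, so the two residues are exact negatives.

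One correction to your bookkeeping: the $\Theta_3$ and $\vartheta$ multipliers do not cancel entrywise. They leave a uniform factor $Q_2$ on the specialized row (resp.\ $Q_2^{-1}$ on the specialized column). These cancel against the residue values $Q^{1/2}\Theta_3(-Q_2Q)=-Q_2^{-1}\Theta_3(-Q_2)$ and $-Q^{1/2}\Theta_3(-Q_2Q^{-1})=Q_2\Theta_3(-Q_2)$. Your caveat that the rescaling must not carry the $w_l$-circle across a pole is one the paper passes over. It is met by choosing the contours more widely separated than the region requires (e.g.\ $|s_ns_{l+1}w_{l+1}|<|w_l|$ and $|s_ns_lw_l|<|w_{l-1}|$), which is harmless because the constant term does not depend on the chosen torus.
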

\begin{proof}
This is the special case $r=0$ of equation \eqref{lem:res+res=0}.
As it is used in the first step of the proof of Theorem \ref{thm:Ft closed formula},
we prove it directly here.

We first compute the residues in equation \eqref{eqn:special res+res=0}.
Note that the pole $w_n=s_{l}w_{l}Q$ comes from the $(l,n)$-entry of the matrix determinant,
and the pole $w_ns_n^{-1}w_{l}Q$ comes from the $(n,l)$-entry.
Thus, by the residue theorem,
we have
\begin{align}\label{eqn:sl1wl1Q first}
	\begin{split}
	&\Res_{w_n=s_{l}w_{l}Q}\ \frac{1}{w_n}\cdot \prod_{j=1}^n\prod_{a=1}^t \frac{\vartheta(-s_jw_j\xi_t^a)}{\vartheta(-w_j\xi_t^a)}
	\cdot 
	\det \Big(\frac{\Theta_3(-Q_2s_i^{-1}w_i^{-1}w_j)}
	{\vartheta(s_iw_iw_j^{-1})} \Big)_{1\leq i,j\leq n}\\
	=&\prod_{j=1}^{n-1}\prod_{a=1}^t \frac{\vartheta(-s_jw_j\xi_t^a)}{\vartheta(-w_j\xi_t^a)}
	\cdot \prod_{a=1}^t \frac{\vartheta(-s_ns_{l}w_{l}Q\xi_t^a)}{\vartheta(-s_{l}w_{l}Q\xi_t^a)}
	\cdot \det(B),
	\end{split}
\end{align}
where $B$ is the following $(n\times n) $ matrix 
\begin{align*}
	B=\left(\begin{array}{c:c}
		\Big(\frac{\Theta_3(-Q_2s_i^{-1}w_i^{-1}w_j)}
		{\vartheta(s_iw_iw_j^{-1})}\Big)_{i=1,...,l-1\atop j=1,...,n-1} & (0)_{i=1,...,l-1\atop j=n}\\
		\hdashline (0)_{i=l\atop j=1,...,n-1} & \Res_{w_n=s_{l}w_{l}Q}\, \frac{\Theta_3(-Q_2s_{l}^{-1}w_{l}^{-1}w_n)}
		{w_n\cdot \vartheta(s_{l}w_{l}w_n^{-1})}\\
		\hdashline \Big(\frac{\Theta_3(-Q_2s_i^{-1}w_i^{-1}w_j)}
		{\vartheta(s_iw_iw_j^{-1})}\Big)_{i=l+1,...,n-1\atop j=1,...,n-1} & (0)_{i=l+1,...,n-1\atop j=n}\\
		\hdashline \Big(\frac{\Theta_3(-Q_2s_n^{-1}s_{l}^{-1}w_{l}^{-1}Q^{-1}w_j)}
		{\vartheta(s_ns_{l}w_{l}Qw_j^{-1})}\Big)_{i=n\atop j=1,...,n-1} & (0)_{i=n\atop j=n}\\
	\end{array}\right).
\end{align*}
Since $s_{l}w_{l}Q$ is only a simple pole,
we have
$\Res_{w_n=s_{l}w_{l}Q}\, \frac{\Theta_3(-Q_2s_{l}^{-1}w_{l}^{-1}w_n)}
{w_n\cdot \vartheta(s_{l}w_{l}w_n^{-1})}
=\frac{\Theta_3(-Q_2Q)}
{Q^{-1/2}}$,
where we have used $\vartheta'(Q^{-1})=-Q^{1/2}$.
As a result,
by using the modular properties of $\Theta_3(z)$ and $\vartheta(z)$,
the residue of the corresponding function at $w_n=s_{l}w_{l}Q$,
i.e. the equation \eqref{eqn:sl1wl1Q first},
is equal to
\begin{align}\label{eqn:first pole}
	\begin{split}
		&s_n^{-t}\cdot
		\prod_{j=1\atop j\neq l}^{n-1}\prod_{a=1}^t \frac{\vartheta(-s_jw_j\xi_t^a)}{\vartheta(-w_j\xi_t^a)}
		\cdot \prod_{a=1}^t \frac{\vartheta(-s_ns_{l}w_{l}\xi_t^a)}{\vartheta(-w_{l}\xi_t^a)}\\
		&\qquad\qquad\quad\cdot (-1)^{n+l+1}\Theta_3(-Q_2)
		\cdot\det\left(\begin{array}{c}
			\Big(\frac{\Theta_3(-Q_2s_i^{-1}w_i^{-1}w_j)}
			{\vartheta(s_iw_iw_j^{-1})}\Big)_{i=1,\dots,l-1,l+1,\dots,n-1\atop j=1,...,n-1}\\
			\hdashline \Big(\frac{\Theta_3(-Q_2s_n^{-1}s_{l}^{-1}w_{l}^{-1}w_j)}
			{\vartheta(s_ns_{l}w_{l}w_j^{-1})}\Big)_{i=n\atop j=1,...,n-1}
		\end{array}\right).
	\end{split}
\end{align}
The residue at $w_n=s_n^{-1}w_{l}Q$ of the corresponding function can be computed similarly,
so we omit the details.
Just notice that in this case the pole appears in the $(n,l)$-th element of the matrix determinant and
$\Res_{w_n=s_{n}^{-1}w_{l}Q} \frac{\Theta_3(-Q_2s_{n}^{-1}w_{n}^{-1}w_{l})}
{w_n \cdot \vartheta(s_{n}w_{n}w_{l}^{-1})}
=\frac{\Theta_3(-Q_2Q^{-1})}
{-Q^{-1/2}}$.
So the final result is
\begin{align}\label{eqn:second pole}
	\begin{split}
	&\Res_{w_n=s_n^{-1}w_{l}Q}\ \frac{1}{w_n}\prod_{j=1}^n\prod_{a=1}^t \frac{\vartheta(-s_jw_j\xi_t^a)}{\vartheta(-w_j\xi_t^a)}
	\cdot 
	\det \Big(\frac{\Theta_3(-Q_2s_i^{-1}w_i^{-1}w_j)}
	{\vartheta(s_iw_iw_j^{-1})} \Big)_{1\leq i,j\leq n}\\
	=&s_n^{-t}
	\cdot \prod_{j=1\atop j\neq l}^{n-1}\prod_{a=1}^t \frac{\vartheta(-s_jw_j\xi_t^a)}{\vartheta(-w_j\xi_t^a)}
	\cdot \prod_{a=1}^t \frac{\vartheta(-w_{l}\xi_t^a)}{\vartheta(-s_n^{-1}w_{l}\xi_t^a)}
	\cdot (-1)^{n+l}\Theta_3(-Q_2)\\
	&\cdot\det\left(\begin{array}{c:c}
		\Big(\frac{\Theta_3(-Q_2s_i^{-1}w_i^{-1}w_j)}
		{\vartheta(s_iw_iw_j^{-1})}\Big)_{i=1,...,n-1\atop j=1,...,l-1,l+1,\dots,n-1}
		& \Big(\frac{\Theta_3(-Q_2s_i^{-1}w_i^{-1}s_n^{-1}w_{l})}
		{\vartheta(s_iw_is_nw_{l}^{-1})}\Big)_{i=1,...,n-1\atop j=n}
	\end{array}\right).
\end{split}
\end{align}
Observe that if we move the $l$-th column of the matrix determinant in equation \eqref{eqn:first pole} to the last column,
and in equation \eqref{eqn:second pole} we replace $w_{l}\rightarrow s_nw_{l}$ 
and move the $l$-th row to the last row,
the two results become opposites.
Notice that the change of variable $w_{l}\rightarrow s_nw_{l}$ does not affect the coefficient of $w_{l}^0$.
Hence,
the coefficient of $w_{l}^0$ in the sum of the two residues vanishes,
which proves equation \eqref{eqn:special res+res=0}.
\end{proof}

The main result of this subsection is the following closed formula for the $n$-point function of $t$-core partitions $F_t(Q;s_1,\dots,s_n)$.
This formula has been stated in Theorem \ref{thm:main formula}.
For convenience,
we restate it here with more details.
\begin{thm}\label{thm:Ft closed formula}
	The $n$-point function of $t$-core partitions $F_t(s_1,\dots,s_n)$ has the following closed formula
	\begin{align}\label{eqn:Ft closed formula}
		\begin{split}
		F_t(Q;s_1,\dots,&s_n)
		=\frac{1}
		{\prod_{j=1}^n(s_j^{t/2}-s_j^{-t/2}) \cdot \Theta_3(-Q_2s_{[n]}^{-1})}
		\cdot\sum_{k=1}^{n}
		\bigg(\frac{1}{\Theta_3(-Q_2)^{k-1}}\\
		&\cdot
		\sum_{\{\pi_1,\dots,\pi_k\}\vdash[n]}
		\sum_{l_1,\dots,l_k=1}^t
		\prod_{m=1}^k \frac{\prod_{a=1}^t\vartheta(s_{\pi_m}\xi_t^{a-l_m})}
		{\prod_{a=1 \atop a\neq l_m}^t\vartheta(\xi_t^{a-l_m})}
		\cdot \det(A_{l_1,\dots,l_k;i,j}^{\pi_1,\dots,\pi_k})_{i,j=1}^{k}\bigg),
		\end{split}
	\end{align}
	where $[n]=\{1,2,...,n\}$,
	the second sum runs over all set partitions $\pi=\{\pi_1,\dots,\pi_k\}$ of $[n]$ into $k$ blocks,
	i.e., each $\pi_i$ is a nonempty subset of $[n]$ such that $\sqcup_{i=1}^k \pi_k = [n]$,
	$s_{\pi_m}=\prod_{x\in\pi_m} s_{x}$,
	and the entry $A_{l_1,\dots,l_k;i,j}^{\pi_1,\dots,\pi_k}$ of the matrix determinant is given by
	\begin{align}\label{eqn:def A^{pi}_ij}
		A_{l_1,\dots,l_k;i,j}^{\pi_1,\dots,\pi_k}
		=\frac{\Theta_3(-Q_2s_{\pi_{i}}^{-1}\xi_t^{l_{i}-l_{j}})}
		{\vartheta(s_{\pi_{i}}\xi_t^{l_{j}-l_{i}})}.	\end{align}
\end{thm}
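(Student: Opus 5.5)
We will start from Corollary \ref{cor:Ft as []det}, which expresses $F_t$ as the constant term $[w_1^0\cdots w_n^0]$ of
\[
\Phi_n(\mathbf w):=\prod_{j=1}^n\prod_{a=1}^t \frac{\vartheta(-s_jw_j\xi_t^a)}{\vartheta(-w_j\xi_t^a)}\cdot\det\Big(\frac{\Theta_3(-Q_2s_i^{-1}w_i^{-1}w_j)}{\vartheta(s_iw_iw_j^{-1})}\Big)_{i,j=1}^n
\]
in the region \eqref{eqn:region}. We then extract the constant terms one variable at a time, starting with $w_n$ and proceeding by induction on $n$. Using the quasi-periodicity \eqref{eqn:mod theta} of $\vartheta$ and of $\Theta_3$, we first check that $\Phi_n$ is invariant in $w_n$ under $w_n\to Qw_n$. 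The factor $s_n^{\pm t}$ coming from the $t$ theta quotients should cancel against the factor produced by the $\Theta_3$ entries in row $n$ and column $n$. Thus $\frac{1}{w_n}\Phi_n$ is an elliptic differential in $w_n$ on the annulus $|Q|<|w_n|\le 1$.

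The constant term $[w_n^0]$ is then $\frac{1}{2\pi i}\oint_{|w_n|=c}\Phi_n\,dw_n/w_n$. We compare this integral with the integral over a circle shifted by a factor $Q$. Ellipticity makes the two integrals equal, so the sum of residues of $\Phi_n\,dw_n/w_n$ inside the fundamental annulus vanishes. The poles in $w_n$ are of three kinds. First, $w_n=-\xi_t^{-a}Q^m$, from $\vartheta(-w_n\xi_t^a)$. Second, $w_n=s_lw_lQ$, from the $(l,n)$ entry. Third, $w_n=s_n^{-1}w_lQ$, from the $(n,l)$ entry, with $l<n$. There is also the pole $w_n=s_n^{-1}w_n$ coming from the diagonal entry $\vartheta(s_n)$, but it is constant in $w_n$ and simply contributes the factor $1/\vartheta(s_n)$.

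By Lemma \ref{cor:special res+res=0}, after taking $[w_l^0]$, the second and third kinds cancel in pairs for every $l$. Hence only the residues at $w_n=-\xi_t^{-a}$ (equivalently at the appropriate representative in the annulus) survive, and we can write $[w_n^0]\Phi_n/w_n$ as a sum of $t$ residues labelled by $l_n\in\{1,\dots,t\}$. At such a residue, $\vartheta'(1)=1$ produces the denominator $\prod_{a\ne l}\vartheta(\xi_t^{a-l})$, and the numerator becomes $\prod_a\vartheta(s_n\xi_t^{a-l})$. The determinant entries then acquire the root-of-unity shifts $\xi_t^{l_i-l_j}$ that appear in \eqref{eqn:def A^{pi}_ij}.

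Iterating is where set partitions enter. After $w_n$ is evaluated, the surviving function of $w_1,\dots,w_{n-1}$ contains, besides $\Phi_{n-1}$-type terms, terms in which $w_n$ has been glued to some $w_l$: the cancelled pair in the lemma is only a special case, and for $r\ne 0$ the generalisation \eqref{lem:res+res=0} must be used. Gluing corresponds to merging blocks, with $s_l\to s_ls_n$. So we set up an induction hypothesis of the following form. For a set partition with blocks $\pi_m$, the variables $w_m$ carry parameters $s_{\pi_m}$; each block is either still a free variable or already fixed to a root of unity $-\xi_t^{-l_m}$; and the determinant has size equal to the number of blocks. The main obstacle is the bookkeeping in this induction. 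We must track the signs from moving rows and columns and the factors $\Theta_3(-Q_2)$ from each merge, which should give the $\Theta_3(-Q_2)^{-(k-1)}$ after normalising by $\Theta_3(-Q_2)^{n-1}$. We must also verify that the residue pairs cancel at every stage. To do this we would first prove the general residue-cancellation lemma \eqref{lem:res+res=0} for partially evaluated determinants, mimicking the proof of Lemma \ref{cor:special res+res=0}.

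Finally, we collect the prefactors. The factors $s_{[n]}^{-t/2}\prod_j\vartheta(s_j)^{-1}$-type terms combine with the diagonal values into $\prod_j(s_j^{t/2}-s_j^{-t/2})^{-1}$, using the identity $\prod_{a=1}^t\vartheta(s\xi_t^a)\big/\prod_{a\neq t}\vartheta(\xi_t^a)\propto s^{t/2}-s^{-t/2}$ up to the relevant $Q$-series. This yields \eqref{eqn:Ft closed formula}.
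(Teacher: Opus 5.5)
Your argument breaks at the ellipticity step: $\Phi_n$ is \emph{not} invariant under $w_n\to Qw_n$. By \eqref{eqn:mod theta}, each quotient $\vartheta(-s_nw_n\xi_t^a)/\vartheta(-w_n\xi_t^a)$ picks up a factor $s_n^{-1}$. The determinant, on the other hand, is exactly invariant. The entries $(n,j)$, $j\neq n$, are multiplied by $Q_2$, and the entries $(i,n)$, $i\neq n$, by $Q_2^{-1}$. The $(n,n)$ entry $\Theta_3(-Q_2s_n^{-1})/\vartheta(s_n)$ does not involve $w_n$. So this is a diagonal conjugation, and $\Phi_n|_{w_n\to Qw_n}=s_n^{-t}\Phi_n$; nothing cancels the $s_n^{-t}$.

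This is not cosmetic. If $\Phi_n\,dw_n/w_n$ were elliptic, comparing the integrals over $|w_n|=e^{c_n}$ and $|w_n|=|Q|e^{c_n}$ would only tell you that the residues in the annulus sum to zero. Combined with the pairwise cancellation of Lemma \ref{cor:special res+res=0}, the residues at $w_n=-\xi_t^{-l}$ would then also sum to zero (after taking constant terms), and you would have no expression for $[w_n^0]\Phi_n$ at all.

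What the paper actually uses is that the integral over $|w_n|=|Q|e^{c_n}$ equals $s_n^{-t}$ times the integral over $|w_n|=e^{c_n}$. Hence $(1-s_n^{-t})[w_n^0]\Phi_n$ equals the residue sum, and
$$[w_n^0]\Phi_n=\frac{1}{1-s_n^{-t}}\sum_{l=1}^t\Res_{w_n=-\xi_t^{-l}}\frac{\Phi_n}{w_n}.$$
The same multiplier $s_{n-r}^{-t}$ reappears at every later step. The resulting factor $\prod_j(1-s_j^{-t})^{-1}$, together with the prefactor $s_{[n]}^{-t/2}$ of Corollary \ref{cor:Ft as []det}, is exactly $\prod_j(s_j^{t/2}-s_j^{-t/2})^{-1}$.

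Your proposed source of that prefactor does not work. Up to an $s$-independent constant, $\prod_{a=1}^t\vartheta(s\xi_t^a)$ equals $\vartheta(s^t;Q^t)$, not $s^{t/2}-s^{-t/2}$. In any case these theta products survive untouched in \eqref{eqn:Ft closed formula}.

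Second, you have misplaced the origin of the set partitions. After the $w_n$-step nothing has been glued: $g_1$ is just the sum over $l_1$ of the $n\times n$ determinant with $w_n$ specialized to $-\xi_t^{-l_1}$. The pole pairs $w_{n-r}=s_iw_iQ$ and $w_{n-r}=s_{n-r}^{-1}w_iQ$ cancel at \emph{every} step, which is the content of Lemma \ref{lem:res+res=0}, so they never produce merges.

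Merges come from a third family of poles, which exists only once some variables have been specialized. A row attached to an already evaluated block $\pi_m$ has, in the column of the free variable $w_{n-r}$, the entry $\Theta_3(Q_2s_{\pi_m}^{-1}\xi_t^{l_m}w_{n-r})/\vartheta(-s_{\pi_m}\xi_t^{-l_m}w_{n-r}^{-1})$. This entry has a pole at $w_{n-r}=-s_{\pi_m}\xi_t^{-l_m}$ inside the annulus. The residue of that entry is $-\Theta_3(-Q_2)$. After cofactor expansion and one row move, the result is $\Theta_3(-Q_2)$ times the determinant attached to the partition in which $n-r$ has been inserted into $\pi_m$. At the same point, the prefactor $\prod_a\vartheta(-s_{n-r}w_{n-r}\xi_t^a)/\vartheta(-w_{n-r}\xi_t^a)$ converts $\prod_a\vartheta(s_{\pi_m}\xi_t^{a-l_m})$ into $\prod_a\vartheta(s_{n-r}s_{\pi_m}\xi_t^{a-l_m})$.

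This is what yields one factor $\Theta_3(-Q_2)$ per merge. The symmetric points $w_{n-r}=-s_{n-r}^{-1}\xi_t^{-l_m}$ are not poles, because there the numerator $\vartheta(-s_{n-r}w_{n-r}\xi_t^{a})$ vanishes for $a=l_m$. Without this pole family and the explicit four-block form of the stage-$r$ matrix (free/free, free/evaluated, evaluated/free, evaluated/evaluated entries), your induction hypothesis cannot be made to close.
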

\begin{proof}
We start from the following formula
\begin{align}\label{eqn:Ft as []g}
	F_t(Q;s_1,\dots,s_n)
	=\frac{s_{[n]}^{-t/2}}
	{\Theta_3(-Q_2)^{n-1} \cdot \Theta_3(-Q_2s_{[n]}^{-1})}
	\cdot [w_1^0\dots w_n^0]\, g_0,
\end{align}
where the function $g_0$ is defined by
\begin{align}\label{eqn:def g0}
	g_0:=\prod_{j=1}^n\prod_{a=1}^t \frac{\vartheta(-s_jw_j\xi_t^a)}{\vartheta(-w_j\xi_t^a)}
	\cdot 
	\det \Big(\frac{\Theta_3(-Q_2s_i^{-1}w_i^{-1}w_j)}
	{\vartheta(s_iw_iw_j^{-1})} \Big)_{1\leq i,j\leq n}.
\end{align}
The equation \eqref{eqn:Ft as []g} is just a reformulation of that in Corollary \ref{cor:Ft as []det}.
It is known that considering the coefficient of $w_1^0\dots w_n^0$ in the function $g_0$ is equivalent to considering the following integral
\begin{align}\label{eqn:[] as int}
	[w_1^0\dots w_n^0]\, g_0
	=\int_{|w_1|=e^{c_1},\dots,|w_n|=e^{c_n}}
	g_0\cdot \frac{ dw_1\cdots dw_n}{(2\pi i)^n w_1\dots w_n},
\end{align}
where these constants $c_i, i=1,\dots,n$ satisfy  $$1<e^{c_n}<|s_ne^{c_n}|<\dots<e^{c_1}<|s_1e^{c_1}|< |Q^{-1}|$$
since we have restricted ourselves to the region \eqref{eqn:region}.
Below,
we compute the above integral \eqref{eqn:[] as int} by $n$ steps.
For each $r=1,2,...,n$ and at the $r$-th step,
we compute the integral with respect to the variable $w_{n+1-r}$.
We denote the result by $g_r$ after finishing the first $r$ steps,
i.e.,
\begin{align}\label{eqn:def gr}
	g_r:=[w_{n+1-r}^0w_{n+2-r}^0\dots w_{n}^0]\ g_0
	=\int_{|w_{n+1-r}|=e^{c_{n+1-r}},\dots,|w_n|=e^{c_n}}
	g_0\cdot \frac{ dw_{n+1-r}\cdots dw_n}{(2\pi i)^r w_{n+1-r}\dots w_n}.
\end{align}
We prove by induction on $r$ that,
after performing the first $r$ times integrals,
we have the following equality
\begin{align}\label{eqn:induction on r eqn}
	\begin{split}
	g_r
	=&\frac{1}{\prod_{j=n+1-r}^n(1-s_j^{-t})}
	\cdot \sum_{k=1}^{r} \bigg(\Theta_3(-Q_2)^{r-k}
	\cdot \sum_{\pi=\{\pi_1,\dots,\pi_k\}\vdash \{n+1-r,\dots,n\}}
	\sum_{l_1,\dots,l_k=1}^t\\
	&\qquad\prod_{j=1}^{n-r}\prod_{a=1}^t \frac{\vartheta(-s_jw_j\xi_t^a)}{\vartheta(-w_j\xi_t^a)}
	\cdot\prod_{m=1}^k \frac{\prod_{a=1}^t\vartheta(s_{\pi_m}\xi_t^{a-l_m})}
	{\prod_{a=1 \atop a\neq l_m}^t\vartheta(\xi_t^{a-l_m})}
	\cdot \det(A_{l_1,\dots,l_k;i,j}^{n-r;\pi_1,\dots,\pi_k})_{i,j=1}^{n-r+k}\bigg),
	\end{split}
\end{align}
where the second sum runs over all set partitions $\pi=\{\pi_1,\dots,\pi_k\}$ of the set $\{n+1-r,n+2-r,\dots,n\}$ into $k$ blocks,
and the entries $A_{l_1,\dots,l_k;i,j}^{n-r;\pi_1,\dots,\pi_k}$ of the matrix determinant are given by
\begin{align}\label{eqn:def A^{n-r,pi}_ij}
	A_{l_1,\dots,l_k;i,j}^{n-r;\pi_1,\dots,\pi_k}
	=\begin{cases}
		\frac{\Theta_3(-Q_2s_i^{-1}w_i^{-1}w_j)}
		{\vartheta(s_iw_iw_j^{-1})}, &\text{if}\ 1\leq i,j\leq n-r,\\
		\frac{\Theta_3(Q_2s_i^{-1}w_i^{-1}\xi_t^{-l_{j-n+r}})}
		{-\vartheta(-s_iw_i\xi_t^{l_{j-n+r}})}, &\text{if}\ 1\leq i\leq n-r, n-r+1\leq j\leq n-r+k,\\
		\frac{\Theta_3(Q_2s_{\pi_{i-n+r}}^{-1}\xi_t^{l_{i-n+r}}w_j)}
		{\vartheta(-s_{\pi_{i-n+r}}\xi_t^{-l_{i-n+r}}w_j^{-1})}, &\text{if}\ n-r+1\leq i\leq n-r+k, 1\leq j\leq n-r,\\
		\frac{\Theta_3(-Q_2s_{\pi_{i-n+r}}^{-1}\xi_t^{l_{i-n+r}-l_{j-n+r}})}
		{\vartheta(s_{\pi_{i-n+r}}\xi_t^{l_{j-n+r}-l_{i-n+r}})}, &\text{if}\ n-r+1\leq i, j\leq n-r+k.\\
	\end{cases}
\end{align}
Notice that,
if we simultaneous permute $\pi_1,\dots,\pi_k$ and $l_1,\dots,l_k$ by the same permutation,
the determinant $\det(A_{l_1,\dots,l_k;i,j}^{n-r;\pi_1,\dots,\pi_k})_{i,j=1}^{n-r+k}$ is  unchanged.
Moreover,
when $r=n$,
this $A_{l_1,\dots,l_k;i,j}^{n-r;\pi_1,\dots,\pi_k}$ is equal to $A_{l_1,\dots,l_k;i,j}^{\pi_1,\dots,\pi_k}$ defined in equation \eqref{eqn:def A^{pi}_ij},
and thus equations \eqref{eqn:Ft as []g}, \eqref{eqn:def gr} and \eqref{eqn:induction on r eqn} give the formula \eqref{eqn:Ft closed formula}.

Now,
we prove the formula \eqref{eqn:induction on r eqn} by induction on $r$.
We first consider the $r=1$ case,
i.e., we compute the integral $\int_{|w_n|=e^{c_n}} \frac{g_0dw_n}{2\pi iw_n}$.
By the modular properties of theta functions $\Theta_3(z)$ and $\vartheta(z)$ performed in the last subsection,
we have (see \cite{Eng21} and Lemma 4 in \cite{EO06})
\begin{align}\label{eqn:int Qc=int c}
	\int_{|w_n|=Qe^{c_n}} \frac{g_0 dw_n}{w_n}
	=\int_{|w_n|=e^{c_n}} (g_0|_{w_n \rightarrow Qw_n}) \cdot\frac{dw_n}{w_n}
	=s_n^{-t}\cdot \int_{|w_n|=e^{c_n}} \frac{g_0dw_n}{w_n}.
\end{align}
In the region $\{Qe^{c_n}<|w_n|<e^{c_n}\}$,
the only poles of the function $g_0/w_n$ comes from the possible zeros of the functions $\vartheta(-w_n\xi_t^{l_1}), l_1=1,\dots,t$ and $\vartheta(s_{i}w_{i}w_n^{-1}), \vartheta(s_nw_nw_{i}^{-1}), i=1,\dots,n-1$.
Consequently,
the poles are at $w_n=-\xi_t^{-l_1}, l_1=1,\dots,t$
and $w_n=s_iw_iQ, s_n^{-1}w_iQ, i=1,...,n$.
Then by the residue theorem,
we have
\begin{footnotesize}
\begin{align*}
	\int_{|w_n|=e^{c_n}} \frac{g_0dw_n}{2\pi iw_n}
	=\int_{|w_n|=Qe^{c_n}} \frac{g_0dw_n}{2\pi i w_n}
	+ \bigg(\sum_{l_1=1}^t \Res_{w_n=-\xi_t^{-l_1}}
	+\sum_{i=1}^{n-1}\Big(\Res_{w_n=s_iw_iQ}+\Res_{w_n=s_n^{-1}w_iQ}\Big) \bigg)\frac{g_0}{w_n}.
\end{align*}
\end{footnotesize}
Note that the poles $s_iw_i$ and $s_n^{-1}w_i, i=1,...,n-1$ of $g_0/w_n$ with respect to the variable $w_n$ are not in the region $\{Qe^{c_n}<|w_n|<e^{c_n}\}$,
since $|e^{c_n}|<|s_iw_i|$ and $|s_ne^{c_n}|<|w_i|$
by our initial choice of contours.
Then by equation \eqref{eqn:int Qc=int c},
we have
\begin{align*}
	\int_{|w_n|=e^{c_n}} \frac{g_0dw_n}{2\pi iw_n}
	=\frac{1}{1-s_n^{-t}}\bigg(\sum_{l_1=1}^t \Res_{w_n=-\xi_t^{-l_1}}
	+\sum_{i=1}^{n-1}\Big(\Res_{w_n=s_iw_iQ}+\Res_{w_n=s_n^{-1}w_iQ}\Big) \bigg)\frac{g_0}{w_n}.
\end{align*}
The integral thus reduces to residues.
Lemma \ref{cor:special res+res=0} shows that for each $i=1,...,n-1$, the sum of residues of the function $g_0/w_n$ at $w_n=s_iw_iQ$ and $w_n=s_n^{-1}w_iQ$ vanishes.
Thus,
we only need to compute the residues of the function $g_0/w_n$ at $w_n=-\xi_t^{-l_1}, l_1=1,...,t$.
The result is
\begin{align}\label{eqn:int wn=result}
	\begin{split}
	g_1=&\int_{|w_n|=e^{c_n}}\ \frac{g_0dw_n}{2\pi iw_n}
	=\frac{1}{1-s_n^{-t}}
	\cdot \sum_{l_1=1}^t \Res_{w_n=-\xi_t^{-l_1}} \frac{g_0}{w_n}\\
	=&\frac{1}{1-s_n^{-t}}
	\cdot \sum_{l_1=1}^t\bigg(
	\prod_{j=1}^{n-1}\prod_{a=1}^t \frac{\vartheta(-s_jw_j\xi_t^a;Q)}{\vartheta(-w_j\xi_t^a;Q)}
	\cdot  \frac{\prod_{a=1}^t \vartheta(s_n\xi_t^{a-l_1};Q)}
		{\prod_{a=1 \atop a\neq l_1}^t \vartheta(\xi_t^{a-l_1};Q)}\\
	&\qquad\qquad\qquad\quad\cdot 
	\det \Big(\frac{\Theta_3(-Q_2s_i^{-1}w_i^{-1}w_j)}
	{\vartheta(s_iw_iw_j^{-1})}|_{w_n\rightarrow-\xi_t^{-l_1}} \Big)_{1\leq i,j\leq n}\bigg).
	\end{split}
\end{align}
Note that the substitution $w_n\rightarrow-\xi_t^{-l_1}$ only affects the $n$-row and $n$-th column of the matrix determinant.
Hence
the entries $\frac{\Theta_3(-Q_2s_i^{-1}w_i^{-1}w_j)}
{\vartheta(s_iw_iw_j^{-1})}|_{w_n\rightarrow-\xi_t^{l_1}}$ in the matrix determinant coincides with $A^{n-1;\{n\}}_{l_1;i,j}$
defined in equation \eqref{eqn:def A^{n-r,pi}_ij} for all $1\leq i,j\leq n$.
Since in this case the only set partition of $\{n\}$ is $\pi=\{\pi_1\}=\{\{n\}\}$, we have proved the equation \eqref{eqn:induction on r eqn} for the $r=1$ case.

Now assume that for some fixed $r$ with $2\leq r+1\leq n$,
we have already performed the first $r$ steps,
i.e.,
we assume that we have derived the formula \eqref{eqn:induction on r eqn} for the function $g_r$.
Next,
we perform the $(r+1)$-th step,
i.e., extracting the coefficient of $w_{n-r}^0$ in the function $g_r$.
We restrict us to each fixed summand in equation \eqref{eqn:induction on r eqn} of $g_r$,
i.e.,
we consider a fixed set partition $\pi=\{\pi_1,\dots,\pi_k\}$ and fixed integers $l_1,\dots,l_k$.
Notice that the poles of the function $g_r$ with respect to the variable $w_{n-r}$ in the region $\{Qe^{c_{n-r}}<|w_{n-r}|<e^{c_{n-r}}\}$ are three types as follows
\begin{align*}
	\begin{cases}
		-\xi_t^{-l},\quad l=1,...,t;\\
		s_iw_iQ,\, s_{n-r}^{-1}w_iQ,\quad i=1,...,{n-r-1};\\
		-s_{\pi_{m'}}\xi_t^{-l_{m'}},\quad {m'}=1,...,k.
	\end{cases}
\end{align*}
We remark that the points $-s_{n-r}^{-1}\xi_t^{-l_{m'}},\, {m'}=1,...,k$ are not poles of the function $g_r$ in the variable $w_{n-r}$.
Then by the modularity of the theta functions $\Theta_3(z), \vartheta(z)$ and a similar discussion as above,
we have
\begin{align}
	\begin{split}
		[w_{n-r}^0]g_r
		=&\frac{1}{1-s_{n-r}^{-t}}
		\cdot \bigg(\sum_{l=1}^t \Res_{w_{n-r}=-\xi_t^{-l}}
		+\sum_{{m'}=1}^k \Res_{w_{n-r}=-s_{\pi_{m'}}\xi_t^{-l_{m'}}}\\
		&\qquad\qquad\qquad+\sum_{i=1}^{n-r-1}\Big(\Res_{w_{n-r}=s_iw_iQ}+\Res_{w_{n-r}=s_{n-r}^{-1}w_iQ}\Big)\bigg)\frac{g_r}{w_{n-r}}.
	\end{split}
\end{align}
In Lemma \ref{lem:res+res=0},
we will prove that the sum of residues of the function $g_r/w_{n-r}$ at $w_{n-r}=s_iw_iQ$ and $w_{n-r}=s_{n-r}^{-1}w_iQ$ is equal to zero for any $i=1,...,n-r-1$.
Thus,
we only need to consider the residues of the function $g_r/w_{n-r}$ at $w_{n-r}=-\xi_t^{-l}, l=1,...,t$ and 
$w_{n-r}=-s_{\pi_{m'}}\xi_t^{-l_{m'}}, {m'}=1,...,k$.
Moreover,
since taking the residue is linear,
we can fix $\pi=\{\pi_1,\dots,\pi_k\}, l_1,\dots,l_k$ and focus on the terms involving $w_{n-r}$.
That is to say,
we consider the following function
\begin{align*}
	\tilde{g_r}^{\pi,(l_1,\dots,l_k)}
	:=\prod_{a=1}^t \frac{\vartheta(-s_{n-r}w_{n-r}\xi_t^a)}{\vartheta(-w_{n-r}\xi_t^a)}
	\cdot\prod_{m=1}^k \frac{\prod_{a=1}^t\vartheta(s_{\pi_m}\xi_t^{a-l_m})}
	{\prod_{a=1 \atop a\neq l_m}^t\vartheta(\xi_t^{a-l_m})}
	\cdot \det(A_{i,j}^{n-r;\pi_1,\dots,\pi_k})_{i,j=1}^{n-r+k},
\end{align*}
and compute the residues of $\tilde{g_r}^{\pi,(l_1,\dots,l_k)}/w_{n-r}$.
Then,
for any fixed $l=1,...,t$ and the pole at $w_{n-r}=-\xi_t^{-l}$,
we have
\begin{align}\label{eqn:res wn-r xi}
	\begin{split}
	\Res_{w_{n-r}=-\xi_t^{-l}}\frac{\tilde{g_r}^{\pi,(l_1,\dots,l_k)}}{w_{n-r}}
	&=\frac{\prod_{a=1}^t\vartheta(s_{n-r}\xi_t^{a-l})}
	{\prod_{a=1 \atop a\neq l}^t\vartheta(\xi_t^{a-l})}
	\cdot\prod_{m=1}^k \frac{\prod_{a=1}^t\vartheta(s_{\pi_m}\xi_t^{a-l_m})}
	{\prod_{a=1 \atop a\neq l_m}^t\vartheta(\xi_t^{a-l_m})}\\
	&\cdot \det\Big(A_{l_1,\dots,l_k;i,j}^{n-r;\pi_1,\dots,\pi_k}|_{w_{n-r}\rightarrow-\xi_t^{-l}}\Big)_{i,j=1}^{n-r+k}.
	\end{split}
\end{align}
Notice that for any $1\leq i,j\leq n-r+k$,
we have
\begin{align*}
	A_{l_1,\dots,l_k;i,j}^{n-r;\pi_1,\dots,\pi_k}|_{w_{n-r}\rightarrow-\xi_t^{-l}}
	=A_{l,l_1,\dots,l_k;i,j}^{n-r-1;\{n-r\},\pi_1,\dots,\pi_k}.
\end{align*}
Hence
the contribution in \eqref{eqn:res wn-r xi} corresponds to the term in $g_{r+1}$ where
the set partition of $\{n-r,n-r+1,\dots,n\}$ is given by $\{\pi_1,,\dots,\pi_k,\{n-r\}\}$,
(i.e., the number $n-r$ is a singleton)
and the associated label is $l_{k+1}=l$.
Next,
for any fixed ${m'}=1,...,k$ and the pole at $w_{n-r}=s_{\pi_{m'}}\xi_t^{-l_{m'}}$,
we have
\begin{align}\label{eqn:res wn-r sxi}
	\begin{split}
	&\Res_{w_{n-r}=-s_{\pi_{m'}}\xi_t^{-l_{m'}}} \frac{\tilde{g_r}^{\pi,(l_1,\dots,l_k)}}{w_{n-r}}\\
	=&\prod_{a=1}^t \frac{\vartheta(s_{n-r}s_{\pi_{m'}}\xi_t^{a-l_{m'}})}{\vartheta(s_{\pi_{m'}}\xi_t^{a-l_{m'}})}
	\cdot\prod_{m=1}^k \frac{\prod_{a=1}^t\vartheta(s_{\pi_m}\xi_t^{a-l_m})}
	{\prod_{a=1 \atop a\neq l_m}^t\vartheta(\xi_t^{a-l_m})}
	\cdot \Res_{w_{n-r}=-s_{\pi_{m'}}\xi_t^{-l_{m'}}} \det(A_{l_1,\dots,l_k;i,j}^{n-r;\pi_1,\dots,\pi_k})_{i,j=1}^{n-r+k}\\
	=&\frac{\prod_{a=1}^t\vartheta(s_{n-r}s_{\pi_{m'}}\xi_t^{a-l_{m'}})}
		{\prod_{a=1 \atop a\neq l_{m'}}^t \vartheta(\xi_t^{a-l_{m'}})}
	\cdot\prod_{m=1\atop m\neq {m'}}^k \frac{\prod_{a=1}^t\vartheta(s_{\pi_m}\xi_t^{a-l_m})}
	{\prod_{a=1 \atop a\neq l_m}^t\vartheta(\xi_t^{a-l_m})}
	\cdot \det(B_{i,j})_{i,j=1}^{n-r+k},
	\end{split}
\end{align}
where by using $\Res_{w_{n-r}=-s_{\pi_{m'}}\xi_t^{-l_{m'}}} \frac{\Theta_3(Q_2s_{\pi_{m'}}^{-1}\xi_t^{l_{m'}}w_{n-r})}
{w_{n-r}\cdot \vartheta(-s_{\pi_{m'}}\xi_t^{-l_{m'}}w_{n-r}^{-1})}
=-\Theta_3(-Q_2)$,
the matrix determinant $\det(B_{i,j})_{i,j=1}^{n-r+k}$ is given by
\begin{align}\label{eqn:B matrix}
	\begin{split}
	&\det\begin{footnotesize}
		\begin{array}{cc}
			\left(\begin{array}{c:c:c}
				\frac{\Theta_3(-Q_2s_i^{-1}w_i^{-1}w_j)}
				{\vartheta(s_iw_iw_j^{-1})}&
				0&
				\frac{\Theta_3(Q_2s_i^{-1}w_i^{-1}\xi_t^{-l_{j-n+r}})}
				{-\vartheta(-s_iw_i\xi_t^{l_{j-n+r}})}\\
				\hdashline \frac{\Theta_3(Q_2s_{n-r}^{-1}s_{\pi_{m'}}^{-1}\xi_t^{l_{m'}}w_j)}
				{\vartheta(-s_{n-r}s_{\pi_{m'}}\xi_t^{-l_{m'}}w_j^{-1})}&
				0&
				\frac{\Theta_3(-Q_2s_i^{-1}s_{\pi_{m'}}^{-1}\xi_t^{l_{m'}-l_{j-n+r}})}
				{\vartheta(s_{n-r}s_{\pi_{m'}}\xi_t^{l_{j-n+r}-l_{m'}})}\\
				\hdashline \frac{\Theta_3(Q_2s_{\pi_{i-n+r}}^{-1}\xi_t^{l_{i-n+r}}w_j)}
				{\vartheta(-s_{\pi_{i-n+r}}\xi_t^{-l_{i-n+r}}w_j^{-1})}&
				0&
				\frac{\Theta_3(-Q_2s_{\pi_{i-n+r}}^{-1}\xi_t^{l_{i-n+r}-l_{j-n+r}})}
				{\vartheta(s_{\pi_{i-n+r}}\xi_t^{l_{j-n+r}-l_{i-n+r}})}\\
				\hdashline0&
				-\Theta_3(-Q_2)&
				0\\
				\hdashline \underbrace{\textstyle \frac{\Theta_3(Q_2s_{\pi_{i-n+r}}^{-1}\xi_t^{l_{i-n+r}}w_j)}
					{\vartheta(-s_{\pi_{i-n+r}}\xi_t^{-l_{i-n+r}}w_j^{-1})}} &
				\underbrace{\vphantom{\frac{\Big(change\Big)}{\big(height\big)}}0}&
				\underbrace{\textstyle \frac{\Theta_3(-Q_2s_{\pi_{i-n+r}}^{-1}\xi_t^{l_{i-n+r}-l_{j-n+r}})}
					{\vartheta(s_{\pi_{i-n+r}}\xi_t^{l_{j-n+r}-l_{i-n+r}})}}
			\end{array}\right)
			&\begin{array}{l}
				\Big\}i={1,\dots,\atop n-r-1}\\[10pt]
				\Big\}i=n-r\\[10pt]
				\Big\}i={n-r+1,\dots,\atop m'+n-r-1}\\[10pt]
				\big\}i=m'+n-r\\[12pt]
				\Big\}i={m'+n-r+1,\dots,\atop n-r+k}\\[5.5pt]
			\end{array}\\
			\begin{array}{ccc}
				\,\quad j=1,\dots,n-r-1&\,\,\,\quad j=n-r&\,\, j=n-r+1,\dots,n-r+k\,\,
			\end{array}
		\end{array}\end{footnotesize}\\
	=&\begin{footnotesize} {\textstyle(-1)^{m'+1}\Theta_3(-Q_2)}
		\begin{array}{cc}
			\left|\begin{array}{c:c}
				\frac{\Theta_3(-Q_2s_i^{-1}w_i^{-1}w_j)}
				{\vartheta(s_iw_iw_j^{-1})}&
				\frac{\Theta_3(Q_2s_i^{-1}w_i^{-1}\xi_t^{-l_{j-n+r}})}
				{-\vartheta(-s_iw_i\xi_t^{l_{j-n+r}})}\\
				\hdashline \frac{\Theta_3(Q_2s_{n-r}^{-1}s_{\pi_{m'}}^{-1}\xi_t^{l_{m'}}w_j)}
				{\vartheta(-s_{n-r}s_{\pi_{m'}}\xi_t^{-l_{m'}}w_j^{-1})}&
				\frac{\Theta_3(-Q_2s_i^{-1}s_{\pi_{m'}}^{-1}\xi_t^{l_{m'}-l_{j-n+r}})}
				{\vartheta(s_{n-r}s_{\pi_{m'}}\xi_t^{l_{j-n+r}-l_{m'}})}\\
				\hdashline \underbrace{\textstyle \frac{\Theta_3(Q_2s_{\pi_{i-n+r}}^{-1}\xi_t^{l_{i-n+r}}w_j)}
					{\vartheta(-s_{\pi_{i-n+r}}\xi_t^{-l_{i-n+r}}w_j^{-1})}} &
				\underbrace{\textstyle \frac{\Theta_3(-Q_2s_{\pi_{i-n+r}}^{-1}\xi_t^{l_{i-n+r}-l_{j-n+r}})}
					{\vartheta(s_{\pi_{i-n+r}}\xi_t^{l_{j-n+r}-l_{i-n+r}})}}
			\end{array}\right|
			&\begin{array}{l}
				\Big\}i={1,\dots,\atop n-r-1}\\[12pt]
				\Big\}i=n-r\\[15pt]
				\Big\}i={n-r+1,\dots,m'+n-r-1\atop m'+n-r+1,\dots, n-r+k}\\[10pt]
			\end{array}\\
			\begin{array}{cc}
				\,j=1,\dots,n-r-1\qquad& j=n-r+1,\dots,n-r+k.
			\end{array}
		\end{array}
	\end{footnotesize}
\end{split}
\end{align}
Note that
after moving the $(n-r)$-th row of the matrix determinant in the last line of above equation to its $(n-r+m'-1)$-th row,
the $(i,j)$-element of the result becomes $A_{l_1,\dots,l_k;i,j}^{n-r-1;\pi_1,\dots,\pi_{m'}\cup\{n-r\},\dots,\pi_k}$ for all $1\leq i,j\leq n-r+k-1$.
That is to say,
\begin{align}\label{eqn:B=A}
	\det(B_{i,j})_{i,j=1}^{n-r+k}
	=\Theta_3(-Q_2)
	\cdot \det\big(A_{l_1,\dots,l_k;i,j}^{n-r-1;\pi_1,\dots,\pi_{m'}\cup\{n-r\},\dots,\pi_k}\big)_{i,j=1}^{n-r+k-1}.
\end{align}
As a consequence,
combining with equation \eqref{eqn:B=A},
the contribution in equation \eqref{eqn:res wn-r sxi} corresponds to the term in $g_{r+1}$ where
the set partition of $\{n-r,n-r+1,\dots,n\}$ is given by $\{\pi_1,,\dots,\pi_{m'}\cup\{n-r\},\dots,\pi_k\}$
(i.e., the number $n-r$ is merged into $\pi_{m'}$).

Every set partition of $\{n-r,n-r+1,\dots,n\}$ is obtained from a set partition of $\{n-r+1,\dots,n\}$ either by adding $\{n-r\}$ as a singleton or by inserting the number $n-r$ into an existing block.
Thus by previous discussions in equation \eqref{eqn:res wn-r xi} and equation \eqref{eqn:res wn-r sxi},
the $(r+1)$-step is finished and the formula \eqref{eqn:induction on r eqn} is proved.
\end{proof}

The following lemma is used in the $(r+1)$-th step of proving Theorem \ref{thm:Ft closed formula}.
\begin{lem}\label{lem:res+res=0}
	For any given $0\leq r \leq n-1$ and $1\leq l\leq n-r-1$,
	we have
	\begin{align}\label{eqn:res+res=0}
		[w_{l}^0]\ 
		\Big(\Res_{w_{n-r}=s_lw_lQ}
		+\Res_{w_{n-r}=s_{n-r}^{-1}w_lQ}\Big) \frac{g_{r}}{w_{n-r}}=0,
	\end{align}
	where $g_r$ is the function given in equation \eqref{eqn:induction on r eqn}.
\end{lem}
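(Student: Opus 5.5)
The proof generalizes Lemma \ref{cor:special res+res=0}, which is the case $r=0$, to an arbitrary summand of $g_r$. Both sides are linear in the summands of equation \eqref{eqn:induction on r eqn}. We therefore fix a set partition $\pi=\{\pi_1,\dots,\pi_k\}$ of $\{n+1-r,\dots,n\}$ and labels $l_1,\dots,l_k$, and prove the vanishing for the single term
\[
h:=\prod_{j=1}^{n-r}\prod_{a=1}^t \frac{\vartheta(-s_jw_j\xi_t^a)}{\vartheta(-w_j\xi_t^a)}\cdot \det\big(A_{l_1,\dots,l_k;i,j}^{n-r;\pi_1,\dots,\pi_k}\big)_{i,j=1}^{n-r+k}.
\]
The constant prefactors $\prod_m(\cdots)$ and $\Theta_3(-Q_2)^{r-k}$ do not involve $w_{n-r}$ or $w_l$, so they play no role.

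First I locate the two poles. The variable $w_{n-r}$ appears only in row $n-r$ and column $n-r$ of the matrix. The pole $w_{n-r}=s_lw_lQ$ comes solely from the $(l,n-r)$ entry $\Theta_3(-Q_2s_l^{-1}w_l^{-1}w_{n-r})/\vartheta(s_lw_lw_{n-r}^{-1})$. The pole $w_{n-r}=s_{n-r}^{-1}w_lQ$ comes solely from the $(n-r,l)$ entry. Both are simple poles. The entries in the block columns and rows indexed by $\pi$ (the second and third cases of \eqref{eqn:def A^{n-r,pi}_ij}) have poles only at $w_{n-r}\in -s_{\pi_m}\xi_t^{-l_m}Q^{\mathbb{Z}}$ and at $-\xi_t^{-l_m}s_{n-r}^{-1}Q^{\mathbb{Z}}$. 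For generic $s$ these are disjoint from the two poles in question.

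Next I compute each residue by cofactor expansion, exactly as in \eqref{eqn:first pole} and \eqref{eqn:second pole}. At $w_{n-r}=s_lw_lQ$, only the $(l,n-r)$ cofactor survives, and the residue of the entry is $Q^{1/2}\Theta_3(-Q_2Q)$. This equals $\Theta_3(-Q_2)$ up to the quasi-periodicity factor. What remains is $\pm\Theta_3(-Q_2)$ times the minor with row $l$ and column $n-r$ deleted, evaluated at $w_{n-r}=s_lw_lQ$. Using $\Theta_3(Qz)=Q^{-1/2}z^{-1}\Theta_3(z)$ and \eqref{eqn:mod theta}, I pull the factors of $Q$ out of row $n-r$. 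The row $n-r$ becomes row-$n-r$ data with $s_{n-r}w_{n-r}$ replaced by $s_{n-r}s_lw_l$, including its entries in the $\pi$-columns of the second type. The prefactor $\prod_a\vartheta(-s_{n-r}w_{n-r}\xi_t^a)/\vartheta(-w_{n-r}\xi_t^a)$ likewise produces $s_{n-r}^{-t}$ together with $\prod_a \vartheta(-s_{n-r}s_lw_l\xi_t^a)/\vartheta(-s_lw_l\xi_t^a)$. This cancels the $w_l$-denominator and leaves $\prod_a\vartheta(-s_{n-r}s_lw_l\xi_t^a)/\vartheta(-w_l\xi_t^a)$.

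The residue at $w_{n-r}=s_{n-r}^{-1}w_lQ$ is handled symmetrically: column $n-r$ gets replaced by column data with $w$ equal to $s_{n-r}^{-1}w_l$. I then substitute $w_l\to s_{n-r}w_l$, which does not change the $[w_l^0]$ coefficient. After this substitution the two resulting $(n-r+k-1)$-dimensional determinants have identical rows and columns up to a reordering: the first has row $l$ removed and the modified row moved, the second has column $l$ removed and the modified column moved. The prefactors then coincide, and the signs $(-1)^{n-r+l+1}$ and $(-1)^{n-r+l}$ are opposite. Hence the $[w_l^0]$ coefficient of the sum vanishes.

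The main obstacle is the bookkeeping in this last step. I need to check that the mixed entries in the $\pi$-block transform consistently under the quasi-periodicity shift by $Q$ and under $w_l\to s_{n-r}w_l$. In particular the cross entries must carry the same factors $s_{n-r}^{-t}$ and the same powers of $Q$ in both residues, and the permutation signs from moving the row or column must be tracked carefully. I will verify this entry by entry against the four cases of \eqref{eqn:def A^{n-r,pi}_ij}.
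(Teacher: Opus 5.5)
Your proposal is correct and follows the paper's proof essentially step for step. Like the paper, you:
\begin{itemize}
\item reduce to a single summand of $g_r$;
\item extract the two simple poles from the $(l,n-r)$ and $(n-r,l)$ entries by cofactor expansion;
\item strip the $Q$-shifts via quasi-periodicity, obtaining the common factors $s_{n-r}^{-t}$, $\Theta_3(-Q_2)$ and $\prod_a\vartheta(-s_{n-r}s_lw_l\xi_t^a)/\vartheta(-w_l\xi_t^a)$ with opposite signs $(-1)^{n-r+l+1}$ and $(-1)^{n-r+l}$;
\item match the two minors after the substitution $w_l\to s_{n-r}w_l$.
\end{itemize}

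The bookkeeping you flag does close cleanly. The modified row in the first residue, including its $\pi$-column entries, carries a uniform factor $Q_2$. This cancels against $Q^{1/2}\Theta_3(-Q_2Q)=-Q_2^{-1}\Theta_3(-Q_2)$. The modified column in the second residue, including its $\pi$-row entries, carries a uniform $Q_2^{-1}$, which cancels against $-Q^{1/2}\Theta_3(-Q_2Q^{-1})=Q_2\Theta_3(-Q_2)$. Finally, the row move and the column move each cost $(-1)^{n-r-1-l}$, so the two minors agree exactly.
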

\begin{proof}
	This result follows by direct computations.
	We first compute the residues in the left hand side of equation \eqref{eqn:res+res=0}.
	Since residues are linear,
	we can consider fixed $k$, $\pi=\{\pi_1,...,\pi_k\}$, $l_1,...,l_k$ and focus on the terms involving $w_{n-r}$ and $w_l$.
	That is to say, instead of using the function $g_r$ given in equation \eqref{eqn:induction on r eqn},
	we can consider the following function
	\begin{align*}
		\tilde{g_r}^{\pi,(l_1,\dots,l_k),l}
		:=\prod_{a=1}^t \frac{\vartheta(-s_lw_l\xi_t^a)}{\vartheta(-w_l\xi_t^a)}
		\cdot \prod_{a=1}^t \frac{\vartheta(-s_{n-r}w_{n-r}\xi_t^a)}{\vartheta(-w_{n-r}\xi_t^a)}
		\cdot \det(A_{i,j}^{n-r;\pi_1,\dots,\pi_k})_{i,j=1}^{n-r+k}
	\end{align*}
	and only need to prove
	\begin{align}\label{eqn:res+res=0 tildeg}
		[w_{l}^0]\ 
		\Big(\Res_{w_{n-r}=s_lw_lQ}
		+\Res_{w_{n-r}=s_{n-r}^{-1}w_lQ}\Big) \frac{\tilde{g_r}^{\pi,(l_1,\dots,l_k),l}}{w_{n-r}}=0.
	\end{align}
	Then,
	for the pole at $w_{n-r}=s_lw_lQ$,
	we have
	\begin{align}
		\label{eqn:slwlQresult}
		\begin{split}
		&\Res_{w_{n-r}=s_lw_lQ}\, \frac{\tilde{g_r}^{\pi,(l_1,\dots,l_k),l}}{w_{n-r}}
		=\prod_{a=1}^t \frac{\vartheta(-s_lw_l\xi_t^a)}{\vartheta(-w_l\xi_t^a)}
		\cdot \prod_{a=1}^t \frac{\vartheta(-s_{n-r}s_lw_lQ\xi_t^a)}{\vartheta(-s_lw_lQ\xi_t^a)}\\
		&\cdot \det\begin{footnotesize}
		\begin{array}{cc}
		\left(\begin{array}{c:c:c}
		\frac{\Theta_3(-Q_2s_i^{-1}w_i^{-1}w_j)}
		{\vartheta(s_iw_iw_j^{-1})}&
		0&
		\frac{\Theta_3(Q_2s_i^{-1}w_i^{-1}\xi_t^{-l_{j-n+r}})}
		{-\vartheta(-s_iw_i\xi_t^{l_{j-n+r}})}\\
		\hdashline0&
		\frac{\Theta_3(-Q_2Q)}{Q^{-1/2}}&
		0\\
		\hdashline\frac{\Theta_3(-Q_2s_i^{-1}w_i^{-1}w_j)}
		{\vartheta(s_iw_iw_j^{-1})}&
		0&
		\frac{\Theta_3(Q_2s_i^{-1}w_i^{-1}\xi_t^{-l_{j-n+r}})}
		{-\vartheta(-s_iw_i\xi_t^{l_{j-n+r}})}\\
		\hdashline \frac{\Theta_3(-Q_2s_{n-r}^{-1}s_l^{-1}w_l^{-1}Q^{-1}w_j)}
		{\vartheta(s_{n-r}s_lw_lQw_j^{-1})}&
		0 &
		\frac{\Theta_3(Q_2s_{n-r}^{-1}s_l^{-1}w_l^{-1}Q^{-1}\xi_t^{-l_{j-n+r}})}
		{-\vartheta(-s_{n-r}s_lw_lQ\xi_t^{l_{j-n+r}})}\\
		\hdashline \underbrace{\textstyle \frac{\Theta_3(Q_2s_{\pi_{i-n+r}}^{-1}\xi_t^{l_{i-n+r}}w_j)}
		{\vartheta(-s_{\pi_{i-n+r}}\xi_t^{-l_{i-n+r}}w_j^{-1})}} &
		\underbrace{\vphantom{\frac{\Big(change\Big)}{\big(height\big)}}0}&
		\underbrace{\textstyle \frac{\Theta_3(-Q_2s_{\pi_{i-n+r}}^{-1}\xi_t^{l_{i-n+r}-l_{j-n+r}})}
		{\vartheta(s_{\pi_{i-n+r}}\xi_t^{l_{j-n+r}-l_{i-n+r}})}}
		\end{array}\right)
		&\begin{array}{l}
			\Big\}i=1,\dots,l-1\\[8pt]
			\big\}i=l\\[3.5pt]
			\Big\}i={l+1,\dots,\atop n-r-1}\\[5.5pt]
			\Big\}i=n-r\\[14.5pt]
			\Bigg\}i={n-r+1,\dots,\atop n-r+k}\\[8.5pt]
		\end{array}\\
		\begin{array}{ccc}
			\,\quad j=1,\dots,n-r-1&\,\,\,\quad j=n-r&\, j=n-r+1,\dots,n-r+k
		\end{array}
		\end{array}
		\end{footnotesize}\\
		&=s_{n-r}^{-t}
		\cdot \prod_{a=1}^t \frac{\vartheta(-s_{n-r}s_lw_lQ\xi_t^a)}{\vartheta(-w_lQ\xi_t^a)}
		\cdot (-1)^{n-r+l+1}\Theta_3(-Q_2)\\
		&\qquad\quad\cdot \det\begin{footnotesize}
		\begin{array}{cc}
			\left(\begin{array}{c:c}
				\frac{\Theta_3(-Q_2s_i^{-1}w_i^{-1}w_j)}
				{\vartheta(s_iw_iw_j^{-1})}&
				\frac{\Theta_3(Q_2s_i^{-1}w_i^{-1}\xi_t^{-l_{j-n+r}})}
				{-\vartheta(-s_iw_i\xi_t^{l_{j-n+r}})}\\
				\hdashline \frac{\Theta_3(-Q_2s_{n-r}^{-1}s_l^{-1}w_l^{-1}w_j)}
				{\vartheta(s_{n-r}s_lw_lw_j^{-1})}&
				\frac{\Theta_3(Q_2s_{n-r}^{-1}s_l^{-1}w_l^{-1}\xi_t^{-l_{j-n+r}})}
				{-\vartheta(-s_{n-r}s_lw_l\xi_t^{l_{j-n+r}})}\\
				\hdashline \underbrace{\textstyle \frac{\Theta_3(Q_2s_{\pi_{i-n+r}}^{-1}\xi_t^{l_{i-n+r}}w_j)}
					{\vartheta(-s_{\pi_{i-n+r}}\xi_t^{-l_{i-n+r}}w_j^{-1})}} &
				\underbrace{\textstyle \frac{\Theta_3(-Q_2s_{\pi_{i-n+r}}^{-1}\xi_t^{l_{i-n+r}-l_{j-n+r}})}
					{\vartheta(s_{\pi_{i-n+r}}\xi_t^{l_{j-n+r}-l_{i-n+r}})}}
			\end{array}\right)
			&\begin{array}{l}
				\Big\}i={1,\dots,l-1, \atop l+1,\dots,n-r-1}\\[8pt]
				\Big\}i=n-r\\[14.5pt]
				\Bigg\}i={n-r+1,\dots,\atop n-r+k}\\[8.5pt]
			\end{array}\\
			\begin{array}{cc}
				\,\quad j=1,\dots,n-r-1&\, j=n-r+1,\dots,n-r+k.
			\end{array}
		\end{array}
		\end{footnotesize}
		\end{split}
	\end{align}
	Notice that in the first equal sign we have used $\Res_{w_{n-r}=s_lw_lQ}\frac{\Theta_3(-Q_2s_l^{-1}w_l^{-1}w_{n-r})}
	{\vartheta(s_lw_lw_{n-r}^{-1})}=\frac{\Theta_3(-Q_2Q)}{Q^{-1/2}}$,
	and in the second equal sign we have used the modular properties of theta functions.
	
	For the pole at $w_{n-r}=s_{n-r}^{-1}w_lQ$,
	the computation of taking residue is similar to the previous one,
	so we omit the details and only write down the result as
	\begin{align}\label{eqn:sn-r-1wlQresult}
		\begin{split}
		&\Res_{w_{n-r}=s_{n-r}^{-1}w_lQ}\, \frac{\tilde{g_r}^{\pi,(l_1,\dots,l_k),l}}{w_{n-r}}
		=s_{n-r}^{-t}
		\cdot \prod_{a=1}^t \frac{\vartheta(-s_lw_lQ\xi_t^a)}{\vartheta(-s_{n-r}^{-1}w_lQ\xi_t^a)}
		\cdot (-1)^{n-r+l}\Theta_3(-Q_2)\cdot\\
		&\begin{scriptsize}
			\begin{array}{cc}
				\left|\begin{array}{c:c:c}
					\frac{\Theta_3(-Q_2s_i^{-1}w_i^{-1}w_j)}
					{\vartheta(s_iw_iw_j^{-1})}&
					\frac{\Theta_3(-Q_2s_i^{-1}w_i^{-1}s_{n-r}^{-1}w_l)}
					{\vartheta(s_iw_is_{n-r}w_l^{-1})}&
					\frac{\Theta_3(Q_2s_i^{-1}w_i^{-1}\xi_t^{-l_{j-n+r}})}
					{-\vartheta(-s_iw_i\xi_t^{l_{j-n+r}})}\\
					\hdashline \underbrace{\textstyle \frac{\Theta_3(Q_2s_{\pi_{i-n+r}}^{-1}\xi_t^{l_{i-n+r}}w_j)}
						{\vartheta(-s_{\pi_{i-n+r}}\xi_t^{-l_{i-n+r}}w_j^{-1})}} &
					\underbrace{\textstyle \frac{\Theta_3(Q_2s_{\pi_{i-n+r}}^{-1}\xi_t^{l_{i-n+r}}s_{n-r}^{-1}w_l)}
						{\vartheta(-s_{\pi_{i-n+r}}\xi_t^{-l_{i-n+r}}s_{n-r}w_l^{-1})}}&
					\underbrace{\textstyle \frac{\Theta_3(-Q_2s_{\pi_{i-n+r}}^{-1}\xi_t^{l_{i-n+r}-l_{j-n+r}})}
						{\vartheta(s_{\pi_{i-n+r}}\xi_t^{l_{j-n+r}-l_{i-n+r}})}}
				\end{array}\right|
				&\begin{array}{l}
					\Big\}i={1,\dots,\atop n-r-1}\\[5.5pt]
					\Bigg\}i={n-r+1,\dots,\atop n-r+k}\\[8.5pt]
				\end{array}\\
				\begin{array}{cccc}
					\,\quad j={1,\dots,l-1 \atop l+1,\dots,n-r-1}&\,\qquad\qquad\qquad j=n-r&\, \qquad\qquad\qquad
					j=n-r+1,\dots,n-r+k,
				\end{array}
			\end{array}
		\end{scriptsize}
		\end{split}
	\end{align}
	where we have used $\Res_{w_{n-r}=s_{n-r}^{-1}w_lQ}\frac{\Theta_3(-Q_2s_{n-r}^{-1}w_{n-r}^{-1}w_l)}
	{\vartheta(s_{n-r}w_{n-r}w_{l}^{-1})}=\frac{\Theta_3(-Q_2Q)}{-Q^{-1/2}}$.
	Observation that
	if we deal with the matrix determinant in equation \eqref{eqn:slwlQresult} by moving its $l$-th column to its $(n-r-1)$-th column,
	and at the same time we deal with the matrix determinant in equation \eqref{eqn:sn-r-1wlQresult} by letting $w_{l}\rightarrow s_{n-r}w_{l}$
	and moving its $l$-row to the $(n-r-1)$-th row,
	the results become opposites.
	Because the change of variable $w_{l}\rightarrow s_{n-r}w_{l}$ does not affect the coefficient of $w_{l}^0$,
	the coefficient of $w_{l}^0$ in the sum of these two residues vanishes, which proves  equation \eqref{eqn:res+res=0}.
\end{proof}

As noted in the introduction,
both sides of equation \eqref{eqn:Ft closed formula} are actually independent of the variable $Q_2$.
Hence
we may set $Q_2$ to any convenient value to obtain a new formula for the $n$-point function $F_t(Q;s_1,\dots,s_n)$.
For examples,
for any given $r$ such that $1\leq r <n$,
one can set $Q_2=Q^{-1/2}\cdot\prod_{i=1}^r s_i$ to obtain the following formula,
which looks like simpler since it only involves the theta function $\vartheta(z)$.
\begin{cor}\label{cor:Ft closed formula-Q2}
	For any integer $r$ such that $1\leq r <n$,
	we have
	\begin{align}\label{eqn:Ft closed formula-Q_2}
		\begin{split}
			F_t(Q;s_1,&\dots,s_n)
			=\frac{1}{\prod_{j=1}^n(s_j^{t/2}-s_j^{-t/2})
				\cdot \vartheta(s_{[n]\setminus[r]}^{-1})}
			\cdot\sum_{k=1}^{n}
			\bigg(\frac{1}{\vartheta(s_{[r]})^{k-1}}\\
			&\cdot
			\sum_{\{\pi_1,\dots,\pi_k\}\vdash[n]}
			\sum_{l_1,\dots,l_k=1}^t
			\prod_{m=1}^k \frac{\prod_{a=1}^t\vartheta(s_{\pi_m}\xi_t^{a-l_m})}
			{\prod_{a=1 \atop a\neq l_m}^t\vartheta(\xi_t^{a-l_m})}
			\cdot \det\Big(\frac{\vartheta(s_{[r]}s_{\pi_{i}}^{-1}\xi_t^{l_{i}-l_{j}})}
			{\vartheta(s_{\pi_{i}}\xi_t^{l_{j}-l_{i}})}\Big)_{i,j=1}^{k}\bigg).
		\end{split}
	\end{align}
\end{cor}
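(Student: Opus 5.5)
The plan is to specialize Theorem \ref{thm:Ft closed formula} at $Q_2=Q^{-1/2}s_{[r]}$. This is allowed because, as noted above, $F_t(Q;s_1,\dots,s_n)$ does not depend on $Q_2$, so the right hand side of \eqref{eqn:Ft closed formula} is independent of $Q_2$. Concretely, $Q_2$ enters through the auxiliary weight $Q_2^{-C}$ in the proof of Proposition \ref{thm:qZn as det}, and the identity holds for every nonzero $Q_2$ at which the theta denominators $\Theta_3(-Q_2)$ and $\Theta_3(-Q_2s_{[n]}^{-1})$ do not vanish. For generic $s_1,\dots,s_n$ this holds at $Q_2=Q^{-1/2}s_{[r]}$, since these values become nonzero multiples of $\vartheta(s_{[r]})$ and $\vartheta(s_{[n]\setminus[r]}^{-1})$. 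For general $s_j$ one then extends by analytic continuation in the $s_j$.

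The key identity converts $\Theta_3$ into $\vartheta$. From $\Theta_3(z;Q)=j(-zQ^{1/2};Q)$ and $\vartheta(z;Q)=z^{-1/2}j(z;Q)/j'(1;Q)$ we get, for every $x$,
\begin{align*}
\Theta_3(-Q^{-1/2}s_{[r]}x;Q)=j(s_{[r]}x;Q)=c\,(s_{[r]}x)^{1/2}\,\vartheta(s_{[r]}x;Q),\qquad c:=j'(1;Q)=\prod_{b\geq1}(1-Q^b)^3 .
\end{align*}
Applying this with $x=1$, with $x=s_{[n]}^{-1}$, and with $x=s_{\pi_i}^{-1}\xi_t^{l_i-l_j}$ turns every $\Theta_3$ in \eqref{eqn:Ft closed formula} into $\vartheta$, with $s_{[r]}s_{[n]}^{-1}=s_{[n]\setminus[r]}^{-1}$ in the second case. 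I must fix the square-root branches consistently, for example by working with $s_j=e^{z_j}$ and using the same logarithms throughout. I also need the two-valuedness remarks for $\vartheta$ to check that no stray signs appear.

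The remaining step is bookkeeping of the scalar prefactors. In the $k\times k$ determinant, entry $(i,j)$ acquires the factor $c\,s_{[r]}^{1/2}s_{\pi_i}^{-1/2}\xi_t^{(l_i-l_j)/2}$. The factor $\xi_t^{(l_i-l_j)/2}$ is a diagonal conjugation $DMD^{-1}$ and drops out of the determinant. The remaining factors come out row by row and give $c^k s_{[r]}^{k/2}\prod_i s_{\pi_i}^{-1/2}=c^ks_{[r]}^{k/2}s_{[n]}^{-1/2}$, since $\{\pi_i\}$ partitions $[n]$. In the denominator, $\Theta_3(-Q_2)^{k-1}\Theta_3(-Q_2s_{[n]}^{-1})$ becomes $c^{k}s_{[r]}^{k/2}s_{[n]}^{-1/2}\,\vartheta(s_{[r]})^{k-1}\vartheta(s_{[n]\setminus[r]}^{-1})$. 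These factors cancel exactly for each $k$, which yields \eqref{eqn:Ft closed formula-Q_2}.

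The main obstacle I expect is not this algebra but justifying the specialization rigorously. The formula was derived in the region \eqref{eqn:region}, where $Q_2$ was implicitly a formal or free parameter. I would therefore argue that both sides are meromorphic in $Q_2\in\mathbb{C}^*$ and that the left side is constant, and then check the sign and branch conventions at the special point.
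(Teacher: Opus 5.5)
Your proposal is correct and is essentially the paper's proof. You specialize $Q_2=Q^{-1/2}s_{[r]}$ in Theorem~\ref{thm:Ft closed formula}, rewrite each $\Theta_3$ via $\Theta_3(-Q^{-1/2}z)=j(z)=j'(1)\,z^{1/2}\vartheta(z)$, and cancel the row factors and the diagonal conjugation by $\xi_t^{(l_i-l_j)/2}$ against $\Theta_3(-Q_2)^{k-1}\Theta_3(-Q_2s_{[n]}^{-1})$. Your added remarks make explicit what the paper leaves implicit when it invokes $Q_2$-independence: the nonvanishing of these denominators for generic $s_j$, consistent square-root branches, and meromorphy in $Q_2$.
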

\begin{proof}
	This corollary can be directly derived from Theorem \ref{thm:Ft closed formula} by setting $Q_2\rightarrow Q^{-1/2}\cdot s_{[r]}$.
	Notice that
	\begin{align*}
		\Theta_3(-Q^{-1/2}z)=j(z)=j'(1)\cdot z^{1/2}\theta(z).
	\end{align*}
	Then we have
	\begin{align*}
		\frac{\Theta_3(-Q_2)}{\Theta_3(-Q_2s_{[n]}^{-1})}\Big|_{Q_2\rightarrow Q^{-1/2}\cdot s_{[r]}}
		=s_{[n]}^{1/2}\cdot
		\frac{\vartheta(s_{[r]})}{\vartheta(s_{[n]\setminus[r]}^{-1})}
	\end{align*}
	and
	\begin{align*}
		\frac{\Theta_3(-Q_2s_{\pi_{i}}^{-1}\xi_t^{l_{i}-l_{j}})}{\Theta_3(-Q_2)}\Big|_{Q_2\rightarrow Q^{-1/2}\cdot s_{[r]}}
		=(s_{\pi_{i}}^{-1}\xi_t^{l_{i}-l_{j}})^{1/2}
		\cdot \frac{\vartheta(s_{[r]}s_{\pi_{i}}^{-1}\xi_t^{l_{i}-l_{j}})}{\vartheta(s_{[r]})}.
	\end{align*}
	As a result,
	the equation \eqref{eqn:Ft closed formula-Q_2} follows from setting $Q_2$ to be $Q^{-1/2}\cdot s_{[r]}$ in equation \eqref{eqn:Ft closed formula} and basic properties of matrix determinant.
\end{proof}

\section{Applications}
\label{sec:app}

\subsection{The quasimodularity for correlation functions of $t$-core partitions}
In this subsection,
we prove the quasimodularity for correlation functions of $t$-core partitions.
The main tool is the closed formula derived in the last section.

We refer the reader to the book \cite{DS05} by Diamond and Shurman for more details on Eisenstein series for congruence subgroups.
In this paper,
let $\Gamma_1(N)$ be the congruence subgroup of $SL_2(\mathbb{Z})$ such that
\begin{align*}
	\Gamma_1(N)
	=\left\{\left(\begin{matrix}
		a\ b\\
		c\ d
	\end{matrix}\right)
	\equiv \left(\begin{matrix}
		1\ \star\\
		0\ 1
	\end{matrix}\right) \mod N \right\}
	\subseteq SL_2(\mathbb{Z}).
\end{align*}
Let $s_j=e^{z_j}, j=1,\dots,n$.
We expand the $n$-point function $F_t(Q;s_1,\dots,s_n)$ of the $t$-core partitions by
\begin{align}
	F_t(Q;s_1,\dots,s_n)
	=\sum_{l_1,l_2,\dots,l_n=0}^\infty
	\langle f_{l_1}f_{l_2}\dots f_{l_n}\rangle_Q^{\text{t-core}}
	\cdot \prod_{j=1}^n z_j^{l_j-1}.
\end{align}
The $f_k(\cdot)$ could be regarded as a function on the set of partitions,
which is exactly equal to the function $Q_k(\cdot)$ used by Zagier (see Section 3 in \cite{Z16}).
The $\langle \cdots \rangle_Q^{\text{t-core}}$ is similar to the Bloch--Okounkov's $q$-bracket but restricted to the set of $t$-core partitions.
Then the correlation functions of $t$-core partitions
$$\langle f_{l_1}f_{l_2}\cdots f_{l_n}\rangle_Q^{\text{t-core}}$$
are functions of $Q=e^{2\pi i \tau}$.
\begin{prop}\label{prop:quasimod}
	For any given non-negative integers $l_1,\dots,l_n$,
	the correlation function of $t$-core partitions $\langle f_{l_1}f_{l_2}\dots f_{l_n}\rangle_Q^{\textup{t-core}}$ is a quasimodular form of weight at most $\sum_{j=1}^n l_j$ for the congruence subgroup $\Gamma_1(t)$.
\end{prop}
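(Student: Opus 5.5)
The plan is to start from Corollary \ref{cor:Ft closed formula-Q2} rather than Theorem \ref{thm:Ft closed formula}, since it expresses $F_t$ purely in terms of $\vartheta(z;Q)$ evaluated at products of the $s_j=e^{z_j}$ times $t$-th roots of unity. This avoids the auxiliary parameter $Q_2$ and the function $\Theta_3$. The case $n=1$ has to be handled directly from Theorem \ref{thm:Ft closed formula}, because there $r$ ranges over an empty set. In that case the only set partition is $\{[1]\}$, so $k=1$ and the $1\times 1$ determinant is $\Theta_3(-Q_2s_1^{-1})/\vartheta(s_1)$. This cancels against the prefactor and leaves
\[
F_t(Q;s_1)=\frac{1}{(s_1^{t/2}-s_1^{-t/2})\,\vartheta(s_1)}\sum_{l=1}^t\frac{\prod_{a=1}^t\vartheta(s_1\xi_t^{a-l})}{\prod_{a\neq l}\vartheta(\xi_t^{a-l})}.
\]
For general $n$ we fix $r=1$. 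Every factor in \eqref{eqn:Ft closed formula-Q_2} is then of one of three kinds:
\begin{itemize}
\item $\vartheta(e^{x})$, with $x$ a linear combination of the $z_j$;
\item $\vartheta(e^{x}\zeta)$, with $\zeta\neq 1$ a $t$-th root of unity;
\item the constants $\vartheta(\zeta)$ and the elementary factor $s_j^{t/2}-s_j^{-t/2}$.
\end{itemize}

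The key input is the Taylor expansion of these theta factors in $z$, with coefficients that are quasimodular in $\tau$.

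For the untwisted factor, the standard identity
\[
\vartheta(e^{z};Q)=z\exp\Bigl(-\sum_{k\ge 1}\frac{2}{(2k)!}G_{2k}(\tau)z^{2k}\Bigr)
\]
holds, with $G_{2k}$ the level one Eisenstein series (quasimodular for $k=1$). Hence each coefficient of $z^m$ in $\vartheta(e^z)/z$ or its reciprocal is quasimodular of weight $m$ for $SL_2(\mathbb{Z})$.

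For the twisted factor $\vartheta(\zeta e^{z})$ with $\zeta=\xi_t^{c}\neq1$, I would write
\[
\log\frac{\vartheta(\zeta e^{z})}{\vartheta(\zeta)}=\sum_{m\ge1}\frac{z^m}{m!}D^{m}\log\vartheta(\zeta e^{w})\big|_{w=0}.
\]
Expanding the product formula \eqref{eqn:def vartheta}, these derivatives are $q$-series of the shape
\[
\delta_{m,\cdot}+\sum_{b\ge1}\sum_{d\mid b}d^{m-1}\bigl(\zeta^{d}+(-1)^{m}\zeta^{-d}\bigr)Q^b.
\]
These are exactly the Eisenstein series $G_m^{(c)}$ of level $t$ attached to the torsion point $c/t$ (as in \cite{DS05}). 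They are modular of weight $m$ for $\Gamma_1(t)$; the case $m=2$ is genuinely modular because $\zeta\ne1$, and $m=1$ gives weight one. The constants $\vartheta(\zeta)$ are nonvanishing, weight $-1$ modular forms for $\Gamma_1(t)$ (Klein forms / Siegel units up to $\eta^3$). Their products and reciprocals are therefore allowed, provided we track weights and work in the ring generated by quasimodular forms and these units. I would check that each summand is homogeneous, meaning that assigning weight $-1$ to each $\vartheta$ and weight $-1$ to each $z_j$ makes the total weight of each summand $-n$, up to the $1/(s_j^{t/2}-s_j^{-t/2})$ factors. Those factors expand as $1/(tz_j)$ times a power series in $z_j$ with rational coefficients, which can only lower the weight; this is the source of the phrase ``at most''.

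Concretely, I would proceed in three steps. First, rewrite each summand of \eqref{eqn:Ft closed formula-Q_2} as (constant theta values) $\times$ (power series in the $z$'s whose coefficients are quasimodular), with each coefficient of $\prod z_j^{m_j}$ of pure weight $\sum m_j$ plus a fixed offset. Second, verify that the offset makes the coefficient of $\prod_j z_j^{l_j-1}$ have weight exactly $\sum l_j$ before the rational $(s_j^{t/2}-s_j^{-t/2})^{-1}$ correction. Third, sum over $k$, $\pi$ and $l_m$. This sum should remove the dependence on the non-holomorphic-looking units $\vartheta(\zeta)^{-1}$, because $F_t$ is a $q$-series with integer-like coefficients and its $Q$-expansion is holomorphic at all cusps.

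The main obstacle is this last point. A priori, the coefficients lie in the field generated by quasimodular forms and inverses of $\vartheta(\xi_t^c)$, and it is not obvious that they are holomorphic quasimodular forms. I would first try to argue that the ratios
\[
\prod_a\vartheta(s\xi_t^{a-l})\Big/\prod_{a\neq l}\vartheta(\xi_t^{a-l})
\]
combine, via the product formula, into $\vartheta(s^t;Q^t)$-type expressions divided by an explicit $\eta$-quotient. The numerator $\prod_a \vartheta(s\xi_t^a)$ is essentially $\vartheta(s^t;Q^t)$ up to $\eta$-quotients, so only $\eta$-quotients remain to be controlled. Holomorphy at the cusps of $\Gamma_1(t)$ would then follow from the fact that $F_t$ has a bounded $Q$-expansion together with the transformation behaviour. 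If that fails, the fallback is to rely on the moderate growth of the $Q$-expansion coefficients of the bracket $\langle\cdot\rangle^{\text{t-core}}$ at every cusp.
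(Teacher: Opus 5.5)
\textbf{There is a gap in your third step.} Your route is the paper's: take $r=1$ in Corollary~\ref{cor:Ft closed formula-Q2}, expand the theta factors in the $z_j$ with quasimodular coefficients, and expand $1/(s_j^{t/2}-s_j^{-t/2})$. Your separate treatment of $n=1$ is correct, and it is actually needed, since the corollary requires $r<n$.

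The problem is how you handle the constants. You leave each summand as ``constant theta values $\times$ power series'' and then hope the sum over $k,\pi,l_m$ removes the constants $\vartheta(\xi_t^c)^{\pm1}$. To justify this you appeal to holomorphy of $F_t$ at all cusps, or to moderate growth there. Neither is available. The $Q$-expansion of $F_t$ at $i\infty$ says nothing about its behaviour at the other cusps of $\Gamma_1(t)$. Even granting such behaviour, you would still need an argument that an element of the field generated by quasimodular forms and the Klein forms $\vartheta(\xi_t^c)$ actually lies in the ring of quasimodular forms for $\Gamma_1(t)$. As written, the argument does not close.

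The missing observation is that the constants cancel inside every single summand. This is what the paper's proof means by writing each term as a product of theta ratios $\vartheta(s_{C}s_{D}^{-1}\xi_t^{c})/\vartheta(s_{C'}s_{D'}^{-1}\xi_t^{c})$, with the same twist in numerator and denominator. Concretely, the block factors split off their untwisted part:
\[
\frac{\prod_{a=1}^t\vartheta(s_{\pi_m}\xi_t^{a-l_m})}{\prod_{a\neq l_m}\vartheta(\xi_t^{a-l_m})}
=\vartheta(s_{\pi_m})\prod_{a\neq l_m}\frac{\vartheta(s_{\pi_m}\xi_t^{a-l_m})}{\vartheta(\xi_t^{a-l_m})}.
\]
For a determinant entry with $c=l_i-l_j\not\equiv 0 \pmod t$, the relation $\vartheta(z^{-1})=-\vartheta(z)$ gives
\[
\frac{\vartheta(s_1s_{\pi_i}^{-1}\xi_t^{c})}{\vartheta(s_{\pi_i}\xi_t^{-c})}
=-\,\frac{\vartheta(s_1s_{\pi_i}^{-1}\xi_t^{c})/\vartheta(\xi_t^{c})}{\vartheta(s_{\pi_i}\xi_t^{-c})/\vartheta(\xi_t^{-c})}.
\]
Entries with $c\equiv 0$, and the prefactors $\vartheta(s_{[n]\setminus\{1\}}^{-1})^{-1}$ and $\vartheta(s_1)^{1-k}$, are untwisted. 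So every summand is $\pm\prod_j(s_j^{t/2}-s_j^{-t/2})^{-1}$ times untwisted thetas times normalized ratios $\vartheta(\zeta e^{x})/\vartheta(\zeta)=\exp\bigl(\sum_{l\ge1}x^lE^{c}_l(Q)/l!\bigr)$. No $\vartheta(\zeta)$ is left over, so your Steps 1--2 finish the proof and Step 3 is unnecessary.

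Two smaller corrections:
\begin{itemize}
\item The twisted weight-$2$ coefficient is quasimodular, not ``genuinely modular''. For $t=2$ it is $2(E_2(Q)-4E_2(Q^2))$ by \eqref{eqn:log theta as E t=2}. This does not affect the conclusion.
\item The untwisted thetas produce rational prefactors such as $z_1^{1-k}(z_2+\cdots+z_n)^{-1}$, which cancel only in the total sum. So you should multiply $F_t\prod_j(s_j^{t/2}-s_j^{-t/2})$ by this common polynomial denominator before reading off coefficients. That product is coefficientwise a Laurent polynomial in the $s_j^{1/2}$, so dividing back acts on its Taylor coefficients by rational linear operations, which preserves quasimodularity and the weight bound. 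The paper's proof also passes over this point.
\end{itemize}
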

\begin{proof}
To study the correlation function of the $t$-core partitions,
we need to use the closed formula for the $n$-point function $F_t(Q;s_1,\dots,s_n)$ derived in the last Section.
The case of $r=1$ of equation \eqref{eqn:Ft closed formula-Q_2} gives
\begin{align}
	\begin{split}
		F_t(Q;s_1,&\dots,s_n)
		=\frac{1}{\prod_{j=1}^n(s_j^{t/2}-s_j^{-t/2})
			\cdot \vartheta(s_{[n]\setminus\{1\}}^{-1})}
		\cdot\sum_{k=1}^{n}
		\bigg(\frac{1}{\vartheta(s_1)^{k-1}}\\
		&\cdot
		\sum_{\{\pi_1,\dots,\pi_k\}\vdash[n]}
		\sum_{l_1,\dots,l_k=1}^t
		\prod_{m=1}^k \frac{\prod_{a=1}^t\vartheta(s_{\pi_m}\xi_t^{a-l_m})}
		{\prod_{a=1 \atop a\neq l_m}^t\vartheta(\xi_t^{a-l_m})}
		\cdot \det\Big(\frac{\vartheta(s_{[1]}s_{\pi_{i}}^{-1}\xi_t^{l_{i}-l_{j}})}
		{\vartheta(s_{\pi_{i}}\xi_t^{l_{j}-l_{i}})}\Big)_{i,j=1}^{k}\bigg).
	\end{split}
\end{align}
We note that the right hand side of above equation is a linear combination of terms of the following forms
\begin{align}\label{eqn:as linear}
	\frac{1}{\prod_{j=1}^n(s_j^{t/2}-s_j^{-t/2})}
	\cdot \prod_{i}\frac{\vartheta(s_{C_{i}}s_{D_{i}}^{-1}\xi_t^{c_{i}})}{\vartheta(s_{C'_{i}}s_{D'_{i}}^{-1}\xi_t^{c_{i}})},
\end{align}
where $C_i, C'_i, D_{i}, D'_{i}$ represent some subsets of $[n]=\{1,\dots,n\}$ and $c_i$ could be $0,\dots,t-1$.
The second factor in above equation \eqref{eqn:as linear} is known as a theta ratio in \cite{Eng21,EO06},
whose expansion yields quasimodular forms.
For example,
see Section 5.3 in \cite{Eng21} for the following equation
\begin{align*}
	\log \frac{\vartheta(\xi_t^r e^{z})} {\vartheta(\xi_t^r)}
	=\sum_{l=1}^\infty \frac{z^l}{l!}E^{r}_l(Q),
\end{align*}
where $E^{r}_l(Q)$ are quasimodular forms of weight $l$ for the congruence subgroup $\Gamma_1(t)$.
On the other hand,
since $s_j=e^{z_j}$,
the first factor in above equation \eqref{eqn:as linear} can be expanded in terms of
\begin{align*}
	\frac{1}{s_j^{t/2}-s_j^{-t/2}}
	=\frac{1}{2\sinh(tz_j/2)}
	=\frac{1}{t}z_j^{-1}
	-\frac{1}{24}tz_j
	+\frac{7}{5760}t^3z_j^3
	-\frac{31}{967680} t^5 z_j^5
	+\cdots.
\end{align*}
As a consequence,
the coefficient of $s_1^{l_1-1}\cdots s_{n}^{l_n-1}$ in the $n$-point function $F_t(Q;s_1,\dots,s_n)$ of $t$-core partitions must be a linear combination of products of Eisenstein series $E^{r}_l(Q)$,
whose total weights are at most $\sum_{j=1}^n l_j$.
This finishes the proof of this proposition.
\end{proof}

\subsection{The closed formulas for the $n=1,2,3$ cases}
We write down the explicit formulas for the $n$-point function of the $t$-core partitions when $n=1,2,3$.
\begin{ex}
	When $n=1$,
	the one-point function $F_t(Q;s)$ of $t$-core partitions is given by
	\begin{align}
		F_t(Q;s)=
		\frac{t}{(s^{t/2}-s^{-t/2})}
		\cdot \prod_{a=1}^{t-1} \frac{\vartheta(s\xi_t^{a};Q)}
		{\vartheta(\xi_t^{a};Q)}.
	\end{align}
\end{ex}

\begin{ex}
	When $n=2$,
	the two-point function $F_t(Q;s_1,s_2)$ of $t$-core partitions is given by
	\begin{align}\label{eqn:Fn=2 with Q2}
		\begin{split}
			F_t(Q;s_1,s_2)
			=&\frac{1}{\prod_{j=1}^2(s_j^{t/2}-s_j^{-t/2})}
			\cdot \Bigg(t\prod_{a=1}^{t-1} \frac{\vartheta(s_1s_2\xi_t^{a})}{\vartheta(\xi_t^{a})}
			+\frac{\vartheta(s_1)\vartheta(s_2)}{\Theta_3(-Q_2s_1^{-1}s_2^{-1})\Theta_3(-Q_2)}\\
			&\quad\cdot \prod_{a=1}^{t-1} \frac{\vartheta(s_1\xi_t^{a})\vartheta(s_2\xi_t^{a})}
			{\vartheta(\xi_t^{a})^2}
			\cdot \sum_{l_1,l_2=1\atop l_1\neq l_2}^t
			\left|\begin{matrix}
				\frac{\Theta_3(-Q_2s_1^{-1})}
				{\vartheta(s_1)}
				& \frac{\Theta_3(-Q_2s_1^{-1}\xi_t^{l_{1}-l_{2}})}
				{\vartheta(s_1\xi_t^{l_{2}-l_{1}})}\\
				\frac{\Theta_3(-Q_2s_2^{-1}\xi_t^{l_{2}-l_{1}})}
				{\vartheta(s_2\xi_t^{l_{1}-l_{2}})}
				& \frac{\Theta_3(-Q_2s_2^{-1})}
				{\vartheta(s_2)}
			\end{matrix}\right|\Bigg).
		\end{split}
	\end{align}
\end{ex}

As what we have done in Corollary \ref{cor:Ft closed formula-Q2},
we can set $Q_2$ in equation \eqref{eqn:Fn=2 with Q2} to be an arbitrary value to obtain a simpler formula for $F_2(Q;s_1,s_2)$.
For example,
when setting $Q_2=Q^{-1/2}s_1$,
we have
\begin{align}\label{eqn:Fn=2 without Q2}
	\begin{split}
		F_t(Q;s_1,s_2)
		=&\frac{1}{\prod_{j=1}^2(s_j^{t/2}-s_j^{-t/2})}
		\cdot \Bigg(t\prod_{a=1}^{t-1} \frac{\vartheta(s_1s_2\xi_t^{a})}{\vartheta(\xi_t^{a})}
		\\
		&\qquad\qquad+ \prod_{a=1}^{t-1} \frac{\vartheta(s_1\xi_t^{a})\vartheta(s_2\xi_t^{a})}
		{\vartheta(\xi_t^{a})^2}
		\cdot\sum_{l_1,l_2=1\atop l_1\neq l_2}^t
		\frac{\vartheta(s_1^{-1}s_2\xi_t^{l_{1}-l_{2}})\vartheta(\xi_t^{l_{1}-l_{2}})}
		{\vartheta(s_1^{-1}\xi_t^{l_{1}-l_{2}})\vartheta(s_2\xi_t^{l_{1}-l_{2}})}\Bigg).
	\end{split}
\end{align}
It will be interesting to provide a direct combinatorial  proof of the equivalence between the above two formulas \eqref{eqn:Fn=2 with Q2} and \eqref{eqn:Fn=2 without Q2}.
It seems that this equality should be a corollary of the Frobenius determinantal formula (see \cite{Fro} and Lemma 4.3 in \cite{Rains}).

\begin{ex}
	When $n=3$,
	the three-point function $F_t(Q;s_1,s_2,s_3)$ of $t$-core partitions is given by
	\begin{align*}
		F_t(Q;&s_1,s_2,s_3)
		=\frac{1}{\prod_{j=1}^3(s_j^{t/2}-s_j^{-t/2})}
		\cdot\Bigg(t\prod_{a=1}^{t-1} \frac{\vartheta(s_1s_2s_3\xi_t^{a})}
		{\vartheta(\xi_t^{a})}
		+\frac{1}{\Theta_3(-Q_2s_1^{-1}s_2^{-1}s_3^{-1})\Theta_3(-Q_2)}\\
		&\ \cdot\bigg(\vartheta(s_1)\vartheta(s_2s_3)
		\cdot\prod_{a=1}^{t-1} \frac{\vartheta(s_1\xi_t^{a})\vartheta(s_2s_3\xi_t^{a})}
		{\vartheta(\xi_t^{a})^2}
		\cdot\sum_{l_1,l_2=1\atop l_1\neq l_2}^t
		\left|\begin{matrix}
			\frac{\Theta_3(-Q_2s_1^{-1})}
			{\vartheta(s_1)}
			& \frac{\Theta_3(-Q_2s_1^{-1}\xi_t^{l_{1}-l_{2}})}
			{\vartheta(s_1\xi_t^{l_{2}-l_{1}})}\\
			\frac{\Theta_3(-Q_2s_2^{-1}s_3^{-1}\xi_t^{l_{2}-l_{1}})}
			{\vartheta(s_2s_3\xi_t^{l_{1}-l_{2}})}
			& \frac{\Theta_3(-Q_2s_2^{-1}s_3^{-1})}
			{\vartheta(s_2s_3)}
		\end{matrix}\right|\\
		&\ \ \ +\vartheta(s_1s_2)\vartheta(s_3)
		\cdot\prod_{a=1}^{t-1} \frac{\vartheta(s_1s_2\xi_t^{a})\vartheta(s_3\xi_t^{a})}
		{\vartheta(\xi_t^{a})^2}
		\cdot\sum_{l_1,l_2=1\atop l_1\neq l_2}^t
		\left|\begin{matrix}
			\frac{\Theta_3(-Q_2s_1^{-1}s_2^{-1})}
			{\vartheta(s_1s_2)}
			& \frac{\Theta_3(-Q_2s_1^{-1}s_2^{-1}\xi_t^{l_{1}-l_{2}})}
			{\vartheta(s_1s_2\xi_t^{l_{2}-l_{1}})}\\
			\frac{\Theta_3(-Q_2s_3^{-1}\xi_t^{l_{2}-l_{1}})}
			{\vartheta(s_3\xi_t^{l_{1}-l_{2}})}
			& \frac{\Theta_3(-Q_2s_3^{-1})}
			{\vartheta(s_3)}
		\end{matrix}\right|\\
		&\ \ \ +\vartheta(s_1s_3)\vartheta(s_2)
		\cdot\prod_{a=1}^{t-1} \frac{\vartheta(s_1s_3\xi_t^{a})\vartheta(s_2\xi_t^{a})}
		{\vartheta(\xi_t^{a})^2}
		\cdot\sum_{l_1,l_2=1\atop l_1\neq l_2}^t
		\left|\begin{matrix}
			\frac{\Theta_3(-Q_2s_1^{-1}s_3^{-1})}
			{\vartheta(s_1s_3)}
			& \frac{\Theta_3(-Q_2s_1^{-1}s_3^{-1}\xi_t^{l_{1}-l_{2}})}
			{\vartheta(s_1s_3\xi_t^{l_{2}-l_{1}})}\\
			\frac{\Theta_3(-Q_2s_2^{-1}\xi_t^{l_{2}-l_{1}})}
			{\vartheta(s_2\xi_t^{l_{1}-l_{2}})}
			& \frac{\Theta_3(-Q_2s_2^{-1})}
			{\vartheta(s_2)}
		\end{matrix}\right|\bigg)\\
		&+\frac{\vartheta(s_1)\vartheta(s_2)\vartheta(s_3)}{\Theta_3(-Q_2s_1^{-1}s_2^{-1}s_3^{-1})\Theta_3(-Q_2)^2}
		\cdot \prod_{j=1}^3\prod_{a=1}^{t-1} \frac{\vartheta(s_j\xi_t^{a})}{\vartheta(\xi_t^{a})}\\
		&\qquad\qquad\cdot\sum_{l_1,l_2,l_3=1\atop l_1\neq l_2\neq l_3\neq l_1}^t
		\left|
		\begin{matrix}
			\frac{\Theta_3\big(-Q_2s_1^{-1};Q\big)}
			{\vartheta(s_1;Q)}
			&\frac{\Theta_3\big(-Q_2s_1^{-1}\xi_t^{l_1-l_2};Q\big)}
			{\vartheta(s_1\xi_t^{l_2-l_1};Q)}
			&\frac{\Theta_3\big(-Q_2s_1^{-1}\xi_t^{l_1-l_3};Q\big)}
			{\vartheta(s_1\xi_t^{l_3-l_1};Q)}\\
			\frac{\Theta_3\big(-Q_2s_2^{-1}\xi_t^{l_2-l_1};Q\big)}
			{\vartheta(s_2\xi_t^{l_1-l_2};Q)}
			&\frac{\Theta_3\big(-Q_2s_2^{-1};Q\big)}
			{\vartheta(s_2;Q)}
			&\frac{\Theta_3\big(-Q_2s_2^{-1}\xi_t^{l_2-l_3};Q\big)}
			{\vartheta(s_2\xi_t^{l_3-l_2};Q)}\\
			\frac{\Theta_3\big(-Q_2s_3^{-1}\xi_t^{l_3-l_1};Q\big)}
			{\vartheta(s_3\xi_t^{l_1-l_3};Q)}
			&\frac{\Theta_3\big(-Q_2s_3^{-1}\xi_t^{l_3-l_2};Q\big)}
			{\vartheta(s_3\xi_t^{l_2-l_3};Q)}
			&\frac{\Theta_3\big(-Q_2s_3^{-1};Q\big)}
			{\vartheta(s_3;Q)}\\
		\end{matrix}\right|\Bigg).
	\end{align*}
\end{ex}

\subsection{The case of 2-core partitions}

All the 2-core partitions are given by
$$\nu^n=(n,n-1,\dots,1),\ \ n\in\mathbb{N}.$$
Then from the definition of $n$-point function $F_2(Q;s_1,\cdots,s_n)$ of 2-core partitions,
we have
\begin{align}
	\label{eqn:t=2 trueF2}
	\begin{split}
		F_2(Q;s_1,\cdots,s_n)
		=&\frac{1}{\sum_{n=0}^\infty Q^{n(n+1)/2}}
		\cdot \sum_{n=0}^\infty
		\bigg(\prod_{j=1}^n \sum_{i=1}^\infty s_j^{\nu^n_i-i+\half}
		\cdot Q^{n(n+1)/2}\bigg)\\
		=&\frac{1}{\sum_{n=0}^\infty Q^{n(n+1)/2}}
		\cdot \sum_{n=0}^\infty
		\bigg(\prod_{j=1}^n \frac{s_j^{n+1/2}+s_j^{-n-1/2}}{s_j-s_j^{-1}} \cdot Q^{n(n+1)/2}\bigg).
	\end{split}
\end{align}
For the case of $t=2, n=1$,
Theorem \ref{thm:main formula} gives
\begin{align}\label{eqn:t=2n=1}
	F_2(s)=
	\frac{2}{(s-s^{-1})}
	\cdot \frac{\vartheta(-s;Q)}
	{\vartheta(-1;Q)},
\end{align}
whose leading terms are given by
\begin{align*}
	F_2(Q;s)
	=&\frac{\sqrt{s}}{s-1}+\frac{s-1}{\sqrt{s}} Q-\frac{s-1}{\sqrt{s}} Q^{2}+\frac{(s-1) (s+1)^{2}}{s^{\frac{3}{2}}} Q^{3}-\frac{(s-1) (s^{2}+3 s+1)}{s^{\frac{3}{2}}} Q^{4}\\
	&+\frac{(s-1) (s^{2}+4 s+1)}{s^{\frac{3}{2}}} Q^{5}+\frac{(s-1) (s^{2}-3 s+1) (s+1)^{2}}{s^{\frac{5}{2}}} Q^{6}+\mathrm{O}(Q^{7}).
\end{align*}
Denote by $E_{2k}(Q)$ the classical Eisenstein series of weight $2k$ for the full modular group $SL_2(\mathbb{Z})$ with $k\geq1$,
i.e.,
\begin{align*}
	E_{2k}(Q)
	=\frac{\zeta(1-2k)}{2}
	+\sum_{n=1}^\infty \sum_{d|n \atop d>0}d^{2k-1} Q^n,
\end{align*}
where $Q=e^{2\pi i \tau}$.
Then under $s=e^z$, equation (37) in \cite{EO06} gives
\begin{align}\label{eqn:log theta as E t=2}
	\log \frac{\vartheta(-s;Q)}
	{\vartheta(-1;Q)}
	=2\sum_{k=1}^\infty
	\frac{z^{2k}}{(2k)!}\Big(E_{2k}(Q)-2^{2k}E_{2k}(Q^2)\Big).
\end{align}

Combining equations \eqref{eqn:t=2 trueF2}, \eqref{eqn:t=2n=1} and \eqref{eqn:log theta as E t=2},
we obtain
\begin{align*}
	\frac{\sum_{n=0}^\infty
		(s^{n+1/2}+s^{-n-1/2}) \cdot Q^{n(n+1)/2}}
	{2\sum_{n=0}^\infty Q^{n(n+1)/2}}
	=\exp\Big(2\sum_{k=1}^\infty
	\frac{z^{2k}}{(2k)!}\big(E_{2k}(Q)-2^{2k}E_{2k}(Q^2)\big)\Big).
\end{align*}

\subsection{The $n$-point function of all integer partitions: Bloch--Okounkov's case}
Notice that the $q$-deformed $n$-point function introduced in this paper also generalizes the ordinary $n$-point function of all integer partitions studied by Bloch--Okounkov (see Proposition \ref{prop:qZn lim F}).
Then based on the formula for the $q$-deformed $n$-point function performed in Proposition \ref{thm:qZn as det},
we can also derive the following formula for the ordinary $n$-point function of all integer partitions.
\begin{prop}
	Denote by $F(Q;s_1,\dots,s_n)$ the $n$-point function of all integer partitions.
	Then we have the following formula
	\begin{align}
		F(Q;s_1,\dots,s_n)
		=\frac{1}{\Theta_3(Q_2)^{n-1}\cdot \Theta_3(Q_2 s_{[n]}^{-1})}
		\cdot[w_1^0\dots w_n^0]\det \Big(\frac{\Theta_3(Q_2s_i^{-1}w_i^{-1}w_j)}{\vartheta(s_iw_iw_j^{-1})}\Big).
	\end{align}
\end{prop}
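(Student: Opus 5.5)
The natural route is to combine Proposition \ref{prop:qZn lim F} with the determinant formula of Proposition \ref{thm:qZn as det}, exactly as Corollary \ref{cor:Ft as []det} was obtained from Theorem \ref{prop:qZn lim Zn}. First set $Q_1=1$ in \eqref{eqn:qZn as det}. Then $Q_1Q=Q$, so every $\Theta_3(\cdot;Q_1Q)$ becomes $\Theta_3(\cdot;Q)$, and in \eqref{eqn:def E first 1-variable} the nome of all Pochhammer symbols is $Q$. By Proposition \ref{prop:qZn lim F}, $F(Q;s_1,\dots,s_n)$ is the limit $q\to\infty$ of this expression. Since the coefficient extraction $[w_1^0\cdots w_n^0]$ is a contour integral over fixed circles in the region \eqref{eqn:region}, I would interchange it with the limit. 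This is justified entrywise by uniform convergence on the contours once the limit of $E$ is identified.

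The key computation is $\lim_{q\to\infty}E(Q;1,q;z,w)$. With $Q_1=1$, the $a$-product in \eqref{eqn:def E first 1-variable} reads
\[
\prod_{a\ge1}\frac{(-Qzq^{-a+\half},-Qwq^{-a+\half},-z^{-1}q^{-a+\half},-w^{-1}q^{-a+\half};Q)_\infty}{(-Qzq^{-a+\half},-Qwq^{-a+\half},-z^{-1}q^{-a+\half},-w^{-1}q^{-a+\half};Q)_\infty}.
\]
The numerator and denominator factors coincide pairwise (the $Q_1Qz$ and $Qz$ factors match, and similarly for the others), so the product is identically $1$. There is therefore no real limit to take in $q$, and the whole $q$-dependence disappears already at $Q_1=1$. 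Consequently
\[
E(Q;1,q;z,w)=\frac{(Q;Q)_\infty^2}{(Qz^{-1}w,Qw^{-1}z;Q)_\infty}.
\]
Combining this with the factor $(1-s_i^{-1}w_i^{-1}w_j)^{-1}$ and the product formula \eqref{eqn:def vartheta}, each determinant entry becomes
\[
\frac{\Theta_3(Q_2s_i^{-1}w_i^{-1}w_j)\,(s_i^{-1}w_i^{-1}w_j)^{-1/2}}{\vartheta(s_iw_iw_j^{-1})},
\]
up to the sign convention $\vartheta(z^{-1})=-\vartheta(z)$, which must be tracked.

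It then remains to remove the square-root factors. The factor $(s_i^{-1}w_i^{-1}w_j)^{-1/2}$ factorizes as $s_i^{1/2}w_i^{1/2}\cdot w_j^{-1/2}$, a row scaling times a column scaling. Pulling these out of the determinant gives $\prod_i s_i^{1/2}\prod_i w_i^{1/2}w_i^{-1/2}=s_{[n]}^{1/2}$, which cancels the prefactor $s_{[n]}^{-1/2}$ in \eqref{eqn:qZn as det}. This leaves exactly the claimed formula.

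The main obstacle is bookkeeping rather than analysis: fixing the branch and sign of $\vartheta$ so that the denominator is $\vartheta(s_iw_iw_j^{-1})$ and not its inverse-argument version, with consistent signs. I would check this against the case $n=1$, where the result should reduce to the Bloch--Okounkov one-point function $1/\vartheta(s)$. The reduction uses the independence from $Q_2$: the coefficient of $w^0$ is trivial, since the entry is $w$-independent, and $\Theta_3(Q_2s^{-1})$ cancels.
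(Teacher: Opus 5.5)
Your proposal is correct and follows the paper's proof. Both arguments set $Q_1=1$ in the determinant formula of Proposition~\ref{thm:qZn as det} (with Proposition~\ref{prop:qZn lim F}), observe that the $q$-dependent product in $E$ drops out so each entry becomes $\Theta_3(Q_2s_i^{-1}w_i^{-1}w_j)/\bigl(\sqrt{s_i^{-1}w_i^{-1}w_j}\,\vartheta(s_iw_iw_j^{-1})\bigr)$, and then absorb the square roots into the prefactor $s_{[n]}^{-1/2}$. The sign bookkeeping you flag is automatic, since $\frac{1}{1-x}\cdot\frac{(Q;Q)_\infty^2}{(Qx,Qx^{-1};Q)_\infty}=x^{-1/2}/\vartheta(x^{-1})$ holds exactly with $x=s_i^{-1}w_i^{-1}w_j$.
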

\begin{proof}
	Based on Propositions \ref{prop:qZn lim F} and \ref{thm:qZn as det},
	we need to compute
	\begin{align*}
		\lim_{q\rightarrow \infty}
		\frac{E(Q;Q_1,q;s_iw_i,w_j)}{1-s_i^{-1}w_i^{-1}w_j}|_{Q_1=1}
		=&\frac{1}{1-s_i^{-1}w_i^{-1}w_j}
		\frac{\big(Q;Q\big)_{\infty}^2}
		{\big(Qs_i^{-1}w_i^{-1}w_j,Qs_iw_iw_j^{-1};Q\big)_{\infty}}\\
		=&\frac{1}{\sqrt{s_i^{-1}w_i^{-1}w_j} \cdot \vartheta(s_iw_iw_j^{-1};Q)}.
	\end{align*}
	Then from the formula \eqref{eqn:qZn as det} for the $q$-deformed $n$-point function,
	we have
	\begin{align*}
		F(Q;s_1,\dots,s_n)
		=&\lim_{q\rightarrow \infty} Z(Q;Q_1,q;s_1,\dots,s_n)|_{Q_1=1}\\
		=&\frac{1}{\Theta_3(Q_2)^{n-1}\cdot \Theta_3(Q_2 s_{[n]}^{-1})}
		\cdot[w_1^0\dots w_n^0]\det \Big(\frac{\Theta_3(Q_2s_i^{-1}w_i^{-1}w_j)}{\vartheta(s_iw_iw_j^{-1})}\Big)_{i,j=1}^n.
	\end{align*}
\end{proof}

Based on the same method used in subsection \ref{sec:closed formula},
one may derive a new family of closed formulas for the Bloch--Okounkov's $n$-point function $F(Q;s_1,\dots,s_n)$, parameterized by $Q_2$.
It would be interesting to write it down explicitly and compare it with the Bloch--Okounkov's formula \cite{BO} (see also \cite{Z16,Zhou23}).

\section*{Conflict of interest and data availability statement}
The author states that there is no conflict of interest, and
no datasets were generated or analysed during the current study.

\vspace{.2in}
{\em Acknowledgements}.
The author would like to thank Zhiyuan Wang and Zhiyong Wang for their collaboration on related works.
This work was partially supported by the NSFC grant (No. 12401079).

\renewcommand{\refname}{Reference}
\bibliographystyle{plain}
\bibliography{reference}
\vspace{20pt} \noindent
\end{document}